\newtheorem{thm}{Theorem}[section]
\newtheorem{prop}[thm]{Proposition}
\newtheorem{lem}[thm]{Lemma}
\newtheorem{cor}[thm]{Corollary}
\theoremstyle{definition}
\newtheorem{rmk}{Remark}[section]
\newtheorem*{claim*}{Claim}
\newcommand{\comments}[1]{}
\newcommand{\N}{\mathbb{N}}
\newcommand{\Z}{\mathbb{Z}}
\newcommand{\R}{\mathbb{R}}
\newcommand{\C}{\mathbb{C}}
\newcommand{\K}{K}
\renewcommand{\Im}{\operatorname{\mathrm{Im}}}
\renewcommand{\Re}{\operatorname{\mathrm{Re}}}
\DeclareRobustCommand{\Chi}{{\mathpalette\irchi\relax}}
\newcommand{\irchi}[2]{\raisebox{\depth}{$#1\chi$}} 
\renewcommand{\epsilon}{\varepsilon}
\renewcommand{\hat}{\widehat}
\renewcommand{\tilde}{\widetilde}
\renewcommand{\bar}{\overline}
\newcommand{\biggp}[1]{\bigg(#1\bigg)}
\newcommand{\intbrr}[1]{\llbracket#1\rrbracket}
\newcommand\numberthis{\addtocounter{equation}{1}\tag{\theequation}}
\numberwithin{equation}{section}
\begin{document}

\title[]{Pointwise Weyl law for graphs from quantized interval maps}
\date{\today}
\author[]{Laura Shou}
\address{\parbox{\linewidth}{Department of Mathematics, Princeton University, Princeton, NJ 08544\\
\textit{Present address: School of Mathematics, University of Minnesota, Minneapolis, MN 55455}\\}}
\email{lshou@math.princeton.edu}

\begin{abstract}
We prove an analogue of the pointwise Weyl law for families of unitary matrices obtained from quantization of one-dimensional interval maps. This quantization for interval maps was introduced by Pako\'nski \textit{et al.}\! in \cite{pzk} as a model for quantum chaos on graphs.  Since we allow shrinking spectral windows in the pointwise Weyl law,  as a consequence we obtain for these models a strengthening of the quantum ergodic theorem from Berkolaiko \textit{et al.}\! \cite{qgraphs}, and show in the semiclassical limit that a family of randomly perturbed quantizations has approximately Gaussian eigenvectors. We also examine further the specific case where the interval map is the doubling map.
\end{abstract}

\maketitle

\section{Introduction}

Quantum graphs have been used as models of idealized one-dimensional structures in physics for many decades, and more recently, as simplified models for studying complex phenomena such as Anderson localization and quantum chaos \cite{book}. The first evidence for quantum chaotic behavior in quantum graphs was given by Kottos and Smilansky \cite{ks1,ks2}, who showed numerically that the spectral statistics of certain families of quantum graphs behave like those of a random matrix ensemble, and further investigated this relationship using an exact trace formula.
In view of the Bohigas--Giannoni--Schmidt conjecture \cite{bgs}, this led to the investigation of quantum graphs as a model for quantum chaos. Additional results regarding convergence of spectral statistics to those of random matrix theory include \cite{bsw2,bsw1,ga1,ga2,tanner}, among others.

In this article, we will consider a quantization method for certain ergodic piecewise-linear 1D interval maps $S:[0,1]\to[0,1]$, introduced by Pako\'nski, \.{Z}yczkowski, and Ku\'{s} in \cite{pzk} as a model for quantum chaos on graphs.
Precise conditions for these interval maps will be described in Section~\ref{sec:main}, but one can consider just the doubling map, $S(x)=2x\pmod{1}$, as a model example.
The quantization method associates to $S$ a family of unitary matrices $U_n$, where $U_n$ is an $n\times n$ matrix, and $n\in\N$ is taken in a subset of allowable dimensions. These unitary matrices can be viewed as describing quantum evolution on directed graphs.
The quantization is done in two steps: one first \emph{discretizes} the map $S$, producing a family of Markov transition matrices $P_n$, where $P_n$ is an $n\times n$ matrix corresponding to a classical Markov chain on a graph with $n$ states.
If $P_n$ is \emph{unistochastic}, meaning there is a unitary matrix $U_n$ with the entrywise relation $|(U_n)_{xy}|^2=(P_n)_{xy}$, then the $U_n$ are called a \emph{quantization} of the classical map $S$.
They are considered ``quantizations'' in the sense they satisfy a classical-quantum correspondence principle (Egorov theorem \cite{qgraphs}), which relates unitary conjugation by the $U_n$ to the classical dynamics of $S$  in the limit as an effective semiclassical parameter, in this case the reciprocal of the dimension, $1/n$, tends to zero.

The matrices $U_n$ will be sparse and can in principle be very simple and non-random, such as the particular quantization for the doubling map given in \eqref{eqn:doubling-pu}.
Surprisingly then, if the classical map $S$ is chaotic, 
then it appears for large $n$ the resulting $U_n$'s
tend to have level spacings that numerically look Wigner--Dyson \cite{pzk,tanner2,tanner}, as well as eigenvector coordinates that numerically look Gaussian (Figure~\ref{fig:num}). This is despite their simple, sparse structure uncharacteristic of a typical CUE Haar unitary matrix.
This behavior is however consistent with major  conjectures in quantum chaos, that quantum systems corresponding to classically chaotic 
ones should exhibit random matrix ensemble spectral statistics (BGS conjecture \cite{bgs}) and have eigenvectors that behave like Gaussian random waves \cite{berry} in the semiclassical limit.

More specifically, as investigated in \cite{pzk,tanner2,tanner}, if the graphs correspond to classically chaotic systems, then  when averaged over a choice of phase in the quantization, the spectral properties of these $U_n$, such as the level spacings, spectral rigidity, and spectral form factor, 
appear to numerically behave like those of CUE random matrices. 
As for eigenvector statistics, quantum ergodicity for these graphs with classically ergodic $S$ was proved by Berkolaiko, Keating, and Smilansky in \cite{qgraphs}. 
They showed that in the large dimension limit, nearly all eigenvectors of $U_n$ equidistribute over their coordinates: for sequences of allowable dimensions $n$,
there is a sequence of sets $\Lambda_n\subseteq\intbrr{1:n}\equiv\intbrr{n}:=\{1,\ldots,n\}$ with $\lim_{n\to\infty}\frac{\#\Lambda_n}{n}=1$ so that for all sequences $(j_n)_{n}$ with $j_n\in \Lambda_n$, and appropriate quantum observables $O_n(\phi)$,
\begin{equation}\label{eqn:qe1}
\lim_{n\to\infty}\langle \psi^{(n,j_n)},O_n(\phi)\psi^{(n,j_n)}\rangle = \int_0^1\phi(x)\,dx,
\end{equation}
where $\psi^{(n,j)}$ is the $j$th eigenvector of $U_n$. 
This is the analogue for these graphs of the Shnirelman--Zelditch--de-Verdi\`{e}re quantum ergodic theorem \cite{shnirelman,zelditch,verdiere}, 
which was originally stated for ergodic flows on compact Riemannian manifolds. Quantum ergodicity has also been extended to other settings such as torus maps \cite{bdb,KurlbergRudnick0, KurlbergRudnick, MarklofOKeefe,ZelditchTori} and other graphs \cite{anantharaman,qe-exp,alm,as};
see also \cite{icm} for an overview and additional references.

In addition to the equidistribution from the quantum ergodic theorem, eigenfunctions from a classically ergodic system are expected to follow Berry's random wave conjecture \cite{berry}, which asserts that the eigenfunctions should behave like Gaussian random waves in the large eigenvalue limit. For graphs, instead of the large eigenvalue limit, one considers as usual the large dimension limit.
In this limit, \cite{phys,phys2} used supersymmetry methods to study the eigenfunction statistics for quantum graphs, specifically Gaussian moments, in view of the random wave conjecture.

In the specific discrete models from interval maps that we consider, 
one expects that the empirical distribution of the coordinates $\{\psi^{(n,j)}_x\}_{x=1}^n$ of a normalized eigenvector of $U_n$ should behave like a random complex Gaussian $N_\C(0,\frac{1}{n})$ for most eigenvectors. This is consistent with both the random matrix ensemble behavior and the random wave conjecture. 
As an example, consider the doubling map $S(x)=2x\pmod{1}$.
For $n\in2\N$, the Markov matrices $P_n$ along with a particularly simple unitary quantization $U_n$, can be taken as,
\begin{align}\label{eqn:doubling-pu}
P_n=\frac{1}{2}
\left(
\begin{smallmatrix}1&1 & & & &&\\
& &1 & 1 &&&\\
&&&&\ddots &&\\
&&&&&1&1\\
1&1&&&&&\\
&&1&1&&&\\
&&&&\ddots&&\\
&&&&&1&1
\end{smallmatrix}\right),\qquad
U_n=\frac{1}{\sqrt{2}}
\left(
\begin{smallmatrix}1&-1 & & & &&\\
& &1 & -1 &&&\\
&&&&\ddots &&\\
&&&&&1&-1\\
1&1&&&&&\\
&&1&1&&&\\
&&&&\ddots&&\\
&&&&&1&1
\end{smallmatrix}\right),
\end{align}
where the non-specified entries are all zeros.
Numerically, for large $n$, the vast majority of eigenvectors of the $U_n$ above have coordinate value distributions that look complex Gaussian $N_\C(0,\frac{1}{n})$. Typical histograms for the coordinates of an eigenvector of $U_n$ from \eqref{eqn:doubling-pu} are shown in Figure~\ref{fig:num}.

\begin{figure}[!ht]
\begin{center}
\includegraphics[height=1.4in]{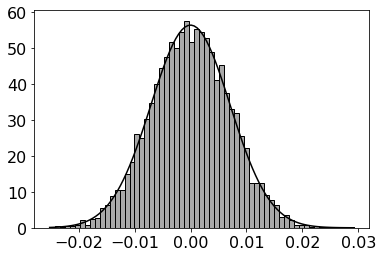}\quad
\includegraphics[height=1.4in]{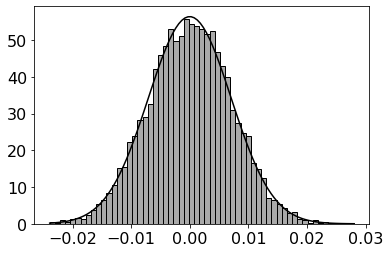}\quad
\includegraphics[height=1.4in]{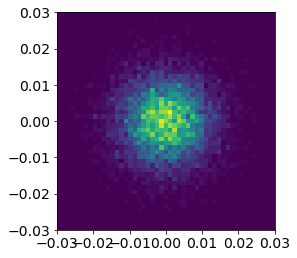}
\end{center}
\caption{Plots for a randomly chosen eigenvector $\psi$ (this one with eigenvalue $-0.3061126+0.9519953i$, chosen uniformly at random out of all eigenvectors) for $n=10\,000$ and $U_n$ in \eqref{eqn:doubling-pu}. 
Left: Histogram of the values $(\Re\psi_x)_{x=1}^{10\,000}$ plotted against the pdf of the real Gaussian ${N}(0,\frac{1}{20\,000})$. Center: Histogram of the values $(\Im\psi_x)_{x=1}^{10\,000}$ plotted against the pdf of ${N}(0,\frac{1}{20\,000})$.
Right: 2D histogram in $\C$ of the coordinates $(\psi_x)_{x=1}^{10\,000}$. Since this is fairly spherically symmetric, the overall choice of phase for the eigenvector does not significantly impact the shapes of the first two plots.
} \label{fig:num}
\end{figure}

Motivated by the above, we will study the eigenvectors of such unitary quantizations $U_n$ constructed from allowable interval maps $S$ by proving a \emph{pointwise Weyl law}, which consists of estimates on the diagonal elements of spectral projection matrices. 
Because we allow for shrinking spectral windows in the pointwise Weyl law, this will have two additional implications concerning eigenvectors. First, we will be able to strengthen the quantum ergodic theorem from \cite{qgraphs}, and second, we will be able to construct random quantizations $V_n$ of $S$ whose eigenvectors have the approximately Gaussian value statistics--while we do not prove Gaussian behavior for the eigenvectors of the original $U_n$ (except in a very special case where the eigenspaces end up highly degenerate, see Section~\ref{sec:doubling-2}), we will show there are many
random quantizations $V_n$ near $U_n$ that do have the desired Gaussian eigenvector coordinates. These quantizations $V_n$ are not quantizations in the strict sense of $|(V_n)_{xy}|^2=(P_n)_{xy}$ from \cite{pzk,qgraphs}, but they will satisfy $|(V_n)_{xy}|^2=(P_n)_{xy}+o(1)$ as well as an Egorov theorem, so they are still quantizations of $S$ in the sense that they satisfy a classical-quantum correspondence principle, recovering the classical dynamics in the semiclassical limit $n\to\infty$.

Traditionally, a pointwise Weyl law gives the leading order asymptotics of the spectral projection kernel $\mathbbm{1}_{(-\infty,t]}(-\Delta+V)(x,x)$, for $x$ in $M$ a compact Riemannian manifold.
For the unitary matrices $U_n$, we look at the spectral projection onto arcs on the unit circle, $P^I=\sum_{j:\theta^{(n,j)}\in I}|\psi^{(n,j)}\rangle\langle\psi^{(n,j)}|$ where $I\subseteq\R/(2\pi\Z)$. Then making use of the  little-o asymptotic notation, a pointwise Weyl law analogue would 
be a statement of the form $
\sum_{j:\theta^{(n,j)}\in I(n)}|\psi_x^{(n,j)}|^2=\frac{|I(n)|}{2\pi}(1+o(1))
$ for $n\to\infty$ and appropriate intervals $I(n)$. We will show this holds for sequences of intervals $I(n)$ shrinking at certain rates, and for at least $n(1-o(|I(n)|))$ coordinates $x$.
The coordinates for which this statement may not hold correspond to short periodic orbits in the graphs corresponding to $S$.

We then present the two applications of this pointwise Weyl law. 
The first is the strengthening of the quantum ergodic theorem to apply to sets of eigenvectors in the bins $\{\psi^{(n,j)}:\theta^{(n,j)}\in I(n)\}$ with shrinking $I(n)$. This will mean that equation \eqref{eqn:qe1} must hold for more $j_n$, specifically limiting density one sets within the shrinking density zero sets $I(n)$.
The second concerns random perturbations of the matrix $U_n$ to produce a family of unitary random matrices $V_n(\beta^{[n]})$, with $\beta^{[n]}$ the random parameter, whose eigenvectors have the approximately Gaussian $N_\C(0,\frac{1}{n})$ eigenvector statistics. 
These eigenvectors will also tend to satisfy a version of quantum unique ergodicity (QUE), a notion introduced by Rudnick and Sarnak in \cite{RudnickSarnak}, and where all eigenvectors are considered in the limit \eqref{eqn:qe1}, rather than just those in a sequence of limiting density one sets.

\subsubsection*{Acknowledgements} 
The author would like to thank Ramon van Handel for suggesting this problem and providing many helpful discussions and feedback, and for pointing out the simpler proof of Proposition~\ref{prop:u2k}. 
The author would also like to thank Peter Sarnak for helpful discussion and suggesting the use of the Beurling--Selberg function as the particular smooth approximation.
This material is based upon work supported by the National Science Foundation Graduate Research Fellowship under Grant No. DGE-2039656. Any opinion, findings, and conclusions or recommendations expressed in this material are those of the author and do not necessarily reflect the views of the National Science Foundation.

\section{Set-up and main results}\label{sec:main}

\subsection{Set-up} \label{subsec:setup}
Here we state the assumptions on the map $S$ and matrices $P_n$.
Let $S:[0,1]\to[0,1]$ be a piecewise-linear map that satisfies the following conditions:
\begin{enumerate}[(i)]
\item $S$ is (Lebesgue) measure-preserving, $\mu(A)=\mu(S^{-1}(A))$ for any measurable set $A$.
\item There exists a partition of $[0,1]$ into $M_0$ equal intervals (called atoms) $A_1,\ldots,A_{M_0}$, with $S$ linear on each atom $A_j$. This partition will be denoted by $\mathcal{M}_0$, and the collection of endpoints of the atoms $(A_j)_j$ will be denoted by $\mathcal{E}_0=\{0,\frac{1}{M_0},\frac{2}{M_0},\ldots,\frac{M_0-1}{M_0},1\}$.
\item  The linear segments of $S$ begin and end in the grid $\mathcal{E}_0\times \mathcal{E}_0$; more formally, the left and right limits of $S$ satisfy $\lim_{x\to e_0^\pm}S(x)\in\mathcal{E}_0$ for $e_0\in\mathcal{E}_0$.  With (i) and (ii), this ensures the slope of $S$ on each atom must be an integer. For convenience, also assume $S(e_0)$ takes one of the values of these one-sided limits.
\item The absolute value of the slope of $S$ on each atom is at least two, i.e. the slope is never $\pm1$.
\end{enumerate}
Conditions (i), (ii), and (iii) are essentially the same as in \cite{pzk,qgraphs}. Condition (iv) is there instead of the ergodicity assumption. It allows for some non-ergodic $S$ such as those corresponding to block matrices of various ergodic maps. 
Two examples of allowable ergodic  $S$ are the doubling map and the ``four legs map'' shown in Figure~\ref{fig:maps}.
In general, conditions for ergodicity of $S$ would follow from results on piecewise expanding Markov maps, see for example Chapter III in the textbook \cite{mane}.

\begin{figure}[!ht]
\centering
\begin{tikzpicture}[scale=.7]
\def\shift{8} 
\foreach \g in {0,\shift}{
\draw (\g,0)--(\g+4,0)--(\g+4,4)--(\g,4)--cycle;
\draw[dotted] (\g+2,0)--(\g+2,4);
\draw[dotted] (\g,2)--(\g+4,2);
\node [below left] at (\g,0) {$0$};
\node [below] at (\g+4,0) {$1$};
\node [left] at (\g,4) {$1$};
}
\draw[dotted] (\shift+1,0)--(\shift+1,4);
\draw[dotted] (\shift+3,0)--(\shift+3,4);
\draw[dotted] (\shift,1)--(\shift+4,1);
\draw[dotted] (\shift,3)--(\shift+4,3);
\draw (\shift,0)--(\shift+1,2);
\draw (\shift+1,0)--(\shift+2,4);
\draw (\shift+2,0)--(\shift+3,4);
\draw (\shift+3,2)--(\shift+4,4);
\draw (0,0)--(2,4);
\draw (2,0)--(4,4);
\end{tikzpicture}
\caption{The doubling map (left) and ``four legs map'' (right). For the doubling map $M_0=L_0=2$, while for the four legs map $M_0=L_0=4$.}\label{fig:maps}
\end{figure}
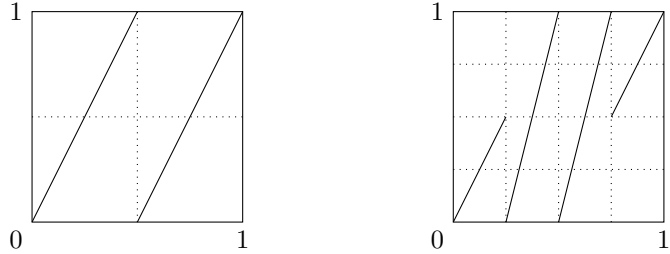

For $n\in M_0\Z$, partition $[0,1]$ into $n$ equal atoms, $E_x=(\frac{x-1}{n},\frac{x}{n})$ for $x=1,\ldots,n$, and define the corresponding $n\times n$ Markov transition matrix $P_n$ by
\begin{equation}
(P_n)_{xy}=\begin{cases}
0,&S(E_x)\cap E_y=\emptyset\\
\frac{1}{|S'(z)|},& S(E_x)\cap E_y\ne\emptyset,\text{ any }z\in E_x
\end{cases}.
\end{equation}
The matrix $P_n$ looks at where $S$ sends an atom $E_x$, and assigns a uniform probability $\frac{1}{|S'(z)|}$ to each atom $E_y$ that $S$ can reach from $E_x$.
To generate the family of corresponding unitary matrices $U_n$ as done in \cite{pzk,qgraphs}, it is required that the $P_n$ be \emph{unistochastic}, so that there are unitary matrices $U_n$ with the entrywise relation $|(U_n)_{xy}|^2=(P_n)_{xy}$. In general, characterizing which bistochastic matrices are unistochastic is difficult; however see \cite{pzk, unistochastic, qgraphs} for some conditions and examples. 
Note that the relation $|(U_n)_{xy}|^2=(P_n)_{xy}$ does not uniquely define $U_n$ if it exists, as one can always add additional phases without changing unitarity or the entrywise relation. For example, given any $\Phi\in[0,2\pi)^n$ and defining the diagonal matrix $e^{i\Phi}:=\operatorname{diag}(e^{i\Phi_1},\ldots,e^{i\Phi_n})$, then $e^{i\Phi}U_n$ is also unitary and satisfies the same entrywise norm-squared relation.

Finally, let $L_0$ be the least common multiple of the slopes in $S$, and let ${\K}(n)$ be the largest power of $L_0$ that divides $n/M_0$, so $n=M_0L_0^{{\K}(n)}r$ and $r$ does not contain any factors of $L_0$. 
The technical purpose of ${\K}(n)$ will be to keep track of how many powers $\ell$ of $S$ we can take, while still ensuring $S^\ell$ behaves nicely with the partition into $n$ atoms. This $K(n)$ can also be thought of as an Ehrenfest time, cf. Remark~\ref{rmk:mainthm}(iii).

\subsection{Pointwise Weyl law and overview}

With the above definitions, we state the main result:
\begin{thm}[pointwise Weyl law analogue]\label{thm:uweyl}
Let $S:[0,1]\to[0,1]$ satisfy assumptions (i)--(iv). 
Consider a sequence $(n_k)_k$ so that ${\K}(n_k)\to\infty$, and suppose each $n_k\times n_k$ Markov matrix $P_{n_k}$ is unistochastic with corresponding unitary matrix $U_{n_k}$.
Let $(I(n_k))$ be a sequence of intervals in $\R/(2\pi \Z)$ satisfying
\begin{equation}\label{eqn:km}
{|I(n_k)| {\K}(n_k)}\to\infty,\quad\text{as }k\to\infty.
\end{equation} 
Then denoting the eigenvalues and eigenvectors of $U_{n_k}$ by $(e^{i \theta^{(n_k,j)}})_j$ and $(\psi^{(n_k,j)})_j$ respectively, there is a sequence of subsets $G_{n_k}\subseteq \intbrr{n_k}$ with sizes $\#G_{n_k}=n_k(1-o(|I(n_k)|))$ so that for all $x\in G_{n_k}$,
\begin{equation}\label{eqn:I}
\sum_{j:\theta^{(n_k,j)}\in I(n_k)}|\psi_x^{(n_k,j)}|^2=\frac{|I(n_k)|}{2\pi}(1+o(1)),\quad\text{as }k\to\infty,
\end{equation}
where the error term $o(1)$ can be taken to depend only on $n_k$, $|I(n_k)|$, and ${\K}(n_k)$, 
and is independent of $x\in G_{n_k}$. Additionally, $G_{n_k}$ can actually be chosen independent of $I(n_k)$ and $|I(n_k)|$.
\end{thm}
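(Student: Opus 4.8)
The plan is to sandwich the spectral projection onto the arc $I=I(n_k)$ between Beurling--Selberg trigonometric polynomials and to exploit that the diagonal entries $(U_{n}^{\ell})_{xx}$ of low powers of $U_n$ vanish unless $x$ lies on a short periodic orbit of $S$. Write $n=n_{k}$, $I=I(n_{k})$, $\tilde{K}=\tilde{K}(n_{k})$; all asymptotics are as $k\to\infty$ and all implied constants are uniform along the sequence. The quantity to estimate is
\[
\Pi_{x}^{I}:=\langle e_{x},\,\mathbbm{1}_{I}(U_{n})\,e_{x}\rangle=\sum_{j:\theta^{(n,j)}\in I}|\psi_{x}^{(n,j)}|^{2},
\]
and the target $|I|/2\pi$ is just the $\ell=0$ term of the formal expansion $\Pi_{x}^{I}=\sum_{\ell}\widehat{\mathbbm{1}_{I}}(\ell)(U_{n}^{\ell})_{xx}$. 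Since $\widehat{\mathbbm{1}_{I}}(\ell)\sim1/\ell$ is not summable, I replace $\mathbbm{1}_{I}$ by finite-degree majorants and minorants.

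\textbf{Step 1: Beurling--Selberg reduction.} Fix a degree $N=N(n)$, depending only on $n$, to be chosen in Step 3. By Vaaler's periodic version of the Beurling--Selberg construction there are real-valued trigonometric polynomials $F_{N}^{\pm}$ of degree at most $N$ with $F_{N}^{-}\le\mathbbm{1}_{I}\le F_{N}^{+}$ on $\R/(2\pi\Z)$ and $\widehat{F_{N}^{\pm}}(0)=\tfrac{|I|}{2\pi}\pm\tfrac{1}{N+1}$; the degree bound forces $\widehat{F_{N}^{\pm}}(\ell)=0$ for $|\ell|>N$. As $F_{N}^{\pm}$ is real, the matrix $F_{N}^{\pm}(U_{n}):=\sum_{|\ell|\le N}\widehat{F_{N}^{\pm}}(\ell)U_{n}^{\ell}=\sum_{j}F_{N}^{\pm}(\theta^{(n,j)})|\psi^{(n,j)}\rangle\langle\psi^{(n,j)}|$ is Hermitian, and since $F_{N}^{-}(\theta)\le\mathbbm{1}_{I}(\theta)\le F_{N}^{+}(\theta)$ at every eigenangle $\theta=\theta^{(n,j)}$, evaluating the quadratic forms at $e_{x}$ gives $\langle e_{x},F_{N}^{-}(U_{n})e_{x}\rangle\le\Pi_{x}^{I}\le\langle e_{x},F_{N}^{+}(U_{n})e_{x}\rangle$. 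Expanding, $\langle e_{x},F_{N}^{\pm}(U_{n})e_{x}\rangle=(\tfrac{|I|}{2\pi}\pm\tfrac{1}{N+1})+\sum_{1\le|\ell|\le N}\widehat{F_{N}^{\pm}}(\ell)(U_{n}^{\ell})_{xx}$, so it suffices to find a set $G_{n}\subseteq[n]$ with $\#G_{n}=n(1-o(|I|))$ on which $(U_{n}^{\ell})_{xx}=0$ for all $1\le|\ell|\le N(n)$, together with $N(n)|I|\to\infty$.

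\textbf{Step 2: low powers vanish off short periodic orbits.} Since $|(U_{n})_{xy}|^{2}=(P_{n})_{xy}$, the matrices $U_{n}$ and $P_{n}$ have the same support, so $(U_{n}^{\ell})_{xx}\ne0$ only if $(P_{n}^{\ell})_{xx}>0$, i.e.\ only if there is a walk $x=x_{0}\to x_{1}\to\cdots\to x_{\ell}=x$ of length $\ell$ in the directed graph with an edge $x\to y$ whenever $(P_{n})_{xy}>0$. Because $M_{0}\mid n$, each atom $E_{x}$ lies inside a single atom of $\mathcal{M}_{0}$, so by conditions (ii)--(iii) the map $S$ is linear on $E_{x}$ with integer slope and carries $\tfrac1n\Z$ into $\tfrac1n\Z$; hence $S(E_{x})$, and inductively $S^{\ell}(E_{x})$, is a union of atoms $E_{y}$. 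Given a walk as above, the cylinder $C=\{y\in E_{x}:S^{i}(y)\in E_{x_{i}},\ 1\le i\le\ell\}$ is, by this union-of-atoms property applied step by step, a nonempty subinterval of $E_{x}$ on which $S^{\ell}$ is monotone with $S^{\ell}(C)=E_{x}$; comparing $S^{\ell}(t)-t$ at the endpoints of $\overline{C}$ then produces $t\in\overline{E_{x}}$ with $S^{\ell}(t)=t$. Thus the bad set $B_{N}:=\{x:(U_{n}^{\ell})_{xx}\ne0\text{ for some }1\le\ell\le N\}$ is contained in $\{x:\overline{E_{x}}\cap\operatorname{Fix}(S^{\ell})\ne\emptyset\text{ for some }1\le\ell\le N\}$. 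Since every linear branch of $S^{\ell}$ has slope of modulus $\ge2^{\ell}>1$, the map $S^{\ell}-\mathrm{id}$ is strictly monotone on it and contributes at most one fixed point; as $S^{\ell}$ has at most $M_{0}L_{0}^{\ell-1}$ branches and each fixed point meets at most two closed atoms, $\#B_{N}\le2\sum_{\ell=1}^{N}M_{0}L_{0}^{\ell-1}\le\tfrac{2M_{0}}{L_{0}-1}L_{0}^{N}$.

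\textbf{Step 3: choice of $N$, and the main obstacle.} Put $G_{n}:=[n]\setminus B_{N(n)}$, which depends only on $n$ once $N(n)$ does. For $x\in G_{n}$ all terms with $1\le|\ell|\le N(n)$ drop out of Step 1 (using $(U_{n}^{-\ell})_{xx}=\overline{(U_{n}^{\ell})_{xx}}$), leaving $|\Pi_{x}^{I}-\tfrac{|I|}{2\pi}|\le\tfrac{1}{N(n)+1}$ uniformly in $x$. Everything now rests on choosing $N(n)$ so that simultaneously (a) $N(n)|I|\to\infty$, so the Beurling--Selberg error $1/N(n)$ is $o(|I|)$, and (b) $\#B_{N(n)}=o(n|I|)$, so $\#G_{n}=n(1-o(|I|))$ --- and on doing this with $N(n)$ independent of $I$. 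The key observation is that, by $n=M_{0}L_{0}^{\tilde K}r\ge M_{0}L_{0}^{\tilde K}$ and Step 2, condition (b) follows from the $I$-free bound $L_{0}^{N(n)}=o(n/\tilde{K})$, because then $\#B_{N(n)}=o(n/\tilde{K})=o(n|I|)$ for every admissible $I$. So I would take $N(n):=\tilde{K}-\lceil3\log_{L_{0}}\tilde{K}\rceil$ once $\tilde K$ is large (and $N(n):=0$ for the finitely many small $\tilde K$): then $L_{0}^{N(n)}\le L_{0}^{\tilde K}/\tilde{K}^{3}\le n/(M_{0}\tilde{K}^{3})$, giving (b), while $N(n)=\tilde{K}(1-o(1))$ yields $N(n)|I|=|I|\tilde{K}(1-o(1))\to\infty$, giving (a). This produces $\#G_{n}=n-\#B_{N(n)}=n(1-o(|I|))$ and, for $x\in G_{n}$, $\Pi_{x}^{I}=\tfrac{|I|}{2\pi}(1+O(\tfrac{1}{|I|N(n)}))=\tfrac{|I|}{2\pi}(1+o(1))$, with an error term depending only on $n$, $|I|$ and $\tilde{K}$ and uniform in $x\in G_n$. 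The main obstacle is precisely this last balancing act: $n$ can be as small as $M_{0}L_{0}^{\tilde K}$, in which case periodic points of period $\tilde K$ already occupy a positive fraction of $[n]$, so $N(n)$ must be pushed a little below $\tilde K$; but it must stay close enough to $\tilde K$ to keep the Beurling--Selberg error below $|I|$, and the hypothesis $|I|\tilde{K}\to\infty$ is exactly what allows both.
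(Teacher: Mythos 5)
Your proposal is correct and takes essentially the same overall strategy as the paper (Beurling--Selberg sandwich of the spectral projection, followed by discarding coordinates lying on short periodic orbits), but it implements the strategy by a genuinely different accounting of the Fourier tail, and is worth comparing. The paper fixes the Beurling--Selberg degree at $\tilde{K}(n)$, removes the bad coordinates only up to a lower cutoff $r(\tilde{K}(n))<\tilde{K}(n)$, and handles the intermediate range $r(\tilde{K})<\ell\le\tilde{K}$ using the pointwise decay $|(U_n^\ell)_{xx}|=((P_n^\ell)_{xx})^{1/2}\le 2^{-\ell/2}$, which is where hypothesis (iv) enters via the exponential decay of the unique-path weights. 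You instead take the Beurling--Selberg degree itself to be $N(n)=\tilde{K}-\lceil 3\log_{L_0}\tilde{K}\rceil$ and remove the bad set up to the full degree $N(n)$, so that for $x\in G_n$ \emph{every} nonconstant term in the expansion of $F^\pm_{N(n)}(U_n)_{xx}$ vanishes, leaving only the Beurling--Selberg majorant/minorant error $1/(N(n)+1)$. The two balancing conditions --- $N(n)|I|\to\infty$ so the constant-term error is $o(|I|)$, and $L_0^{N(n)}=o(n/\tilde{K})$ so $\#B_{N(n)}=o(n|I|)$ via $n\ge M_0L_0^{\tilde K}$ and $|I|\tilde{K}\to\infty$ --- are exactly the content of the paper's \eqref{eqn:rconditions}, just reorganized. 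Your fixed-point count of bad coordinates (each linear branch of $S^\ell$, slope $\ge 2^\ell>1$, has at most one fixed point; each fixed point touches at most two closed atoms) is a different derivation of the paper's Lemma~\ref{lem:nonzero} bound $2M_0L_0^{\ell-1}$ on nonzero diagonal entries of $P_n^\ell$, with the intermediate-value argument on $S^\ell(t)-t$ over the cylinder $C$ replacing the paper's analysis of the graph of $S^\ell$ crossing the diagonal chain of $\tfrac1n\times\tfrac1n$ boxes; both rely on the unique-path property of Lemma~\ref{lem:sp}(d) valid for $\ell\le\tilde{K}(n)+1$, so your choice $N(n)<\tilde{K}$ keeps you safely in range. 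What your version buys is a shorter proof that avoids the three-region split and the exponential-decay step; what it gives up is the paper's sharper quantitative control on the bad set: with $r(\tilde K)=\lfloor\log\tilde K\rfloor$ the paper gets $\#B_n=O(\tilde{K}^{\log L_0})$ (polylogarithmic in $n$), whereas your $\#B_{N(n)}\sim n/\tilde{K}^3$ can be a large polynomial fraction of $n$. This does not affect the statement of Theorem~\ref{thm:uweyl} as written, and your error term still depends only on $n$, $|I|$, $\tilde{K}$ and is uniform over $x\in G_n$, with $G_n$ independent of $I$, so the proposal is a valid proof.
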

\begin{rmk}
\makeatletter
\hyper@anchor{\@currentHref}%
\makeatother
\label{rmk:mainthm}
\begin{enumerate}[(i)]
\item 
Equation~\eqref{eqn:I} cannot in general be improved to hold for all coordinates $x\in\intbrr{n_k}$, as Section~\ref{sec:coordfail} will show.
The coordinates $x$ that we exclude from $G_{n_k}$ correspond to those with short periodic orbits in the graphs associated to $P_{n_k}$ and $U_{n_k}$. This is reminiscent of the relationship between geodesic loops and the size of the remainder in the Weyl law \cite{DG,Ivrii} and pointwise Weyl law \cite{Safarov,SoggeZelditch,CanzaniGalkowski}, in the usual setting on manifolds.

\item The condition \eqref{eqn:km} that ${\K}(n_k)|I(n_k)|\to\infty$ is essentially optimal under the stated hypotheses, as we will see by considering the doubling map when $n=2^K$ (Section~\ref{sec:doubling-2}). In this case, ${\K}(n)=\log_2n$, and if $|I(n)|<\frac{\pi}{2\log_2n}$, then one can take an interval that avoids the spectrum entirely and thus produces a spectral projection matrix filled with zeros.

\item This condition  that $|I(n_k)|$ does not shrink too fast appears from error terms from only considering powers of $U_{n_k}$ up to an Ehrenfest time ${\K}(n_k)\sim\log n_k$. 
This time is a common obstruction in semiclassical problems, and even in these discrete models, our analysis does not go beyond this time.
If the lengths $|I(n_k)|$ are larger than the bare minimum required to satisfy \eqref{eqn:km}, then more precise remainder terms than just $o(1)$ are obtained from the proof, cf. equation \eqref{eqn:close}.
\end{enumerate}
\end{rmk}
The proof details of Theorem~\ref{thm:uweyl} will be specific to our discrete case, where we have sparse matrices $U_{n_k}$ and can analyze matrix powers and paths in finite graphs.
We will start by just taking a smooth approximation of the indicator function of the interval $I(n_k)$, and estimating the left side of \eqref{eqn:I} by a Fourier series in terms of powers of $U_{n_k}$. However, the properties of $S$ ensure that we understand powers of  $U_{n_k}$ well up to time ${\K}(n_k)$. This allows us to identify and exclude the few coordinates $x$ that have short loops before a set cut-off time. Using properties of the powers of $U_{n_k}$ again, the remaining coordinates will then produce small enough Fourier coefficients that \eqref{eqn:I} holds.

Summing \eqref{eqn:I} over all $x$ (separating $x\in G_{n_k}$ from $x\not\in G_{n_k}$) produces a Weyl law analogue that counts the number of eigenvalues in a bin.
\begin{cor}[Weyl law analogue]\label{cor:weyl}
Let $S$, $(n_k)_k$, $U_{n_k}$, and $I(n_k)$ be as in Theorem~\ref{thm:uweyl}, including \eqref{eqn:km}. Then as $k\to\infty$,
\begin{equation}
\#\{j:\theta^{(n_k,j)}\in I(n_k)\} = n_k\frac{|I(n_k)|}{2\pi}(1+o(1)),
\end{equation}
where the remainder term depends on $|I(n_k)|$ but is independent of the particular location of $I(n_k)$.
\end{cor}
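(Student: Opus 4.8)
The plan is to obtain the Weyl law by summing the pointwise estimate \eqref{eqn:I} over all coordinates $x\in[n_k]$ and using the normalization of the eigenvectors. Since $U_{n_k}$ is unitary it admits an orthonormal eigenbasis, so the matrix $\Psi_{n_k}:=[\psi^{(n_k,1)}\mid\cdots\mid\psi^{(n_k,n_k)}]$ is unitary. Reading it columnwise gives $\sum_{x\in[n_k]}|\psi_x^{(n_k,j)}|^2=1$ for every $j$, and reading it rowwise gives $\sum_{j\in[n_k]}|\psi_x^{(n_k,j)}|^2=1$ for every $x$. Interchanging the order of summation,
\[
\sum_{x\in[n_k]}\ \sum_{j:\theta^{(n_k,j)}\in I(n_k)}|\psi_x^{(n_k,j)}|^2
=\sum_{j:\theta^{(n_k,j)}\in I(n_k)}\ \sum_{x\in[n_k]}|\psi_x^{(n_k,j)}|^2
=\#\{j:\theta^{(n_k,j)}\in I(n_k)\}.
\]

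Next I would split the outer sum on the left into the contributions of $x\in G_{n_k}$ and $x\notin G_{n_k}$. For $x\in G_{n_k}$, Theorem~\ref{thm:uweyl} gives that each inner sum equals $\frac{|I(n_k)|}{2\pi}(1+o(1))$ with an $o(1)$ that is uniform in $x\in G_{n_k}$ (depending only on $n_k$, $|I(n_k)|$, $\tilde K(n_k)$), so this block contributes $\#G_{n_k}\cdot\frac{|I(n_k)|}{2\pi}(1+o(1))=n_k(1-o(|I(n_k)|))\cdot\frac{|I(n_k)|}{2\pi}(1+o(1))=n_k\frac{|I(n_k)|}{2\pi}(1+o(1))$. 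For $x\notin G_{n_k}$, I bound each inner sum crudely by $\sum_{j\in[n_k]}|\psi_x^{(n_k,j)}|^2=1$, so this block contributes at most $n_k-\#G_{n_k}=o(|I(n_k)|)n_k$. Since the main term is of order $|I(n_k)|n_k$, this is an $o(1)$ relative error, and adding the two blocks yields $\#\{j:\theta^{(n_k,j)}\in I(n_k)\}=n_k\frac{|I(n_k)|}{2\pi}(1+o(1))$. The location-independence of the remainder is inherited directly from Theorem~\ref{thm:uweyl}, since both the $o(1)$ there and the set $G_{n_k}$ may be chosen independent of the position of $I(n_k)$.

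There is no real obstacle here; the one point worth flagging is that the argument crucially uses the stronger conclusion $\#G_{n_k}=n_k(1-o(|I(n_k)|))$ rather than merely $\#G_{n_k}=n_k(1-o(1))$. The excluded coordinates each carry mass at most $1$ in the inner sum, so their total contribution is controlled by $n_k-\#G_{n_k}$, and only a bound of the form $o(|I(n_k)|)n_k$ is negligible compared to the main term $|I(n_k)|n_k/(2\pi)$. Had we only known density one for $G_{n_k}$, the bad coordinates could a priori contribute up to $o(1)\cdot n_k$, which would swamp the main term whenever $|I(n_k)|\to 0$.
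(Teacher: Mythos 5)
Your proof is correct and is exactly the argument the paper indicates (it notes that the corollary follows by ``Summing \eqref{eqn:I} over all $x$ (separating $x\in G_{n_k}$ from $x\not\in G_{n_k}$)''); you simply fill in the details. Your closing observation that the stronger bound $\#G_{n_k}=n_k(1-o(|I(n_k)|))$ is essential—since density one alone would let the bad coordinates swamp the main term when $|I(n_k)|\to 0$—is a good point and precisely the reason the theorem is stated with that refined estimate on $\#G_{n_k}$.
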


In the following subsections, we discuss implications of Theorem~\ref{thm:uweyl} on eigenvectors of $U_n$. We present the two applications, the first a strengthening of the quantum ergodic theorem for this model, and the second a construction of random perturbations of $U_n$ with approximately Gaussian eigenvectors. For the first application, using Theorem~\ref{thm:uweyl} with shrinking intervals $|I(n_k)|$, rather than the usual local Weyl law, in the standard proof of quantum ergodicity naturally produces a stronger quantum ergodicity statement. For the second, we take random unitary rotations of bins of eigenvectors, and apply results on the distribution of random projections from \cite{DiaconisFreedman,ChatterjeeMeckes,Meckes} to show the resulting eigenvectors have approximately Gaussian value statistics.

\subsection{Application to quantum ergodicity in bins}
To state a quantum ergodic theorem, we first define quantum observables as in \cite{qgraphs}, as discretized versions of a classical observable $h\in L^2([0,1])$. Given $n\in\N$ and $h\in L^2([0,1])$, define its quantization $O_n(h)$ to be the $n\times n$ diagonal matrix with entries
\begin{equation}
O_{n}(h)_{xx}=\frac{1}{|E_x|}\int_{E_x}h(z)\,dz=n\int_{E_x}h(z)\,dz.
\end{equation}
Note that $\frac{1}{n}\operatorname{tr}O_n(h)=\int_0^1h$, the analogue of the local Weyl law.
Quantum ergodicity for this model, as proved in \cite{qgraphs}, states there is a sequence of sets $\Lambda_{n_k}\subseteq\intbrr{n_k}$ with $\lim_{n_k\to\infty}\frac{\#\Lambda_{n_k}}{n_k}=1$ such that for all sequences $(j_{n_k})_{k}$, $j_{n_k}\in \Lambda_{n_k}$ and $h\in C([0,1])$,
\begin{equation}\label{eqn:qe}
\lim_{k\to\infty}\langle \psi^{(n_k,j_{n_k})},O_{n_k}(h)\psi^{(n_k,j_{n_k})}\rangle=\int_0^1h(x)\,dx.
\end{equation}
This is equivalent to the decay of the quantum variance, 
\[
V_{n_k}:=\frac{1}{n_k}\sum_{j=1}^{n_k}\left|\langle \psi^{(n_k,j)},O_{n_k}(h)\psi^{(n_k,j)}\rangle-\int_0^1h(x)\,dx\right|^2\to0,
\]
as $k\to\infty$. Using Theorem~\ref{thm:uweyl} and an Egorov property from \cite{qgraphs}, we will prove the following concerning quantum ergodicity in bins.
\begin{thm}[Quantum ergodicity in bins]\label{thm:qe}
Let $S$ satisfy (i)--(iv) and also be ergodic. Let $(n_k)_k$, $U_{n_k}$, and $I(n_k)$ be as in Theorem~\ref{thm:uweyl}, including \eqref{eqn:km}. Then for any Lipschitz $h:[0,1]\to\C$,
\begin{equation}\label{eqn:qvariance}
\frac{1}{\#\{j:\theta^{(n_k,j)}\in I(n_k)\}}\sum_{j:\theta^{(n_k,j)}\in I(n_k)}\left|\langle \psi^{(n_k,j)},O_{n_k}(h)\psi^{(n_k,j)}\rangle-\int_0^1h(x)\,dx\right|^2\to0,
\end{equation}
as $k\to\infty$.
\end{thm}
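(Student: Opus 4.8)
The plan is to run the classical Shnirelman--Zelditch--de Verdi\`{e}re argument for quantum ergodicity, but feeding in Theorem~\ref{thm:uweyl} in place of the usual local Weyl law so that the bins $I(n_k)$ are allowed to shrink. First I would reduce to the mean-zero case: since $O_{n_k}(c)=cI$ for constants $c$, replacing $h$ by $h-\int_0^1 h$ does not change the summand in \eqref{eqn:qvariance}, so we may assume $\int_0^1 h=0$ and the summand is simply $\langle\psi^{(n_k,j)},O_{n_k}(h)\psi^{(n_k,j)}\rangle$. Write $B_k=\{j:\theta^{(n_k,j)}\in I(n_k)\}$ and $N_k=\#B_k$; by the Weyl law corollary together with \eqref{eqn:km}, $N_k=n_k|I(n_k)|/(2\pi)\,(1+o(1))\to\infty$.

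Next I would use that $\psi^{(n_k,j)}$ is an eigenvector of $U_{n_k}$, so conjugating $O_{n_k}(h)$ by any power of $U_{n_k}$ leaves the diagonal element $\langle\psi^{(n_k,j)},O_{n_k}(h)\psi^{(n_k,j)}\rangle$ unchanged. Hence for any fixed $T$ the summand also equals $\langle\psi^{(n_k,j)},A_T^{(k)}\psi^{(n_k,j)}\rangle$ with $A_T^{(k)}=\frac1T\sum_{\ell=0}^{T-1}U_{n_k}^{-\ell}O_{n_k}(h)U_{n_k}^{\ell}$. By the Egorov property of \cite{qgraphs}, once $n_k$ is large enough that $T\le\tilde K(n_k)$, each term $U_{n_k}^{-\ell}O_{n_k}(h)U_{n_k}^{\ell}$ agrees with $O_{n_k}(h\circ S^{\ell})$ up to an operator-norm error that (for fixed $\ell<T$ and Lipschitz $h$) tends to $0$ as $k\to\infty$; thus $\|A_T^{(k)}-O_{n_k}(g_T)\|\to0$, where $g_T:=\frac1T\sum_{\ell=0}^{T-1}h\circ S^{\ell}$. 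Since $|\langle\psi,O_{n_k}(g)\psi\rangle|\le\|g\|_\infty$ for unit vectors $\psi$, expanding the square reduces the theorem to showing, for each fixed $T$,
\[
\limsup_{k\to\infty}\frac1{N_k}\sum_{j\in B_k}\bigl|\langle\psi^{(n_k,j)},O_{n_k}(g_T)\psi^{(n_k,j)}\rangle\bigr|^2\le\int_0^1|g_T|^2\,dx .
\]

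For this inequality I would apply Cauchy--Schwarz: since $O_{n_k}(g_T)$ is diagonal and $\sum_x|\psi_x^{(n_k,j)}|^2=1$, Jensen on each atom $E_x$ gives
\[
\bigl|\langle\psi^{(n_k,j)},O_{n_k}(g_T)\psi^{(n_k,j)}\rangle\bigr|^2\le\sum_{x}|\psi_x^{(n_k,j)}|^2\,|(O_{n_k}(g_T))_{xx}|^2\le\sum_x|\psi_x^{(n_k,j)}|^2\,(O_{n_k}(|g_T|^2))_{xx}.
\]
Summing over $j\in B_k$ and swapping the order of summation,
\[
\frac1{N_k}\sum_{j\in B_k}\bigl|\langle\psi^{(n_k,j)},O_{n_k}(g_T)\psi^{(n_k,j)}\rangle\bigr|^2\le\frac1{N_k}\sum_{x\in[n_k]}(O_{n_k}(|g_T|^2))_{xx}\sum_{j\in B_k}|\psi_x^{(n_k,j)}|^2 .
\]
Now split $[n_k]=G_{n_k}\sqcup([n_k]\setminus G_{n_k})$. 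For $x\in G_{n_k}$, Theorem~\ref{thm:uweyl} gives $\sum_{j\in B_k}|\psi_x^{(n_k,j)}|^2=\frac{|I(n_k)|}{2\pi}(1+o(1))$ uniformly, hence $\frac1{N_k}\sum_{j\in B_k}|\psi_x^{(n_k,j)}|^2=\frac{1+o(1)}{n_k}$ uniformly; since $(O_{n_k}(|g_T|^2))_{xx}\ge0$ with $\frac1{n_k}\sum_x(O_{n_k}(|g_T|^2))_{xx}=\int_0^1|g_T|^2$, the $G_{n_k}$-part is $\le(1+o(1))\int_0^1|g_T|^2$. For $x\notin G_{n_k}$, bound $\sum_{j\in B_k}|\psi_x^{(n_k,j)}|^2\le\sum_j|\psi_x^{(n_k,j)}|^2=1$ and $(O_{n_k}(|g_T|^2))_{xx}\le\|h\|_\infty^2$; since $\#([n_k]\setminus G_{n_k})=n_k\cdot o(|I(n_k)|)$ while $N_k\asymp n_k|I(n_k)|$, this part is $o(1)$ — which is exactly where the sharp count $\#G_{n_k}=n_k(1-o(|I(n_k)|))$, rather than merely $n_k(1-o(1))$, is essential. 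Letting $k\to\infty$ with $T$ fixed yields the displayed inequality; then letting $T\to\infty$ and invoking von Neumann's mean ergodic theorem (using that $S$ is measure-preserving and ergodic) gives $g_T\to\int_0^1 h=0$ in $L^2([0,1])$, so $\int_0^1|g_T|^2\to0$ and the bin variance tends to $0$.

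The step requiring the most care is the ordering of limits: $T$ must be held fixed while $k\to\infty$, so that every conjugation depth $\ell<T$ stays below the Ehrenfest-type time $\tilde K(n_k)$ where the Egorov approximation is valid, and only afterward may one send $T\to\infty$ to activate ergodicity. Beyond that the argument is essentially bookkeeping; the one genuinely new input relative to the standard proof is the quantitative size $o(|I(n_k)|)\,n_k$ of the exceptional set (coordinates on short periodic orbits), which is precisely what makes it negligible after normalizing by $N_k\asymp|I(n_k)|\,n_k$ — without the shrinking-window form of the pointwise Weyl law one could not allow $|I(n_k)|\to0$.
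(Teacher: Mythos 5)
Your argument is correct and runs along the same lines as the paper's proof: reduce to mean zero, invoke Egorov to replace $O_{n_k}(h)$ by the Birkhoff-averaged observable $O_{n_k}([h]_T)$ in the diagonal matrix element, control the resulting average by the pointwise Weyl law (separating $G_{n_k}$ from its small complement), and finish with the $L^2$ mean ergodic theorem. Two cosmetic differences are worth noting. First, you hold $T$ fixed, send $k\to\infty$, and only afterwards let $T\to\infty$; the paper instead chooses $T=\lfloor\tilde K(n_k)/2\rfloor$ growing with $k$ and tracks the errors $\mathcal O_h(L_0^T/(Tn_k))$ quantitatively. Your ordering is the cleaner bookkeeping-wise, at the cost of giving no rate; the paper's gives a rate in principle but must verify that the Egorov and Lipschitz errors stay controlled when $T\sim\tilde K/2$. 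Second, where the paper proves an operator-norm estimate $\|O(a)O(b)-O(ab)\|\lesssim\|a\|_{\mathrm{Lip}(E_x)}\|b\|_{\mathrm{Lip}(E_x)}/n^2$ to pass from $O([h]_T)^*O([h]_T)$ to $O(|[h]_T|^2)$, you observe that pointwise Jensen on each atom $E_x$ gives the one-sided inequality $\bigl|(O(g_T))_{xx}\bigr|^2\le(O(|g_T|^2))_{xx}$ directly, which is all that is needed and avoids the Lipschitz dependence in $T$; this is a genuine (if small) simplification. One sign slip: you wrote $A_T^{(k)}=\frac1T\sum_\ell U^{-\ell}O(h)U^\ell$, but the Egorov identity controls $U^\ell O(h)U^{-\ell}\approx O(h\circ S^\ell)$; since $\psi^{(n_k,j)}$ is a unit eigenvector the diagonal matrix elements are identical either way, so the fix is just to flip the exponents. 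The crucial structural point you emphasize — that the exceptional set has size $o(n_k|I(n_k)|)$, matching the normalization $N_k\asymp n_k|I(n_k)|$, so its contribution is $o(1)$ — is indeed where the strengthened (shrinking-window) pointwise Weyl law enters and is what the paper's proof hinges on as well.
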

This decay of the quantum variance in a bin implies there is a sequence of sets $\Lambda_{n_k}\subseteq\{j:\theta^{(n_k,j)}\in I(n_k)\}$ with $\frac{\#\Lambda_{n_k}}{\#\{j:\theta^{(n_k,j)}\in I(n_k)\}}\to1$ such that \eqref{eqn:qe} holds for all sequences $(j_{n_k})_k$, $j_{n_k}\in\Lambda_{n_k}$ and continuous $h:[0,1]\to\C$.
Since we allow $|I(n)|\to0$, the bin sizes are $o(n)$ (density 0) by the Weyl law analogue, and so Theorem~\ref{thm:qe} guarantees that the  density 0 subsequence excluded from the original quantum ergodic theorem cannot accumulate too strongly in a region $I(n)$ of the unit circle.

\subsection{Application to random quantizations with Gaussian eigenvectors}\label{subsec:rquant}
The second application of Theorem~\ref{thm:uweyl} will be to construct random perturbations of $U_n$ with eigenvectors whose values look approximately Gaussian. 
To construct the random perturbations, we use the estimates on the spectral projection matrix combined with results on low-dimensional projections from \cite{DiaconisFreedman,Meckes,ChatterjeeMeckes} to first prove:

\begin{thm}[Gaussian approximate eigenvectors]\label{thm:r-eigenvectors}
Let $S$, $(n_k)_k$, $U_{n_k}$, and $I(n_k)$ be as in Theorem~\ref{thm:uweyl}, in particular assume \eqref{eqn:km} holds. Denote the eigenvalues and eigenvectors of $U_{n_k}$ by $(e^{i\theta^{(n_k,j)}})_j$ and $(\psi^{(n_k,j)})_j$ respectively.
Then letting $\phi^{(n_k)}$ be a unit vector chosen randomly according to Lebesgue measure from $\operatorname{span}(\psi^{(n_k,j)}:\theta^{(n_k,j)}\in I(n_k))$, the empirical distribution $\mu^{(n_k)}=\frac{1}{n_k}\sum_{j=1}^{n_k}\delta_{\sqrt{n_k}\phi_j^{(n_k)}}$ of the scaled coordinates
\[
\sqrt{n_k}\phi^{(n_k)}_1,\sqrt{n_k}\phi^{(n_k)}_2,\ldots,\sqrt{n_k}\phi^{(n_k)}_{n_k},
\]
converges weakly in probability to the standard complex Gaussian $N_\C(0,1)$ as $k\to\infty$. In fact, there are absolute numerical constants $C,c>0$ 
so that for any $f:\C\to\C$ bounded with Lipschitz constant $\|f\|_\mathrm{Lip}:=\sup_{x\ne y}\frac{|f(x)-f(y)|}{|x-y|}\le L$ and $\varepsilon>0$, there is $k_0>0$ so that for $k\ge k_0$,
\begin{equation}
\mathbb{P}\left[\left|\int f(x)\,d\mu^{(n_k)}(x)-\mathbb{E}f(Z)\right|>\varepsilon\right]\le C\exp\left(-c\varepsilon^2n_k|I(n_k)|/L^2\right),
\end{equation}
where $Z\sim N_\C(0,1)$.
\end{thm}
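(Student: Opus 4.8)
The plan is to realize $\phi^{(n_k)}$ as a random projection of a Gaussian vector, and to recognize the diagonal of the spectral projection matrix --- exactly what Theorem~\ref{thm:uweyl} controls --- as the data one feeds into the known quantitative results on near-Gaussianity of random projections. Let $d_k:=\#\{j:\theta^{(n_k,j)}\in I(n_k)\}$, index the corresponding eigenvectors $\psi^{(n_k,j_1)},\dots,\psi^{(n_k,j_{d_k})}$, and let $W_k$ be the $n_k\times d_k$ matrix with these as columns; by orthonormality $W_k^*W_k=I_{d_k}$, and $W_kW_k^*$ is the orthogonal projection onto $\operatorname{span}(\psi^{(n_k,j)}:\theta^{(n_k,j)}\in I(n_k))$. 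A Lebesgue-uniform unit vector of that span has the law of $W_k\theta$ with $\theta$ uniform on the unit sphere of $\C^{d_k}$, equivalently $\theta=Z/\|Z\|$ with $Z\sim N_\C(0,I_{d_k})$. Let $w_x\in\C^{d_k}$ be the (transpose of the) $x$-th row of $W_k$ and put $v_x^{(k)}:=\sqrt{n_k}\,\overline{w_x}$; then $\sqrt{n_k}\,\phi_x^{(n_k)}=\langle v_x^{(k)},\theta\rangle$, so $\int f\,d\mu^{(n_k)}=\frac1{n_k}\sum_{x=1}^{n_k}f(\langle v_x^{(k)},\theta\rangle)$, and by unitary invariance of $\theta$ each $\langle v_x^{(k)},\theta\rangle$ has the law of $\|v_x^{(k)}\|\,\theta_1$.

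Two facts drive the argument. First, $\{v_x^{(k)}\}_{x=1}^{n_k}$ is in isotropic position: the matrix $\frac1{n_k}\sum_x v_x^{(k)}(v_x^{(k)})^*$ has $(i,i')$-entry $\langle\psi^{(n_k,j_i)},\psi^{(n_k,j_{i'})}\rangle=\delta_{ii'}$, so it equals $I_{d_k}$. Second, $\|v_x^{(k)}\|^2=n_k(W_kW_k^*)_{xx}=n_k\sum_{j:\theta^{(n_k,j)}\in I(n_k)}|\psi_x^{(n_k,j)}|^2$ is exactly $n_k$ times the left-hand side of \eqref{eqn:I}; thus Theorem~\ref{thm:uweyl} supplies $G_{n_k}$ with $\#G_{n_k}=n_k(1-o(|I(n_k)|))$ on which $\|v_x^{(k)}\|^2=d_k(1+o(1))$ uniformly, while summing \eqref{eqn:I} over all $x$ gives $d_k=\tfrac{n_k|I(n_k)|}{2\pi}(1+o(1))$. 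Moreover $\|v_x^{(k)}\|^2\le n_k$ for every $x$ (a diagonal entry of an orthogonal projection lies in $[0,1]$) and $\sum_x\|v_x^{(k)}\|^2=n_kd_k$. Finally, since $\tilde K(n_k)\le\log_{L_0}(n_k/M_0)\le\log_2 n_k$, the hypothesis \eqref{eqn:km} forces $n_k|I(n_k)|\to\infty$, hence $d_k\to\infty$.

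With these in hand, I would discard the $o(|I(n_k)|)n_k$ coordinates outside $G_{n_k}$, setting $\tilde\mu^{(n_k)}:=\frac1{\#G_{n_k}}\sum_{x\in G_{n_k}}\delta_{\sqrt{n_k}\phi_x^{(n_k)}}$; by boundedness of $f$, $|\int f\,d\mu^{(n_k)}-\int f\,d\tilde\mu^{(n_k)}|\le 2\|f\|_\infty\#([n_k]\setminus G_{n_k})/n_k=\|f\|_\infty o(|I(n_k)|)$ for every realization, which is $<\varepsilon/4$ once $k$ is large. What remains is a near-isotropic family of $\asymp n_k|I(n_k)|\asymp d_k$ vectors in $\C^{d_k}$, all of norm $\sqrt{d_k}(1+o(1))$, with $d_k\to\infty$, and one must show the empirical distribution of their projections onto the random $\theta$ is close to $N_\C(0,1)$; this is precisely the situation handled by the quantitative random-projection estimates of \cite{DiaconisFreedman,Meckes,ChatterjeeMeckes}. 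They give, first, that $\E_\theta[\int f\,d\tilde\mu^{(n_k)}]\to\E f(Z)$ --- since $\sqrt{d_k}\,\theta_1\Rightarrow N_\C(0,1)$, $\|v_x^{(k)}\|/\sqrt{d_k}=1+o(1)$ uniformly on $G_{n_k}$, and $f$ is bounded and $L$-Lipschitz, so this mean sits within $\varepsilon/4$ of $\E f(Z)$ for $k$ large --- and, second, that $\int f\,d\tilde\mu^{(n_k)}$ concentrates about its mean, the relevant scale being governed by $\max_{x\in G_{n_k}}\|v_x^{(k)}\|^2/\#G_{n_k}\asymp 1/(n_k|I(n_k)|)$, so that concentration of measure on the sphere (equivalently Gaussian concentration in $Z$) yields $\P[\,|\int f\,d\tilde\mu^{(n_k)}-\E_\theta[\int f\,d\tilde\mu^{(n_k)}]|>\varepsilon/4\,]\le C\exp(-c\varepsilon^2 n_k|I(n_k)|/L^2)$. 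Adding the two deterministic $\varepsilon/4$ errors gives the displayed bound, with $k_0$ depending only on $f$ (through $\|f\|_\infty$ and $L$) and $\varepsilon$; the asserted weak convergence in probability then follows since the bound is uniform over $f$ with $\|f\|_\mathrm{Lip}\le L$.

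The genuinely new ingredient, Theorem~\ref{thm:uweyl}, is already available, so the main obstacle will be the bookkeeping around the exceptional coordinates. When $|I(n_k)|\to0$ a coordinate $x\notin G_{n_k}$ can have $\|v_x^{(k)}\|^2$ as large as $n_k\gg d_k$, and one must check not merely that these contribute negligibly to $\int f\,d\mu^{(n_k)}$ (immediate, as $f$ is bounded) but that deleting them leaves the family sufficiently isotropic and does not degrade the concentration rate --- it is exactly the uniform control $\|v_x^{(k)}\|^2=d_k(1+o(1))$ for $x\in G_{n_k}$ from Theorem~\ref{thm:uweyl} that forces the effective parameter in the projection estimates to be $d_k\asymp n_k|I(n_k)|$. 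Extracting from \cite{Meckes,ChatterjeeMeckes} precisely this quantitative dependence of the exponent on $\varepsilon$, $L$, and $n_k|I(n_k)|$, rather than a merely qualitative conclusion, is the step requiring the most care.
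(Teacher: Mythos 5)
Your plan is essentially the same as the paper's: view the random unit vector as $W_k\theta$ for $\theta$ uniform on $\mathbb{S}_\C^{d_k-1}$, recognize $\|v_x\|^2 = n_k(P^{I(n_k)})_{xx}$ as exactly the quantity Theorem~\ref{thm:uweyl} controls, and apply the quantitative random-projection estimates of \cite{DiaconisFreedman,Meckes,ChatterjeeMeckes} (in the paper, via Theorem~\ref{thm:complex-projection}, the complex-projection version of Theorem~2 of \cite{Meckes}). The overall structure and the claimed final rate are right.

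However, two places in your write-up are incorrect as stated, and they suggest a partial misconception of where the concentration exponent $n_k|I(n_k)|$ actually comes from. First, the claimed ``relevant scale'' $\max_{x\in G_{n_k}}\|v_x^{(k)}\|^2/\#G_{n_k}\asymp 1/(n_k|I(n_k)|)$ is wrong: the left side is $\approx d_k/n_k\approx|I(n_k)|/(2\pi)$, not $1/(n_k|I(n_k)|)$. More to the point, the concentration rate is not governed by the individual vector norms at all. The function $F(\theta)=\frac{1}{N}\sum_x f(\langle v_x,\theta\rangle)$ has Lipschitz constant (in $\theta$) bounded by $L\sqrt{B}$ where $B=\sup_{\|u\|=1}\frac1N\sum_x|\langle v_x,u\rangle|^2$, which is $\approx 1$ precisely because of the isotropy/orthonormality $\sum_x v_xv_x^*= n_kI_{d_k}$ --- the estimate requires Cauchy--Schwarz against this average, not against $\max\|v_x\|$. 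Lemma~\ref{lem:complexsphere} then gives $\exp(-c\varepsilon^2 d_k/L^2)$, and $d_k\approx n_k|I(n_k)|/2\pi$ supplies the exponent. The uniform pointwise control $\|v_x\|^2=d_k(1+o(1))$ from Theorem~\ref{thm:uweyl} enters through the \emph{other} input of the Meckes theorem, namely $A=\frac1N\sum_x\bigl|\|v_x\|^2-d_k\bigr|=o(d_k)$, which controls the bias $|\E F(\theta)-\E f(Z)|$ and determines the threshold $\varepsilon>\frac{2L(A+3)}{d_k-1}$, not the concentration rate. Second, a minor slip: after discarding $B_{n_k}$ what remains is $\approx n_k$ vectors in $\C^{d_k}$ (not ``$\asymp d_k$ vectors''), each of norm $\approx\sqrt{d_k}$.

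As for the bookkeeping of bad coordinates: discarding them to form $\tilde\mu^{(n_k)}$ is workable but costs you a bit --- you then have to re-verify the exact normalization $\sigma^2=1$ (rescale by $1+o(1)$), re-check $B$, and translate back to $\mu^{(n_k)}$. The paper instead leaves all coordinates in and absorbs the bad ones into $A$: each $x\in B_{n_k}$ contributes at most $2$ to $\sum_x\bigl|\|P^{I(n_k)}e_x\|^2-d_k/n_k\bigr|$, and $\#B_{n_k}=o(n_k|I(n_k)|)$ keeps $A=o(d_k)$; see \eqref{eqn:A}--\eqref{eqn:2Abound}. This is cleaner and avoids the re-normalization issues you were worried about.
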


We then construct random perturbations $V_{n_k}(\beta^{[n_k]})$ of $U_{n_k}$ by grouping the eigenvalues of $U_{n_k}$ into bins and randomly rotating the eigenvectors within each bin.
The random parameter $\beta^{[n_k]}$ represents the choice of random rotations.
This idea of rotating small sets of eigenvectors was used in different models in \cite{zelditch-random,VanderKam,maples,randomQUE} to construct random orthonormal bases with quantum ergodic or quantum unique ergodic properties. 
In our setting, 
Theorem~\ref{thm:r-eigenvectors} will show that the coordinates of these randomly rotated eigenvectors look approximately Gaussian. The matrices $V_{n_k}(\beta^{[n_k]})$ also satisfy the entrywise relations $|(V_{n_k}(\beta^{[n_k]}))_{xy}|^2=(P_{n_k})_{xy}+o(1)$ as well as a weaker Egorov property relating them to the classical dynamics, so that they can be viewed as a quantization of the classical map $S$. Thus while we do not prove
approximate Gaussian behavior for the quantizations $U_{n_k}$ with $|(U_{n_k})_{xy}|^2=(P_{n_k})_{xy}$
(except in a very special case where the eigenspaces end up highly degenerate, see Section~\ref{sec:doubling-2}),
we prove the approximate Gaussian eigenvector behavior for the family of random matrices $V_{n_k}(\beta^{[n_k]})$, which are alternative  quantizations of the original classical dynamics of $S$.

Note also that $S$ is not required to be ergodic here. In particular, we can take the direct sum of two ergodic maps $S_1$ and $S_2$, whose resulting block matrix $U_{n_k}$ has eigenvectors localized on just half of the coordinates. Then $U_{n_k}$ will not have equidistributed or Gaussian eigenvectors, though the randomly perturbed matrices $V_{n_k}(\beta^{[n_k]})$ still will.

\vspace{1mm}
In what follows, we will continue using the notation $V_{n_k}(\beta^{[n_k]})$ for this family of randomly perturbed matrices, with $\beta^{[n_k]}$ as the random parameter which will represent the choices for the random rotations (see Section~\ref{subsec:proof-gvectors} for further details).

\begin{thm}[Random quantizations with Gaussian eigenvectors]\label{cor:gvectors}
Let $S$ satisfy (i)--(iv), and let $(n_k)_k$ be a sequence with ${\K}(n_k)\to\infty$ and with each Markov matrix $P_{n_k}$  unistochastic. Then there exists a family of random unitary matrices $V_{n_k}(\beta^{[n_k]})$ in some probability spaces $(\Omega_{n_k},\mathbb{P}_{n_k})$, with the following properties:
\begin{enumerate}[(a)]
\item  $V_{n_k}(\beta^{[n_k]})$ is a small perturbation of $U_{n_k}$, as in $\sup_{\beta^{[n_k]}}\|V_{n_k}(\beta^{[n_k]})-U_{n_k}\|=o(1)$. Additionally, for every $
\beta^{[n_k]}$, $V_{n_k}$ satisfies an Egorov property; for Lipschitz $h:[0,1]\to\C$,
\[
\|V_{n_k}O_{n_k}(h)V_{n_k}^{-1}-O_{n_k}(h\circ S)\|=o(1)\cdot{\|h\|_\mathrm{Lip}}.
\]

\item (Gaussian coordinates).
There is a sequence of sets $\Pi_{n_k}\subseteq\Omega_{n_k}$ with $\mathbb{P}[\Pi_{n_k}]\to1$ with the following property: Let $(\tilde{V}_{n_k})_k$ be a sequence of matrices with $\tilde{V}_{n_k}\in\Pi_{n_k}$, and let $\tilde{\phi}^{[n_k,j]}$ be the $j$th eigenvector of $\tilde{V}_{n_k}$, and  $\mu^{[n_k,j]}=\frac{1}{n_k}\sum_{x=1}^{n_k}\delta_{\sqrt{n_k}\tilde{\phi}^{[n_k,j]}_x}$ the empirical distribution of the scaled coordinates of $\tilde{\phi}^{[n_k,j]}$. Then for every sequence $(j_{n_k})_k$ with $j_{n_k}\in\intbrr{n_k}$, the sequence $(\mu^{[n_k,j_{n_k}]})_k$ converges weakly to $N_\C(0,1)$ as $k\to\infty$.

\item (QUE).
There is a sequence of sets $\Gamma_{n_k}\subseteq\Omega_{n_k}$ with $\mathbb{P}[\Gamma_{n_k}]\to1$ such that for any sequence of  matrices $(\tilde{V}_{n_k})_k$ with $\tilde{V}_{n_k}\in\Gamma_{n_k}$, the eigenvectors $\tilde{\phi}^{[n_k,j]}$ of $\tilde{V}_{n_k}$ equidistribute over their coordinates. That is, for any sequence $(j_{n_k})_k$ with $j_{n_k}\in\intbrr{n_k}$, and any $h\in C([0,1])$,
\begin{equation}\label{eqn:que2}
\lim_{k\to\infty}\langle\tilde{\phi}^{[n_k,j_{n_k}]},O_{n_k}(h)\tilde{\phi}^{[n_k,j_{n_k}]}\rangle=\int_0^1h(x)\,dx.
\end{equation}

\item For every $\beta^{[n_k]}$, the spectrum of $V_{n_k}(\beta^{[n_k]})$ is non-degenerate.
\item The matrix elements of $V_{n_k}(\beta^{[n_k]})$ satisfy $\sup_{\beta^{[n_k]}}\max_{x,y}\left||V_{n_k}(\beta^{[n_k]})_{xy}|^2-(P_{n_k})_{xy}\right|\to0$ as $k\to\infty$.
\end{enumerate}
\end{thm}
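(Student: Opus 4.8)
The plan is to construct $V_{n_k}(\beta^{[n_k]})$ by partitioning the spectrum of $U_{n_k}$ into bins and rotating eigenvectors within each bin by an independent Haar-random unitary. Concretely, I would first pick a bin width $\delta_{n_k}\to 0$ slowly enough that $\delta_{n_k}\tilde K(n_k)\to\infty$ (for instance $\delta_{n_k}=\tilde K(n_k)^{-1/2}$), and tile $\R/(2\pi\Z)$ into $\approx 2\pi/\delta_{n_k}$ consecutive arcs $I_1,\ldots,I_{m_{n_k}}$ of length $\delta_{n_k}$. Let $W_\ell\subseteq[n_k]$ index the eigenvectors with eigenphase in $I_\ell$, let $\Psi_\ell$ be the $n_k\times\#W_\ell$ matrix of those eigenvectors, and set $\beta_\ell$ to be a Haar-random unitary on $\C^{\#W_\ell}$; then define $V_{n_k}(\beta^{[n_k]})$ to have the same eigenphases as $U_{n_k}$ but with the eigenvector block $\Psi_\ell$ replaced by $\Psi_\ell\beta_\ell$. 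Since rotating an orthonormal basis of an eigenspace-union by a unitary keeps it orthonormal, $V_{n_k}$ is unitary. For (d), the spectrum is literally unchanged as a multiset, but to get non-degeneracy one should additionally apply a tiny deterministic diagonal phase perturbation, or more simply observe that if $U_{n_k}$ already has degenerate eigenvalues one may split them by an $o(1)$ phase shift before rotating; I would fold this into the construction so (d) holds for every $\beta^{[n_k]}$.

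For (a), the operator norm bound follows because $V_{n_k}-U_{n_k}$ acts as $\bigoplus_\ell (\Psi_\ell\beta_\ell-\Psi_\ell)$ composed with the diagonal of eigenphases, and each eigenphase in block $\ell$ differs from a common reference phase $e^{i\theta_\ell}$ by at most $\delta_{n_k}$; thus $V_{n_k}=\sum_\ell e^{i\theta_\ell}\Psi_\ell\beta_\ell\Psi_\ell^* + O(\delta_{n_k})$ and similarly $U_{n_k}=\sum_\ell e^{i\theta_\ell}\Psi_\ell\Psi_\ell^*+O(\delta_{n_k})$, so $\|V_{n_k}-U_{n_k}\|=O(\delta_{n_k})=o(1)$ uniformly in $\beta^{[n_k]}$. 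The Egorov property for $V_{n_k}$ then follows from the Egorov property for $U_{n_k}$ cited from \cite{qgraphs} (i.e. $\|U_{n_k}O_{n_k}(h)U_{n_k}^{-1}-O_{n_k}(h\circ S)\|=o(1)\|h\|_{\mathrm{Lip}}$) together with the perturbation bound: $\|V_{n_k}O_{n_k}(h)V_{n_k}^{-1}-U_{n_k}O_{n_k}(h)U_{n_k}^{-1}\|\le 2\|V_{n_k}-U_{n_k}\|\,\|O_{n_k}(h)\|+o(1)\|h\|_\mathrm{Lip}$, and $\|O_{n_k}(h)\|\le\|h\|_\infty\lesssim\|h\|_\mathrm{Lip}$ for $h$ Lipschitz on $[0,1]$ (after normalizing by the value at a point, which is absorbed into $\int h$ type terms). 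Property (e) is immediate from (a): $|(V_{n_k})_{xy}|^2-|(U_{n_k})_{xy}|^2=O(\|V_{n_k}-U_{n_k}\|)=o(1)$ entrywise, and $|(U_{n_k})_{xy}|^2=(P_{n_k})_{xy}$ exactly.

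The substance is in (b) and (c). For (b), each eigenvector $\tilde\phi^{[n_k,j]}$ of $\tilde V_{n_k}$ lies in exactly one bin-span $\mathrm{span}(\psi^{(n_k,j')}:\theta^{(n_k,j')}\in I_\ell)$ and, crucially, is a \emph{uniformly random} unit vector in that span by Haar invariance (the $j$th column of $\Psi_\ell\beta_\ell$ for Haar $\beta_\ell$ is uniform on the unit sphere of the range of $\Psi_\ell$). Since these bins have length $\delta_{n_k}$ satisfying \eqref{eqn:km}, Theorem~\ref{thm:r-eigenvectors} applies with $I(n_k)=I_\ell$ and gives, for each fixed bounded Lipschitz $f$ and each $\varepsilon>0$, a failure probability $\le C\exp(-c\varepsilon^2 n_k\delta_{n_k}/L^2)$ for the empirical distribution of $\sqrt{n_k}\tilde\phi^{[n_k,j]}$ to be $\varepsilon$-far from $\E f(Z)$. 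A union bound over the $\le n_k$ eigenvectors costs a factor $n_k$, which is beaten by $\exp(-c\varepsilon^2 n_k\delta_{n_k}/L^2)$ since $n_k\delta_{n_k}\gg\log n_k$ (indeed $n_k\delta_{n_k}=n_k\tilde K(n_k)^{-1/2}\to\infty$ super-logarithmically as $\tilde K(n_k)\lesssim\log n_k$). Taking a countable dense family of test functions $f$ and a sequence $\varepsilon\downarrow 0$, a Borel--Cantelli / diagonal argument produces the sets $\Pi_{n_k}$ with $\P[\Pi_{n_k}]\to 1$ on which \emph{every} eigenvector's scaled empirical distribution is simultaneously close to $N_\C(0,1)$; this yields weak convergence for every sequence $(j_{n_k})$. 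For (c), equidistribution $\langle\tilde\phi^{[n_k,j]},O_{n_k}(h)\tilde\phi^{[n_k,j]}\rangle\to\int_0^1 h$: I would get this two ways, either directly from (b) (Gaussian coordinates force the diagonal quadratic form to concentrate at its mean $\frac1{n_k}\mathrm{tr}\,O_{n_k}(h)=\int h$, via a second-moment/Hanson--Wright-type bound for random projections as in \cite{ChatterjeeMeckes,Meckes}), or by noting that a random unit vector in a bin-span has expected quadratic form $\frac{1}{\#W_\ell}\mathrm{tr}(\Psi_\ell^*O_{n_k}(h)\Psi_\ell)$, which by the pointwise Weyl law (Theorem~\ref{thm:uweyl}) and quantum ergodicity in bins (Theorem~\ref{thm:qe}) is $\int h+o(1)$, with variance $O(1/\#W_\ell)=o(1)$; again a union bound over $j$ and density argument over a dense set of $h\in C([0,1])$ gives $\Gamma_{n_k}$ with $\P[\Gamma_{n_k}]\to 1$.

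The main obstacle I anticipate is the uniformity in the union bound: one must verify the failure probabilities from Theorem~\ref{thm:r-eigenvectors} and from the variance estimate for (c) decay fast enough to survive multiplication by $n_k$ and still leave room for the $\varepsilon\downarrow0$ and dense-test-function limits, i.e. carefully tracking that $n_k|I_\ell|=n_k\delta_{n_k}$ grows faster than $\log n_k$ under only $\tilde K(n_k)\to\infty$ (note $\tilde K(n_k)$ could grow much slower than $\log n_k$ if $n_k$ has large factors coprime to $L_0$, but $\tilde K(n_k)\le\log_{L_0}(n_k/M_0)$ always, so $\delta_{n_k}=\tilde K(n_k)^{-1/2}$ still gives $n_k\delta_{n_k}\ge n_k(\log_{L_0}n_k)^{-1/2}\gg\log n_k$). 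A secondary technical point is making the non-degeneracy (d) hold for \emph{every} $\beta^{[n_k]}$ rather than almost surely, which forces the small deterministic phase split to be built into the definition of $V_{n_k}$ from the outset and then re-checking that this extra $o(1)$ perturbation does not disturb (a), (b), (c), (e) — it does not, since it is absorbed into the same $o(1)$ budget as the bin width.
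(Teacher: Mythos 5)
Your proposal follows the paper's construction essentially verbatim: bin the spectrum into arcs of width $\delta_{n_k}$ satisfying \eqref{eqn:km}, rotate the eigenvectors in each bin by an independent Haar-random unitary, perturb degenerate eigenvalues to get (d), use the pointwise Weyl law via Theorem~\ref{thm:complex-projection}/Corollary~\ref{cor:randomGaussian} plus a union bound over eigenvectors and a countable dense family of test functions to get $\mathbb{P}[\Pi_{n_k}]\to1$, and invoke Hanson--Wright for (c); (a) and (e) come from $\|V_{n_k}-U_{n_k}\|=O(\delta_{n_k})$ exactly as you describe. One small correction to your first route for (c): the Haar-average of $\langle\tilde\phi,O_{n_k}(h)\tilde\phi\rangle$ over a bin-span is $\frac{1}{\dim W_\ell}\operatorname{tr}\bigl(P^{[\ell]}O_{n_k}(h)\bigr)$, not $\frac{1}{n_k}\operatorname{tr}O_{n_k}(h)$; it is only via the pointwise Weyl law (as in your second route, which is what the paper actually uses) that this equals $\int h+o(1)$, and the "Borel--Cantelli" phrasing is a slight misnomer since each $\Pi_{n_k}$ lives in a separate probability space — one only needs $\mathbb{P}[\Pi_{n_k}^c]\to0$, obtained by truncating the dense family at the first $n_k$ functions as the paper does.
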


The weak convergence in (b) of the empirical distribution means that the value distribution of the coordinates of the eigenvector $\tilde{\phi}^{[n_k,j]}$ looks complex Gaussian. This is not a statement about the spatial behavior or plot of this vector, but rather describes the idea that if one plots the histogram of the coordinates of $\tilde{\phi}^{[n_k,j]}$ (similarly as done for the eigenvectors in Figure~\ref{fig:num}), then it will tend to look like the density of a complex Gaussian.

\subsection{The doubling map}\label{subsec:doubling}
Finally, in Sections~\ref{sec:doubling-even} and \ref{sec:doubling-2} we study the case when $S$ is the doubling map $2x\pmod{1}$ and the specific quantization $U_n$ is the orthogonal one in \eqref{eqn:doubling-pu}.
We study this case using similar arguments as in the general case, but with stronger estimates from analyzing binary trees and bit shifts specific to the doubling map.
Theorem~\ref{thm:uweyl} will hold with any sequence of even $n\in2\N$, not just those with ${\K}(n)\to\infty$.
Additionally, when $n=2^K$, the spectrum of this specific quantization $U_n$ is degenerate with multiplicities asymptotically $\frac{2^K}{4K}$, and similar arguments as used for Section~\ref{subsec:rquant} will show that most every eigenbasis looks Gaussian (Theorem~\ref{thm:doub2}).

\subsection{Outline}

Section~\ref{sec:plem} contains some lemmas concerning properties of the map $S$ and the corresponding Markov matrices $P_n$. Section~\ref{sec:weylproof} is the proof of the pointwise Weyl law analogue, Theorem~\ref{thm:uweyl}.
The first application, on quantum ergodicity in bins, is proved in Section~\ref{sec:qe}. 
Section~\ref{sec:gaussian} covers the second application on random perturbations of $U_n$ with approximately Gaussian eigenvectors. 
Sections~\ref{sec:doubling-even} and \ref{sec:doubling-2} deal with the specific map the doubling map, especially with the degenerate case of dimension a power of two.
An example where the pointwise Weyl relation \eqref{eqn:I} fails for a particular coordinate choice $x$ is given in Section~\ref{sec:coordfail}.

\section{Properties of the map $S$ and matrices \texorpdfstring{$P_n$}{Pn}}\label{sec:plem}

In this section we gather some results about the relationship between the map $S$ and the Markov matrices $P_n$. The following lemma contains properties from \cite{qgraphs} and \cite{pzk}, stated here for a specific condition involving ${\K}(n)$.

\begin{lem}[powers of $S$, \cite{qgraphs,pzk}]\label{lem:sp}
Let $S:[0,1]\to[0,1]$ be a piecewise-linear map satisfying the assumptions (i)--(iii) from the beginning of Section~\ref{subsec:setup}. Let the partition size be $n\in M_0\Z$ with atoms $E_1,\ldots,E_n$, and let $\mathcal{E}_n$ be the set of all endpoints of the atoms $E_x$. Then for $1\le\ell\le{\K}(n)+1$,
\begin{enumerate}[(a)]
\item $S^\ell$ is linear with integer slope on each atom $E_x$. For endpoints $e\in\mathcal{E}_n$, the right and left limits satisfy $\lim_{y\to e^\pm}S^\ell(y)\in\mathcal{E}_n$, i.e., the start and end of each linear segment live in the grid $\mathcal{E}_n\times\mathcal{E}_n$.
\item If $S^\ell(E_x)\cap E_y\ne\emptyset$, then  $S^\ell(E_x)\supset E_y$. In fact $S^\ell(E_x)$ is a union of several adjacent atoms and some endpoints.
\item $(P_n)_{x\tau_1}(P_n)_{\tau_1\tau_2}\cdots (P_n)_{\tau_{\ell-1}y}\ne0$ iff there exists $z\in E_x$ with $S^\ell(z)\in E_y$ and $S^j(z)\in E_{\tau_j}$ for $j=1,\ldots,\ell-1$. 
\item If $S^\ell(E_x)\cap E_y=\emptyset$ then $(P_n^\ell)_{xy}=0$. If $S^\ell(E_x)\cap E_y\ne\emptyset$, then there is a unique sequence $\tau=(\tau_1,\tau_2,\ldots,\tau_{\ell-1})$ such that $(P_n)_{x\tau_1}(P_n)_{\tau_1\tau_2}\cdots (P_n)_{\tau_{\ell-1}y}\ne0$.
\end{enumerate}
\end{lem}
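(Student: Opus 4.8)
The plan is to reduce the whole lemma to one arithmetic observation about preimages of the coarse grid $\mathcal E_0=\mathcal E(\mathcal M_0)$ and then read off (a)--(d) essentially mechanically. Write $\mathcal E=\{j/n:0\le j\le n\}$ for the endpoints of the atoms $E_x$. \emph{Step 1.} By induction on $j\ge 0$ I would show $S^{-j}(\mathcal E_0)\subseteq\frac1{M_0L_0^{j}}\mathbb Z$. On an atom $A\in\mathcal M_0$ one has $S(x)=a+s(x-c)$ with $c\in\mathcal E_0$ an endpoint of $A$, $s\in\mathbb Z$ a slope with $s\mid L_0$, and $a$ equal to a one-sided limit of $S$ at $c$, hence $a\in\mathcal E_0$ by (iii); solving $S(x)=t$ for $t\in\frac1{M_0L_0^{j}}\mathbb Z$ gives $x=c+(t-a)/s\in\frac1{M_0L_0^{j+1}}\mathbb Z$ since $a,c\in\frac1{M_0}\mathbb Z$ and $s\mid L_0$. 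The $j=0$ case of the same identity also shows $S\bigl(\tfrac1N\mathbb Z\cap[0,1]\bigr)\subseteq\tfrac1N\mathbb Z$ whenever $M_0\mid N$, using the convention in (iii) for the values at $\mathcal E_0$. Since $M_0L_0^{\tilde K(n)}\mid n$, for every $j\le\tilde K(n)$ we get $S^{-j}(\mathcal E_0)\subseteq\frac1{M_0L_0^{j}}\mathbb Z\subseteq\mathcal E$, and in particular $S(\mathcal E)\subseteq\mathcal E$.

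\emph{Parts (a) and (b).} The set of points at which $S^\ell$ fails to be affine is contained in $B_\ell:=\bigcup_{j=0}^{\ell-1}S^{-j}(\mathcal E_0)$: this holds for $\ell=1$, and off $B_{\ell-1}\cup S^{-(\ell-1)}(\mathcal E_0)=B_\ell$ the map $S^{\ell-1}$ is affine and $S$ is affine near $S^{\ell-1}(x)$. For $\ell\le\tilde K(n)+1$, Step 1 gives $B_\ell\subseteq\mathcal E$, so $S^\ell$ is affine on the interior of each atom $E_x$ with integer slope (a product of slopes of $S$). A parallel induction on $\ell$, using $S(\mathcal E)\subseteq\mathcal E$ when the limit point of $S^{\ell-1}$ is interior to an $\mathcal M_0$-atom and (iii) when it lies in $\mathcal E_0$, gives $\lim_{y\to e^\pm}S^\ell(y)\in\mathcal E$; this is (a). Part (b) follows at once: $S^\ell(E_x)$ is an open interval whose endpoints are those one-sided limits, hence in $\mathcal E$, and an interval with endpoints in $\mathcal E$ is a union of adjacent atoms and the grid points between them, so any $E_y$ it meets is contained in it.

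\emph{Part (c).} The direction ``$\Leftarrow$'' is automatic: if $z\in E_x$ has $S^j(z)\in E_{\tau_j}$ for all $j$ (with $\tau_0:=x$, $\tau_\ell:=y$), then $S$ is affine on the open atom $E_{\tau_{j-1}}$ and $S(E_{\tau_{j-1}})\ni S(S^{j-1}(z))=S^j(z)\in E_{\tau_j}$, so every factor is positive. For ``$\Rightarrow$'', assume the product is nonzero. By (b) at $\ell=1$, each $(P_n)_{\tau_{j-1}\tau_j}\ne0$ forces $S(E_{\tau_{j-1}})\supseteq E_{\tau_j}$. Now build nested nonempty open intervals $E_x=W_0\supseteq W_1\supseteq\dots\supseteq W_{\ell-1}$ with $S^j(W_j)=E_{\tau_j}$, by pulling the nonempty open set $(S|_{E_{\tau_{j-1}}})^{-1}(E_{\tau_j})\subseteq E_{\tau_{j-1}}$ back through the affine bijection $S^{j-1}\colon W_{j-1}\to E_{\tau_{j-1}}$; finally $S(E_{\tau_{\ell-1}})\supseteq E_y$ makes $\{z\in W_{\ell-1}:S^\ell(z)\in E_y\}$ nonempty, and any such $z$ is as required.

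\emph{Part (d).} If $S^\ell(E_x)\cap E_y=\emptyset$, then no term of $(P_n^\ell)_{xy}=\sum_\tau(P_n)_{x\tau_1}\cdots(P_n)_{\tau_{\ell-1}y}$ can be nonzero, since by (c) a nonzero term would give $z\in E_x$ with $S^\ell(z)\in S^\ell(E_x)\cap E_y$; as the entries are nonnegative, $(P_n^\ell)_{xy}=0$. For uniqueness when $S^\ell(E_x)\cap E_y\ne\emptyset$: by (a),(b) the set $W:=\{z\in E_x:S^\ell(z)\in E_y\}=(S^\ell|_{E_x})^{-1}(E_y)$ is a single nonempty open interval, and by (c) the admissible sequences $\tau$ are exactly the itineraries of points of $W$, so it suffices to prove each $S^i(W)$, $0\le i\le\ell$, lies in one atom. \textbf{This is the one genuinely delicate point.} The key is that $(S^\ell|_{E_x})\circ(S^i|_{E_x})^{-1}$ is affine on $S^i(E_x)$ (a union of atoms by (b)), agrees there with $S^{\ell-i}$, has integer slope $\sigma\ne0$, and takes a value in $\mathcal E$ at the endpoints of $S^i(E_x)$; hence its preimage of $E_y$, which is $S^i(W)$, is an interval with endpoints in $\frac1{n|\sigma|}\mathbb Z$ and length exactly $\frac1{n|\sigma|}$, i.e.\ a single cell of that lattice, and such a cell contains no point $k/n$ of the coarser lattice $\frac1n\mathbb Z$ in its interior. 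So the itinerary is constant on $W$ and $\tau$ is unique. I expect the only real nuisance to be the bookkeeping here---keeping straight which iterates of $S$ are affine on which unions of atoms---since everything else is the Step 1 divisibility fact applied repeatedly; note in particular that (iv) is not needed, as $|\sigma|\ge1$ suffices.
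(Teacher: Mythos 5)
Your proof is correct, and for parts (a)--(c) it follows essentially the same path as the paper: the divisibility fact $S^{-j}(\mathcal E_0)\subseteq\frac1{M_0L_0^j}\mathbb Z$ (which is exactly the paper's footnoted induction), then reading off (a) and (b) from the avoidance of the break set, and for (c) pulling back through the affine bijections (your nested-intervals construction is the forward phrasing of the paper's backward choice of points $z_{\ell-1},z_{\ell-2},\ldots$).

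For the uniqueness claim in (d) you take a genuinely different route. The paper, citing Lemma 2 of \cite{qgraphs}, argues by contradiction: two distinct itineraries from $E_x$ to $E_y$ would, after pulling a common point $w\in E_y$ back along both itineraries, produce $v_1\ne v_2$ in $E_x$ with $S^\ell(v_1)=S^\ell(v_2)=w$, contradicting injectivity of the affine map $S^\ell|_{E_x}$. You instead argue directly and arithmetically: $W=(S^\ell|_{E_x})^{-1}(E_y)$ is a single interval, and for each $0\le i\le\ell$ the image $S^i(W)$ is shown to be a single cell of the lattice $\frac1{n|\sigma|}\mathbb Z$ (with $\sigma$ the integer slope of the affine function $(S^\ell|_{E_x})\circ(S^i|_{E_x})^{-1}$), which cannot straddle a point of $\frac1n\mathbb Z$, hence lies in one atom. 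Both arguments are sound and both use only nonvanishing (rather than $\ge2$) slopes, so neither needs (iv). The paper's contradiction is shorter and avoids any endpoint bookkeeping; yours is more in the spirit of the Step 1 divisibility argument and gives explicit control on the location and length of each $S^i(W)$, which is slightly more information than strictly needed here. One small point worth being careful about in your write-up: when you say the affine function ``takes a value in $\mathcal E$ at the endpoints of $S^i(E_x)$,'' you mean the one-sided limits from inside $S^i(E_x)$, which are in $\mathcal E$ by part (a) applied to $S^{\ell-i}$ --- this is what makes the endpoint computation land in $\frac1{n|\sigma|}\mathbb Z$.
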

The condition here with ${\K}(n)$ can be more restrictive than needed in \cite{qgraphs}, but is a concrete example of allowable powers $\ell$ and dimensions $n$. For completeness with these concrete conditions, we include most of the proofs below.
\begin{proof}
\begin{enumerate}[(a)]
\item Both parts are done recursively in $\ell$. For example, suppose $S^{\ell-1}$ is linear on each atom $E_x$. Then for the first part of (a), it suffices to show for each $x$, $S^{\ell-1}(E_x)\subseteq A_j$ for one of the atoms $A_j$ of the ``base'' partition $\mathcal{M}_0$ (the specific $A_j$ can depend on $x$), since then composition with $S$ shows $S^\ell=S\circ S^{\ell-1}$ is linear on $E_x$ since $S$ is linear on $A_j$. 

The inclusion $S^{\ell-1}(E_x)\subseteq A_j$ holds for any $\ell-1\le {\K}(n)$, essentially because this image must avoid all endpoints $e_0\in\mathcal{E}_0$:
We will show the preimage $S^{-(\ell-1)}(\mathcal{E}_0)$ is contained in the endpoints $\mathcal{E}_n$.
Then if we take the open interval (interior) $E_x^\mathrm{o}$, which is in-between endpoints in $\mathcal{E}_n$, then $S^{\ell-1}(E_x^{\mathrm{o}})$ must not contain any points $e_0\in\mathcal{E}_0$, and so must be contained inside just  a single atom $A_j$ as $S^{\ell-1}$ is linear on $E_x$.

For $\ell\ge2$, suppose $S^{-(\ell-2)}(\mathcal{E}_0)\subseteq\frac{1}{M_0L_0^{\ell-2}}\Z$ and let $y$ be in the preimage $S^{-(\ell-1)}(\mathcal{E}_0)$; we will show $y\in\frac{1}{M_0L_0^{\ell-1}}\Z$. Recall  that for $L_0$ the least common multiple of the slopes of $S$ we wrote  $n=M_0L_0^{{\K}(n)}r$, so that for $\ell-1\le {\K}(n)$, then $\frac{1}{n}\Z\supseteq\frac{1}{M_0L_0^{\ell-1}}\Z$, so the above is sufficient to show $y\in\mathcal{E}_n$. Since $S(y)=my+b\in S^{-(\ell-2)}(\mathcal{E}_0)\subseteq\frac{1}{M_0L_0^{\ell-2}}\Z$ for some $m$ and $b$, and we know in fact $m|L_0$ and $b\in\frac{1}{M_0}\Z$ from the definitions on $S$, then $y\in\frac{1}{M_0L_0^{\ell-2}m}\subseteq\frac{1}{M_0L_0^{\ell-1}}\Z$.

\item follows from (a) since the linear segments in $S^\ell$ start at points in $\mathcal{E}_n$ and have integer slopes. 
\item The ($\Leftarrow$) direction is immediate from the definition of $P_n$.  The ($\Rightarrow$) direction follows from the relations
\begin{align*}
S(E_x)\supset E_{\tau_1}, \quad 
S(E_{\tau_1})\supset E_{\tau_2},\quad
\ldots, \quad
S(E_{\tau_{\ell-1}})\supset E_y
\end{align*}
and working backwards, taking $z_{\ell-1}\in E_{\tau_{\ell-1}}$ with $S(z_{\ell-1})\in E_y$, and then $z_j\in E_{\tau_j}$ with $S(z_j)=z_{j+1}$.
\item \begin{sloppypar}
The first part follows from the above inclusions as well; note that if $(P_n)_{x\tau_1}(P_n)_{\tau_1\tau_2}\cdots (P_n)_{\tau_{\ell-1}y}\ne0$ for some sequence $\tau_1,\ldots,\tau_{\ell-1}$, then
$
S^\ell(E_x)\supset S^{\ell-1}(E_{\tau_1})\supset\cdots\supset E_y,
$
so that $S^\ell(E_x)\cap E_y\ne\emptyset$.

The unique path part  is Lemma 2 from \cite{qgraphs}: the proof is to suppose for contradiction there are $z_1,z_2\in E_x$ with $S^\ell(z_1),S^\ell(z_2)\in E_y$ but  with $S^r(z_1)\in E_{\alpha}$ and $S^r(z_2)\in E_{\beta}$ for some $1<r<\ell$ and $\alpha\ne\beta$.
Then pick any $w\in E_y$, and by part (b), there is a  preimage $w_1\in E_{\alpha}$ with $S^{\ell-r}(w_1)=w$ and $w_2\in E_{\beta}$ with $S^{\ell-r}(w_2)=w$. Again by (b) then there are preimages $v_1\ne v_2$ in $E_x$ with $S^r(v_1)=w_1$ and $S^r(v_2)=w_2$. But then $S^\ell(v_1)=S^\ell(v_2)=w$ which contradicts $S^\ell$ being linear by part (a) (with nonzero slope since $S$ is measure-preserving) and thus injective on $E_x$.
\end{sloppypar}
\end{enumerate}
\end{proof}

The next lemma shows how ${\K}(n)$ is used to ensure that small powers of $P_n^\ell$ interact nicely with the partition of $[0,1]$ into $n$ atoms.
\begin{lem}[powers of $P_n$]\label{lem:ppowers}
Assume (i)--(iii) and let $1\le \ell\le {\K}(n)+1$. Then
\begin{equation}\label{eqn:ppowers}
(P_n^\ell)_{xy}=\begin{cases}
0,&S^\ell(E_x)\cap E_y=\emptyset
\\\frac{1}{|(S^\ell)'(z)|},&S^\ell(E_x)\cap E_y\ne\emptyset,\text{ any }z\in E_x
\end{cases}.
\end{equation}
That is, for $1\le\ell\le{\K}(n)+1$, we can compute $P_n^\ell$ by drawing $S^\ell$ and applying the same procedure we used to define $P_n$ from $S$.
\end{lem}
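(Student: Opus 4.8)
\textbf{Proof plan for Lemma~\ref{lem:ppowers}.}

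The plan is to prove the formula \eqref{eqn:ppowers} by induction on $\ell$, using Lemma~\ref{lem:sp} at each step to control how paths in the graph of $P_n$ relate to orbits of $S$. The base case $\ell=1$ is exactly the definition of $P_n$. For the inductive step, suppose the formula holds for $\ell-1$ (with $\ell-1\le\tilde K(n)$, so that all of Lemma~\ref{lem:sp} is available for powers up through $\ell$). Expand
\[
(P_n^\ell)_{xy}=\sum_{z_1,\ldots,z_{\ell-1}}(P_n)_{xz_1}(P_n)_{z_1z_2}\cdots(P_n)_{z_{\ell-1}y}.
\]
By Lemma~\ref{lem:sp}(d), if $S^\ell(E_x)\cap E_y=\emptyset$ then every term in this sum vanishes, giving the first case; and if $S^\ell(E_x)\cap E_y\neq\emptyset$, then exactly one tuple $\tau=(\tau_1,\ldots,\tau_{\ell-1})$ contributes a nonzero term. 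So it remains to evaluate that single surviving product and show it equals $1/|(S^\ell)'(z)|$ for any $z\in E_x$.

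For the surviving tuple, Lemma~\ref{lem:sp}(c) provides a point $z\in E_x$ with $S^j(z)\in E_{\tau_j}$ for $j=1,\ldots,\ell-1$ and $S^\ell(z)\in E_y$; by Lemma~\ref{lem:sp}(a), $S$ has a well-defined integer slope on each of the atoms $E_x,E_{\tau_1},\ldots,E_{\tau_{\ell-1}}$, namely $S'(z),S'(S(z)),\ldots,S'(S^{\ell-1}(z))$ respectively (these are constant on the respective atoms). Hence the surviving product is
\[
(P_n)_{x\tau_1}(P_n)_{\tau_1\tau_2}\cdots(P_n)_{\tau_{\ell-1}y}=\frac{1}{|S'(z)|}\cdot\frac{1}{|S'(S(z))|}\cdots\frac{1}{|S'(S^{\ell-1}(z))|}=\frac{1}{|(S^\ell)'(z)|},
\]
by the chain rule, since $S^\ell$ is linear on $E_x$ with slope the product of these slopes. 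This is independent of the choice of $z\in E_x$ because $(S^\ell)'$ is constant on $E_x$ by Lemma~\ref{lem:sp}(a). Combining the two cases gives \eqref{eqn:ppowers}.

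The only subtlety — and the step I would be most careful about — is bookkeeping the range of validity: the identification of the unique path and the linearity/integer-slope statements both require $\ell\le\tilde K(n)+1$, so the induction must be run only up to that cutoff, and one must check that invoking Lemma~\ref{lem:sp}(c)--(d) at level $\ell$ is legitimate (it is, precisely because $\ell\le\tilde K(n)+1$ is the hypothesis of Lemma~\ref{lem:sp}). There is no real analytic difficulty here; the content is entirely in Lemma~\ref{lem:sp}, and this lemma is essentially a clean corollary packaging the "draw $S^\ell$ and repeat the construction" principle. One could alternatively bypass the induction and argue directly: by Lemma~\ref{lem:sp}(d) the sum defining $(P_n^\ell)_{xy}$ has at most one nonzero term, and then evaluate it via Lemma~\ref{lem:sp}(a),(c) exactly as above — this is perhaps the more economical write-up.
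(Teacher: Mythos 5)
Your argument is correct and is essentially the paper's proof: expand $(P_n^\ell)_{xy}$ as a sum over paths, collapse to the unique surviving path via Lemma~\ref{lem:sp}(d), produce $z\in E_x$ on that orbit via Lemma~\ref{lem:sp}(c), and evaluate the product of $1/|S'|$ factors by the chain rule, with independence of $z$ following from Lemma~\ref{lem:sp}(a). One small observation: the inductive framing you set up is never actually used (your ``inductive step'' never invokes the hypothesis for $\ell-1$), so the ``more economical'' direct write-up you mention at the end is not merely an alternative --- it is what you in fact wrote, and it coincides with the paper.
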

\begin{proof}
From Lemma~\ref{lem:sp}(a), partitioning $[0,1]$ into $M_0\cdot L_0^{\ell-1}$ equal atoms ensures $S^\ell$ is linear on each atom. Since $1\le\ell\le{\K}(n)+1$, then $M_0\cdot L_0^{\ell-1}$ divides $n$ so for these $\ell$, the map $S^\ell$ is linear on each atom of the size $n$ partition and the value $\frac{1}{|(S^\ell)'(z)|}$ is the same for any $z\in E_x$.
The matrix elements of $P_n^\ell$ are
$
(P_n^\ell)_{xy}=\sum_{\tau:x\to y}(P_n)_{x\tau_1}(P_n)_{\tau_1\tau_2}\cdots (P_n)_{\tau_{\ell-1}y}.
$
By Lemma~\ref{lem:sp}(d), for $1\le\ell\le {\K}(n)+1$ and fixed $x,y$, this sum over $\tau$ collapses to either zero or just a single term $(P_n)_{x\tau_1}\cdots (P_n)_{\tau_{\ell-1}y}$. 
If this is nonzero, then by the definition of $P_n$,
\begin{align*}
(P_n^\ell)_{xy}=\frac{1}{|S'(E_x)||S'(E_{\tau_1})||S'(E_{\tau_2})|\cdots|S'(E_{\tau_{\ell-1}})|}.
\end{align*}
By Lemma~\ref{lem:sp}(c), there exists $z\in E_x$ with $S(z)\in E_{\tau_1},S^2(z)\in E_{\tau_2},\ldots, S^{\ell}(z)\in E_y$, so 
\[
(P_n^\ell)_{xy}=\frac{1}{|S'(z)S'(S(z))\cdots S'(S^{\ell-1}(z))|}=\frac{1}{|(S^\ell)'(z)|}.
\]
\end{proof}

The following lemma demonstrates the sparseness of the matrices $P_n^\ell$ for times before ${\K}(n)$. Essentially, this is because for these times, the nonzero entries of the matrix $P_n^\ell$ are placed by drawing $S^\ell$ and an $n\times n$ grid in $[0,1]^2$, and then placing a nonzero entry in each position in the grid that $S^\ell$ passes through. As $n$ increases, the grid becomes finer and the graph of $S^\ell$ in $[0,1]^2$, which is one-dimensional, cannot pass through a very large fraction of the boxes.
\begin{lem}[number of nonzero entries]\label{lem:nonzero}
Assume (i)--(iv) and let $1\le \ell\le {\K}(n)+1$. Then the diagonal of $P_n^\ell$ contains at most $2 M_0 L_0^{\ell-1}$ nonzero entries, and in total $P_n^\ell$ has at most $n\cdot s_\mathrm{max}^\ell$ nonzero entries, where $s_\mathrm{max}$ is the maximum of the absolute values of the slopes in $S$.
\end{lem}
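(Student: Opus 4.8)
The plan is to read off the support of $P_n^\ell$ from Lemma~\ref{lem:ppowers}: for $1\le\ell\le\tilde{K}(n)+1$ one has $(P_n^\ell)_{xy}\ne0$ exactly when $S^\ell(E_x)\cap E_y\ne\emptyset$, so the count reduces to understanding how the graph of $S^\ell$ sits in the $n\times n$ grid on $[0,1]^2$. For the total bound, fix a row $x$: by Lemma~\ref{lem:sp}(a)--(b) the image $S^\ell(E_x)$ is an open interval whose endpoints are the one-sided limits of $S^\ell$ at $\tfrac{x-1}{n}$ and $\tfrac{x}{n}$, hence lie on the grid $\tfrac1n\Z$, and whose length equals $|(S^\ell)'(z)|\cdot\tfrac1n$ for any $z\in E_x$. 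Such an interval is a union of exactly $|(S^\ell)'(z)|$ atoms together with the intervening grid points, so row $x$ has exactly $|(S^\ell)'(z)|=\prod_{j=0}^{\ell-1}|S'(S^j z)|\le s_\mathrm{max}^\ell$ nonzero entries; summing over the $n$ rows gives at most $n\,s_\mathrm{max}^\ell$ nonzero entries in all. (As a consistency check, this also re-derives that $P_n^\ell$ is stochastic.)

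For the diagonal, $(P_n^\ell)_{xx}\ne0$ iff $S^\ell(E_x)\cap E_x\ne\emptyset$, equivalently $S^\ell(E_x)\supseteq E_x$ by Lemma~\ref{lem:sp}(b). I would group the atoms $E_x$ according to the partition of $[0,1]$ into $M_0L_0^{\ell-1}$ equal intervals: since $1\le\ell\le\tilde{K}(n)+1$ this number divides $n$, and (as in the proof of Lemma~\ref{lem:ppowers}) $S^\ell$ restricts to a single affine map $u\mapsto mu+b$ on each such interval $A$, with integer slope $m$, $|m|\ge2$ by assumption (iv). On $\bar{A}$ set $f(t):=S^\ell(t)-t=(m-1)t+b$; along the grid points of $\bar{A}$ the values of $f$ form an arithmetic progression with common difference $(m-1)/n$, and $|m-1|\ge1$ makes $f$ \emph{strictly} monotone there. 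For $E_x\subseteq A$ one checks that $S^\ell(E_x)\supseteq E_x$ forces $f(\tfrac{x-1}{n})\le0\le f(\tfrac{x}{n})$ when $m>0$, and forces both $f(\tfrac{x-1}{n})\ge\tfrac1n$ and $f(\tfrac{x}{n})\le-\tfrac1n$ when $m<0$. Strict monotonicity of $f$ on $\bar{A}$ then permits at most two consecutive such $x$ in the first case (and only when $f$ vanishes exactly at a grid point) and at most one in the second, so each of the $M_0L_0^{\ell-1}$ intervals contributes at most two diagonal entries; no atom $E_x$ is double-counted since it lies in a unique one of these intervals, giving the bound $2M_0L_0^{\ell-1}$.

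The one delicate point is the sharp constant $2$ in the diagonal bound. A soft argument --- $S^\ell(E_x)\cap E_x\ne\emptyset$ yields $|S^\ell(z)-z|<\tfrac1n$ for some $z\in E_x$, and an affine function with slope of absolute value $\ge1$ stays within $\tfrac1n$ of the diagonal only on an interval of length $\le\tfrac2n$, which can meet three atoms --- would only give $3M_0L_0^{\ell-1}$. It is the sign-change/strict-monotonicity bookkeeping on each of the $M_0L_0^{\ell-1}$ intervals that sharpens this to $2$; the remaining care is with the endpoint cases, when the zero of $f$ lands exactly on a grid point or on the boundary between two of the intervals.
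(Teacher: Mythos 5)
Your proof is correct, and it follows the same overall structure as the paper: read off the support of $P_n^\ell$ from Lemma~\ref{lem:ppowers}, bound the row sums by $s_\mathrm{max}^\ell$ using linearity of $S^\ell$ on each $E_x$, and for the diagonal, partition $[0,1]$ into the $M_0L_0^{\ell-1}$ intervals on which $S^\ell$ is affine and show each contributes at most two diagonal nonzeros. What differs is the local counting step on each affine piece. The paper argues geometrically: it bounds the diagonal chain of squares $Q$ by the parallelogram $R=\{(t,s):|s-t|<\tfrac1n\}$, projects $S^\ell(I_i)\cap R$ onto the $x$-axis and bounds that projection's length by $\tfrac{2}{n(m-1)}$, handling $m=2$ as a separate borderline case that must invoke $S^\ell(\tfrac{j}{n})\in\tfrac1n\Z$. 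You instead analyze $f(t)=S^\ell(t)-t$ directly on the grid: $f$ is affine with slope $m-1$, $|m-1|\ge1$, so its values at consecutive grid points form a strictly monotone arithmetic progression, and the containment condition $S^\ell(E_x)\supseteq E_x$ becomes a sign condition on $f$ at $\tfrac{x-1}{n},\tfrac{x}{n}$ (for $m>0$, a sign change; for $m<0$, a jump of at least $\tfrac2n$). This arithmetic version handles $m=2$ uniformly with no special case, and makes transparent exactly when the ``two consecutive $x$'' can occur (only when $f$ vanishes at a grid point). Both use the same lattice structure ($S^\ell$ maps $\tfrac1n\Z\cap[0,1]$ to itself); yours just bakes it in from the start rather than invoking it as a patch for the $m=2$ case. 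Your closing observation that a cruder geometric bound would give $3M_0L_0^{\ell-1}$, and that the sign-change/monotonicity bookkeeping is exactly what sharpens this to $2$, is accurate and pins down where the constant comes from.
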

\begin{proof}
Pick an atom $E_x$. Since the maximum slope magnitude of $S$ is $s_\mathrm{max}$, the interval $S^\ell(E_x)$ has length at most $s_\mathrm{max}^\ell\cdot |E_x|$ and intersects at most $s_\mathrm{max}^\ell$ atoms $E_y$. Thus by Lemma~\ref{lem:ppowers} the $x$th row of $P_n^\ell$ has at most $s_\mathrm{max}^\ell$ nonzero entries, so in total $P_n^\ell$ has at most $n\cdot s_\mathrm{max}^\ell$ nonzero entries.

Also by Lemma~\ref{lem:ppowers}, nonzero diagonal elements of $P_n^\ell$ occur exactly when $S^\ell(E_x)\cap E_x\ne\emptyset$. Let $Q\subset[0,1]\times[0,1]$ be the diagonal chain of squares $Q=\bigcup_{x=0}^{n-1} (\frac{x}{n},\frac{x+1}{n})\times(\frac{x}{n},\frac{x+1}{n})$, so that the nonzero diagonal elements $(P_n^\ell)_{xx}$ occur exactly when the graph of $S^\ell$ intersects the $x$th square $(\frac{x}{n},\frac{x+1}{n})\times (\frac{x}{n},\frac{x+1}{n})$ (Figure~\ref{fig:diagonal}).

\begin{figure}[!ht]
\centering
\begin{tikzpicture}[scale=.5]
\foreach \m in {3,4}{
\draw[fill=gray!30](\m,\m)--(\m+1,\m)--(\m+1,\m+1)--(\m,\m+1)--cycle;
}
\foreach \m in {1,2,3,4,5}{
\draw (\m,\m)--(\m+1,\m)--(\m+1,\m+1)--(\m,\m+1)--cycle;
}
\foreach \end in {0,7}{
\draw (\end,-1)--(\end,7);
\draw[dotted] (\end,-1)--(\end,-2);
\draw[dotted] (\end,7)--(\end,8);
}
\node at (6.5,6.5) {\reflectbox{$\ddots$}};
\node at (.5,.5) {\reflectbox{$\ddots$}};
\draw[decorate, decoration={brace,amplitude=5pt}] (7,-2.2)--(0,-2.2);
\node[below] at (3.5,-2.6) {$|I_i|=\frac{1}{M_0\cdot L_0^{\ell-1}}$};
\draw[decorate, decoration={brace,amplitude=3pt}] (2,.9)--(1,.9);
\node[below] at (1.5,.7) {$\frac{1}{n}$};
\draw (1.5,-1)--++(63.4349:9cm); 
\end{tikzpicture}
\caption{A line of slope $2$ intersecting two (open) boxes in the diagonal $Q$.}\label{fig:diagonal}
\end{figure}
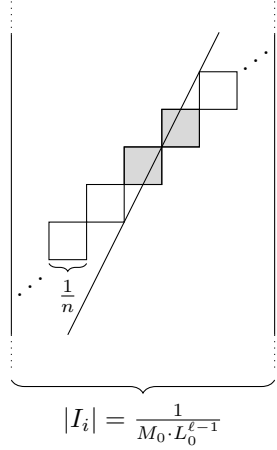

Choose an interval $I_i:=(\frac{i}{M_0L_0^{\ell-1}},\frac{i+1}{M_0L_0^{\ell-1}})$, an atom of the partition into $M_0L_0^{\ell-1}$ atoms. 
This is the coarsest partition for $S^\ell$ for which we can guarantee by Lemma~\ref{lem:sp}(a) that $S^\ell$ is linear on each atom. Since $\ell\le {\K}(n)+1$, the partition into $n$ atoms $E_x$ is a refinement of this one.
If the slope of $S^\ell$ is negative on $I_i$, then $S^\ell$ can intersect at most one box in the diagonal $Q$. If the slope of $S^\ell$ is positive and at least two on $I_i$, then it can intersect at most two boxes in $Q$ (see Figure~\ref{fig:diagonal}): 
Consider the slope one lines $t\pm\frac{1}{n}$ in $[0,1]^2$, which bound a parallelogram $R\supset Q$. Project the line segment $S^\ell(I_i)\cap R$ onto the $x$-axis. If $S^\ell$ on $I_i$ has slope $m>1$, then one can compute this projection is an interval of length $\le\frac{1}{n}\frac{2}{m-1}$. For $m\ge3$, this bound is $\le\frac{1}{n}$ so $S^\ell(I_i)\cap R$ can intersect at most two $\frac{1}{n}\times\frac{1}{n}$ boxes in $Q$. For $m=2$, the length can be $\frac{2}{n}$, but $S^\ell(I_i)\cap R$ can still only intersect at most two boxes in $Q$, by using that $S^\ell(\frac{j}{n})\in\frac{1}{n}\Z$ since $n$ is a multiple of $M_0L_0^{\ell-1}$. 

Then in total since there are $M_0L_0^{\ell-1}$ intervals $I_0,\ldots,I_{M_0L_0^{\ell-1}-1}$, there are at most $2M_0L_0^{\ell-1}$ nonzero entries on the diagonal of $P_n^\ell$.
\end{proof}

\begin{rmk}
Although the above argument works for slope $-1$, we do not allow slope $-1$ in $S$ since powers of $S$ could then have segments with slope $+1$.
\end{rmk}

\section{Proof of Theorem~\ref{thm:uweyl} pointwise Weyl law}\label{sec:weylproof}

In this section we prove Theorem~\ref{thm:uweyl} using a Fourier series approximation of the projection matrix, and properties of the Markov matrix $P_n$ and quantization $U_n$ to identify potentially bad coordinates $x$.
For notational convenience, we will use $n$ instead of $n_k$.
The idea is to approximate the spectral projection  matrix $P^{I(n)}:=\sum_{j:\theta^{(n,j)}\in I(n)}|\psi^{(n,j)}\rangle\langle\psi^{(n,j)}|$ using a Fourier series in powers of $U_n$,
\[
(P^{I(n)})_{xx}\approx \frac{|I(n)|}{2\pi}(1+o(1))+\sum_{|j|\ge1}a_j(U_n^j)_{xx}.
\]
The constant term in the Fourier series is already the leading term in the pointwise Weyl law. So it remains to show all the other terms are small, which can be done by splitting up the $|j|\ge1$ terms into three regions pictured as follows:
\begin{center}
\begin{tikzpicture}[xscale=1.6]
\def\th{.1cm} 
\def\teh{.6cm} 
\draw[->](0,0)--(6,0) node[right] {$|j|$};
\draw(0,\th)--(0,-\th) node[below] {$1$};
\draw(2,\th)--(2,-\th) node[below] {$r({\K}(n))$};
\draw(4,\th)--(4,-\th) node[below] {${\K}(n)$};
\node [align=left] at (5,\teh) {Fourier series\\convergence};
\node [align=left] at (3,\teh) {Semiclassical\\relation to $(P_n^j)_{xx}$};
\node [align=left] at (1,\teh) {Remove coords.\\
with short loops};
\end{tikzpicture}
\end{center}
For $|j|>{\K}(n)$, we have little knowledge of $(U_n^j)_{xx}$, so these terms are controlled simply by convergence of the Fourier series, or decay of the Fourier coefficients. For $|j|$ before what is essentially the Ehrenfest time ${\K}(n)$, we have good knowledge of $U_n^j$. We will split this region $1\le |j|\le {\K}(n)$ into two regions at a cut-off $r({\K}(n))$. Before the cut-off time $r({\K}(n))$, we remove all the coordinates $x$ that have short loops $x\to x$ of length $j$, since these coordinates can produce large Fourier series terms through the large entries $(U_n^j)_{xx}$. 
There are not many of these coordinates by Lemma~\ref{lem:nonzero}.
For $r({\K}(n))<|j|\le{\K}(n)$, we use the relationship between $U_n^j$ and $P_n^j$, specifically Lemma~\ref{lem:sp}(d), to control the size of $(U_n^j)_{xx}$.

Before starting the proof, we make some additional remarks about the proof and statement. 
\begin{rmk}\label{rmk:thm}
Let $r:\N\to\N$ be any function such that $r(m)<m$, like $r(m)=\lfloor m/2\rfloor$ or $\lfloor\log m\rfloor$. This is the cut-off function that determines which Fourier coefficients to examine for bad coordinates with short loops. For simplicity in the proof, one can take $r(\K(n))=\frac{K(n)}{2}$.
\begin{enumerate}[(i)]
\item To show the pointwise Weyl law \eqref{eqn:I}, we will show we can choose $G_n$ (not depending on $I(n)$) so that
$\#G_n\ge n-\frac{2M_0}{L_0-1} L_0^{r({\K}(n))}$, and for $x\in G_n$ that
\begin{multline}\label{eqn:close}
\left|\sum_{j:\theta^{(n,j)}\in I(n)}|\psi_x^{(n,j)}|^2-\frac{|I(n)|}{2\pi}\right| \\
\le \frac{|I(n)|}{2\pi}\left[2\pi|I(n)|^{-1}{\K}(n)^{-1}+(1+2\pi|I(n)|^{-1}{\K}(n)^{-1})\cdot 6\cdot2^{-r({\K}(n))/2}\right].
\end{multline}
Since $|I(n)|K(n)\to\infty$, the terms $|I(n)|^{-1}K(n)^{-1}$ on the right side are $o(1)$.

\item To ensure the entire right side of \eqref{eqn:close} is $o(|I(n)|)$ and that $\#G_n=n(1-o(|I(n)|))$, choose $r$ so that
\begin{align}\label{eqn:rconditions}
r({\K}(n))&\to\infty,\quad {\K}(n)-r({\K}(n))-\log_{L_0}\frac{1}{|I(n)|}\to\infty,\quad\text{as }{\K}(n)\to\infty.
\end{align} 
Since $|I(n)|{\K}(n)\to\infty$ by \eqref{eqn:km}, then eventually $\frac{1}{|I(n)|}<{\K}(n)$, so  condition \eqref{eqn:rconditions} is always met if
\begin{equation}
{\K}(n)-r({\K}(n))-\log_{L_0} {\K}(n)\to\infty.
\end{equation}
For example, $r({\K}(n))=\lfloor {\K}(n)/2\rfloor$ or $r({\K}(n))=\lfloor \log {\K}(n)\rfloor$ always satisfy the conditions on $r$.

\item We are interested in sequences of intervals $I(n)$ where $|I(n)|\to0$. For the proof, we will assume that $|I(n)|$ is bounded away from $2\pi$. If $|I(n)|$ is near $2\pi$,
apply the Theorem to the complement $I(n)^c$ or to a larger interval around $I(n)^c$ that satisfies \eqref{eqn:km}, to conclude 
\[
\left|\sum_{j:\theta^{(n,j)}\in I(n)}|\psi_x^{(n,j)}|^2-\frac{|I(n)|}{2\pi} \right|=\left|\sum_{j:\theta^{(n,j)}\in I(n)^c}|\psi_x^{(n,j)}|^2-\frac{|I(n)^c|}{2\pi}\right|\to0,
\]
as $|I(n)^c|\to0$.
\end{enumerate}
\end{rmk}

\subsection{Fourier series approximation}

Let $p_{I(n)}$ be the function on the unit circle in $\C$ defined by $p_{I(n)}(e^{it}):=\Chi_{I(n)}(t)$, so that $p_{I(n)}(U_n)$ is the projection
\[
P^{I(n)}:=p_{I(n)}(U_n)=\sum_{j:\theta^{(n,j)}\in I(n)}|\psi^{(n,j)}\rangle\langle\psi^{(n,j)}|.
\]
The sum $\sum_{j:\theta^{(n,j)}\in I(n)}|\psi_x^{(n,j)}|^2$ 
is the $(x,x)$ coordinate of the projection matrix $P^{I(n)}$.
To approximate $P^{I(n)}$ by a polynomial in powers of $U_n$, we approximate the indicator function $\Chi_{I(n)}$ by trigonometric polynomials. 

These particular polynomials are based on an entire function $B(z)$ introduced by Beurling, which satisfies $\operatorname{sgn}(x)\le B(x)$ for $x\in\R$, and $\int_\R (B(x)-\operatorname{sgn}(x))\,dx=1$. The function $B(z)$ also satisfies an extremal property; it minimizes the $L^1$ difference $\int_\R (f(x)-\operatorname{sgn}(x))\,dx$ over entire functions $f$ of exponential type\footnote{This is the growth condition for every $\varepsilon>0$, there is $A_\varepsilon$ so that $|f(z)|\le A_\varepsilon e^{(2\pi+\varepsilon)|z|}$ for all $z\in\C$.} $2\pi$ with $f(x)\ge\operatorname{sgn}(x)$ for $x\in\R$. By the Paley--Wiener theorem, exponential of type $2\pi$ means that the Fourier transform of $B(z)$ is supported in $[-2\pi,2\pi]$. 
Selberg later used this function $B(z)$ to produce majorants and minorants of the characteristic function $\Chi_I$ of an interval $I$, with compactly supported Fourier transform. 

\begin{thm}[Beurling--Selberg function]\label{thm:beurling-selberg}
Let $I\subset\R$ be a finite interval and $D>0$. Then there are functions $g^{(+)}_{I,D}$ and $g^{(-)}_{I,D}$ such that
\begin{enumerate}[(i)]
\item $g^{(-)}_{I,D}(x)\le\Chi_{I}(x)\le g^{(+)}_{I,D}(x)$ for all $x\in\R$.
\item The Fourier transforms $\hat{g^{(+)}_{I,D}}$ and $\hat{g^{(-)}_{I,D}}$ are compactly supported in $[-D,D]$.
\item $\int_\R |g^{(+)}_{I,D}(x)-\Chi_I(x)|\,dx=2\pi D^{-1}$ and $\int_\R|\Chi_I(x)-g^{(-)}_{I,D}(x)|\,dx=2\pi D^{-1}$.
\end{enumerate}
\end{thm}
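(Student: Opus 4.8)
The plan is to construct the majorant and minorant by starting from Beurling's function $B(z)$ and performing a periodization/scaling argument. First I would recall the construction of Selberg's majorant and minorant for the indicator of a finite interval $I=[a,b]$ from Beurling's $B(z)$: writing $\operatorname{sgn}$ as a difference of two shifted sign functions, one sets $g^{(+)}_{I,\delta}(x) = \tfrac12\bigl(B(\delta(x-a)/(2\pi)) + B(\delta(b-x)/(2\pi))\bigr)$ and similarly for $g^{(-)}_{I,\delta}$ with $-B(-\,\cdot\,)$ in place of $B$, so that property (i), $g^{(-)}_{I,\delta}\le\Chi_I\le g^{(+)}_{I,\delta}$, follows from $\operatorname{sgn}(x)\le B(x)$. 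The rescaling by $\delta/(2\pi)$ turns the exponential type $2\pi$ of $B$ into exponential type $\delta$, and by Paley--Wiener this gives property (ii), that $\hat g^{(\pm)}_{I,\delta}$ is supported in $[-\delta,\delta]$. Property (iii) is just the normalization $\int_\R(B(x)-\operatorname{sgn}(x))\,dx=1$ combined with the change of variables: the two sign-function pieces contribute $2\cdot\tfrac12\cdot(2\pi/\delta)\cdot 1 = 2\pi\delta^{-1}$.

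The main steps, in order, would be: (1) state precisely the properties of Beurling's $B(z)$ that are used — the majorization $\operatorname{sgn}(x)\le B(x)$ on $\R$, the $L^1$ normalization, boundedness, and the Paley--Wiener support statement for $\hat B$; (2) decompose $\Chi_{[a,b]}(x) = \tfrac12\bigl(\operatorname{sgn}(x-a) + \operatorname{sgn}(b-x)\bigr)$ pointwise away from the endpoints (and check the endpoint values do not matter since they form a null set and the inequalities in (i) still hold there by continuity/boundedness of $B$); (3) define $g^{(+)}_{I,\delta}$ and $g^{(-)}_{I,\delta}$ by the formulas above and verify (i) termwise; (4) compute the Fourier transform: since $B(\delta\,\cdot\,/(2\pi))$ has Fourier transform supported in $[-\delta,\delta]$ and the other summand is a reflection/translation which does not enlarge the support, deduce (ii); (5) integrate the difference $g^{(+)}_{I,\delta}-\Chi_I = \tfrac12\bigl[(B(\delta(x-a)/(2\pi)) - \operatorname{sgn}(x-a)) + (B(\delta(b-x)/(2\pi)) - \operatorname{sgn}(b-x))\bigr]$ over $\R$, change variables, and get $2\pi\delta^{-1}$; the minorant case is symmetric.

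I would not reprove the existence and extremal properties of $B(z)$ itself — that is classical (Beurling, Selberg; see Vaaler's survey) and the excerpt already cites it as input, so I would simply invoke it. The only genuinely delicate point is bookkeeping with the scaling constants: making sure the factor $2\pi$ in "exponential type $2\pi$" and the factor $2\pi$ appearing in property (iii) are tracked consistently under the substitution $x\mapsto \delta x/(2\pi)$, so that the type becomes exactly $\delta$ and the $L^1$ defect becomes exactly $2\pi\delta^{-1}$ rather than, say, $\delta^{-1}$ or $(2\pi)^2\delta^{-1}$. This is the step I expect to be the main obstacle — not conceptually hard, but the place where a careless normalization would produce the wrong constant, which matters since these constants propagate into the explicit remainder bound \eqref{eqn:close}. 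Everything else is a routine assembly of known facts about the Beurling--Selberg extremal functions.
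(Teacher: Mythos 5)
The paper does not actually prove Theorem~\ref{thm:beurling-selberg}: it describes Beurling's function $B(z)$, states the theorem, and cites Selberg's collected papers, Montgomery, and Vaaler for the construction. Your reconstruction is the standard Selberg majorant/minorant argument, and it is correct. In particular you handle the one genuinely delicate point correctly: under the paper's convention that ``exponential type $2\pi$'' means Fourier support in $[-2\pi,2\pi]$, the substitution $x\mapsto\delta x/(2\pi)$ rescales the type from $2\pi$ to $\delta$, and the Jacobian $2\pi/\delta$ in the $L^1$ defect combines with the two sign-function pieces and the $\tfrac12$ prefactor to give exactly $2\pi\delta^{-1}$ in property (iii), matching the constant used downstream in \eqref{eqn:G} and \eqref{eqn:close}. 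One small point you could make explicit, since Paley--Wiener needs an $L^1$ (or $L^2$) hypothesis and $B$ itself is not integrable: write $g^{(+)}_{I,\delta}=\Chi_I+(g^{(+)}_{I,\delta}-\Chi_I)$, where the first term is trivially $L^1$ and the second is nonnegative with finite integral $2\pi\delta^{-1}$ by (iii), so $g^{(+)}_{I,\delta}\in L^1(\R)$ (and symmetrically for $g^{(-)}_{I,\delta}$); only then does exponential type $\delta$ give compact Fourier support in $[-\delta,\delta]$. This is implicit in your step (4) but worth spelling out.
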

The functions $g_{I,D}^{(\pm)}$ are not necessarily the minimizers of the $L^1$ difference from $\Chi_I$ (see \cite[p.4]{Vaaler}), but have still proven useful, including in Selberg's original number theoretic application. 
For further references on Beurling and Selberg functions, see \cite[Chapter 45 \S20]{Selberg}, \cite{Montgomery}, or \cite{Vaaler}. 

For $I\subset\R/(2\pi\Z)$ with $|I|<2\pi$, to take $2\pi$-periodic functions, define 
\begin{align*}
G^{(+)}_{I,D}(x)&=\sum_{j\in\Z}g^{(+)}_{I,D}(x-2\pi j),\qquad G^{(-)}_{I,D}(x)=\sum_{j\in\Z}g^{(-)}_{I,D}(x-2\pi j),
\end{align*}
whose Fourier series coefficients then agree with the Fourier transform of $g^{(+)}_{I,D}$ or $g^{(-)}_{I,D}$ at integers,
\begin{align}
\hat{G^{(+)}_{I,D}}(k)&=\hat{g_{I,D}^{(+)}}(k),\qquad \hat{G^{(-)}_{I,D}}(k)=\hat{g_{I,D}^{(-)}}(k).
\end{align}
Thus also using property (iii), for $I\subset\R/(2\pi\Z)$ with $|I|<2\pi$,
\begin{align}\label{eqn:G}
G^{(\pm)}_{I,D}(x)=\frac{|I|\pm2\pi D^{-1}}{2\pi} + \sum_{\ell=1}^{\lfloor D\rfloor}\left(\hat{g^{(\pm)}_{I,D}}(\ell)e^{i\ell x}+\hat{g^{(\pm)}_{I,D}}(-\ell)e^{-i\ell x}\right).
\end{align}
These are trigonometric polynomials, sometimes called \emph{Selberg polynomials}, that approximate $\Chi_{I}$ well from above or below. 
Explicit expressions and plots for the Selberg polynomials can be found in \cite[\S1.2 pp.5--7]{Montgomery}.
The above equation \eqref{eqn:G} is the main approximation property we will need for later use.

\subsection{Projection matrix estimates}
Take $D={\K}(n)$, and define the functions on the unit circle in $\C$,
\begin{align*}
F^{(\pm)}_{I(n),{\K}(n)}(e^{it})&:= G^{(\pm)}_{I(n),{\K}(n)}(t).
\end{align*}
Recall we also defined $p_{I(n)}(e^{it})=\Chi_{I(n)}(t)$ and the projection $P^{I(n)}=p_{I(n)}(U_n)$, so that by the spectral theorem,
\begin{align}\label{eqn:sand2}
F_{I(n),{\K}(n)}^{(-)}(U_n)_{xx}\le (P^{I(n)})_{xx}\le F_{I(n),{\K}(n)}^{(+)}(U_n)_{xx}.
\end{align}
By \eqref{eqn:G} and the spectral theorem again, 
\begin{multline}
F^{(\pm)}_{I(n),{\K}(n)}(U_n) = \frac{|I(n)|}{2\pi}(1\pm 2\pi|I(n)|^{-1}{\K}(n)^{-1})\operatorname{Id}+\\
+\sum_{\ell=1}^{{\K}(n)}\left(\hat{g^{(\pm)}_{I(n),{\K}(n)}}(\ell)U_n^\ell
+\hat{g^{(\pm)}_{I(n),{\K}(n)}}(-\ell)U_n^{-\ell}\right).
\end{multline}
The identity term $\frac{|I(n)|}{2\pi}(1\pm 2\pi|I(n)|^{-1}{\K}(n)^{-1})\operatorname{Id}$ has the values we want already since $|I(n)|^{-1}{\K}(n)^{-1}\to0$ by \eqref{eqn:km}, so to show \eqref{eqn:close} we want to show the rest of the terms are small. Since
\begin{align}
|\hat{g^{(\pm)}_{I,D}}(\ell)|\le \frac{1}{2\pi}\int_\R|g^{(\pm)}_{I,D}(x)|\,dx\le \frac{1}{2\pi}(|I|+2\pi D^{-1}),
\end{align}
then for any $x,y\in\intbrr{n}$, the $(x,y)$ element of the non-identity terms can be bounded as
\begin{multline}\label{eqn:nonidbound}
\left|\sum_{\ell=1}^{{\K}(n)}\left(\hat{g^{(\pm)}_{I(n),{\K}(n)}}(\ell)(U_n^\ell)_{xy}+\hat{g^{(\pm)}_{I(n),{\K}(n)}}(-\ell)(U_n^{-\ell})_{xy}\right)\right| \le\\
\le \frac{|I(n)|}{2\pi}(1+2\pi|I(n)|^{-1}{\K}(n)^{-1})\sum_{\ell=1}^{{\K}(n)} (|(U_n^\ell)_{xy}|+|(U_n^\ell)_{yx}|).
\end{multline}

\subsection{Removing potentially bad points}\label{subsec:badpoints}
Here we use properties of $U_n$ and $P_n$ from Section~\ref{sec:plem} to remove coordinates $x$ where \eqref{eqn:nonidbound} may be large.
For $1\le \ell\le{\K}(n)+1$, by Lemma~\ref{lem:sp}(d) there is at most one path of length $\ell$ from a given $x$ to itself (or to another $y$), so
\begin{align*}
\left|(U_n^\ell)_{xx}\right|=\left|\sum_{\tau:x\xrightarrow{\ell}x}(U_n)_{\tau_0\tau_1}\cdots (U_n)_{\tau_{\ell-1}\tau_\ell}\right|
= \left|(U_n)_{x\tau_1}(U_n)_{\tau_1\tau_2}\cdots (U_n)_{\tau_{\ell-1}x}\right|=((P_n^\ell)_{xx})^{1/2}.
\end{align*}
Since all slopes of $S$ are at least 2 in absolute value, then all the slopes of $S^\ell$ are at least $2^\ell$ in absolute value, so by Lemma~\ref{lem:ppowers},
 $|(U_n^\ell)_{xx}|\le 2^{-\ell/2}$ for all $x\in\intbrr{n}$.
In order to make the sum $2\sum_{\ell=1}^{{\K}(n)}|(U_n^\ell)_{xx}|$ in \eqref{eqn:nonidbound} small then, we only need to be concerned with smaller $\ell$, since $|(U_n^\ell)_{xx}|$ decays exponentially in $\ell$. 
As we will see, by Lemma~\ref{lem:nonzero}, for small $\ell$, $(U_n^\ell)_{xx}=0$ for most coordinates $x$, so we can pick a cut-off for small $\ell$ and just throw out any coordinates $x$ where $(U_n^\ell)_{xx}\ne0$ below this cut-off. 

Let $r:\N\to\N$ satisfy $r(k)<k$ and \eqref{eqn:rconditions}; this will determine the cut-off for which $\ell$ are ``small''. 
Define the set of potentially bad coordinates as
\begin{equation}
B_n:=\{x\in\intbrr{n}:(U_n^\ell)_{xx}\ne 0\text{ for some }\ell\in\intbrr{1:\K(n)}\}.
\end{equation}
For $\ell\le{\K}(n)+1$, by Lemma~\ref{lem:nonzero}, the diagonal of $U_n^\ell$ contains at most $2\cdot M_0\cdot L_0^{\ell-1}$ nonzero entries, so there are not many bad points,
\begin{equation}\label{eqn:numbad}
\#B_n\le 2M_0\sum_{\ell=1}^{r({\K}(n))}L_0^{\ell-1} =\frac{2M_0}{L_0-1}(L_0^{r({\K}(n))}-1)=o(n|I(n)|),
\end{equation}
using assumption \eqref{eqn:rconditions} for the last equality. 
For $x\in G_n:=\intbrr{n}\setminus B_n$, then
\begin{align}\label{eqn:goodU}
\sum_{\ell=1}^{{\K}(n)}|(U_n^\ell)_{xx}|+|(U_n^{-\ell})_{xx}| &= 2\sum_{\ell=r({\K}(n))+1}^{{\K}(n)}|(U_n^\ell)_{xx}|\\ \nonumber&\le 2\sum_{\ell=r({\K}(n))+1}^\infty 2^{-\ell/2} =2(1+\sqrt{2})\cdot 2^{-r({\K}(n))/2}.
\end{align} 
Then for $x\in G_n$, 
\begin{align}
\Bigg|(P^{I(n)})_{xx}-\frac{|I(n)|}{2\pi}\Bigg| 
\nonumber &\le  \frac{|I(n)|}{2\pi}\left[2\pi|I(n)|^{-1}{\K}(n)^{-1}+(1+2\pi|I(n)|^{-1}{\K}(n)^{-1})\cdot 6\cdot2^{-r({\K}(n))/2}\right]\\
\nonumber &=o(|I(n)|),
\end{align}
since $|I(n)|{\K}(n)\to\infty$ by \eqref{eqn:km}. By \eqref{eqn:numbad}, $\#G_n\ge n(1-o(|I(n)|))$.
\qed

\begin{rmk}
The same method to estimate the diagonal entries of the projection matrix $P^{I(n)}$ can also be used to estimate the off-diagonal entries of $P^{I(n)}$. In this case, the constant term in the Fourier series expansion is zero, and then one can use similar arguments to show that the higher order terms are small. Alternatively, one can also obtain some bounds using that by the Weyl law, $\sum_{x,y\in\intbrr{n}}|(P^{I(n)})_{xy}|^2=\operatorname{tr}((P^{I(n)})^2)=\operatorname{tr}P^{I(n)}=\frac{n|I(n)|}{2\pi}(1+o(1))$.
\end{rmk}

\section{Quantum ergodicity in bins}\label{sec:qe}

In this section we prove Theorem~\ref{thm:qe} concerning quantum ergodicity in bins $\{j:\theta^{(n_k,j)}\in I(n_k)\}$, following the standard proof of quantum ergodicity that uses the Egorov property.

\begin{thm}[Egorov property, \cite{qgraphs}]\label{thm:egorov}
Suppose $S$ satisfies conditions (i)--(iv) and has a corresponding $n\times n$ unitary matrix $U_{n}$ with eigenvectors $(\psi^{(n,j)})_{j=1}^{n}$. Let $O_{n}(h)$ be the quantum observable corresponding to $h:[0,1]\to\C$. If $h$ is Lipschitz continuous on each image $S(E_x)$, and $n\in M_0L_0\Z$, then
\begin{equation}\label{eqn:egorov0}
\|U_{n}O_{n}(h)U_{n}^{-1}-O_{n}(h\circ S)\|\le\frac{1}{2}L_0^2M_0\cdot\frac{\displaystyle\max_{x\in\intbrr{n}}\|h\|_{\mathrm{Lip}(S(E_x))}}{n},
\end{equation}
where the norm on the left side is the operator norm, and the Lipschitz seminorm on the right is the Lipschitz constant of $h$ on $S(E_x)$, i.e. $\|h\|_{\operatorname{Lip}(S(E_x))}=\sup\limits_{z_1\ne z_2\in S(E_x)}\frac{|h(z_1)-h(z_2)|}{|z_1-z_2|}$.
\end{thm}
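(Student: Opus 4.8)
The plan is to directly compute the matrix entries of $U_n O_n(h) U_n^{-1}$ and compare them with those of $O_n(h\circ S)$, using the sparse, unistochastic structure of $U_n$ together with Lemma~\ref{lem:sp}. Since $n\in M_0 L_0\Z$ we have $\tilde K(n)\ge 1$, so Lemmas~\ref{lem:sp} and~\ref{lem:ppowers} apply with $\ell=1$: for each $x$ there is a unique $\tau_1(x)$ with $(P_n)_{x\tau_1}\ne 0$ reachable in one step — more precisely, $S(E_x)$ is a union of adjacent atoms, and $(U_n)_{xy}\ne 0$ exactly when $E_y\subseteq S(E_x)$, with $|(U_n)_{xy}|^2 = (P_n)_{xy} = 1/|S'(E_x)|$. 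First I would write $(U_n O_n(h) U_n^{-1})_{xy} = \sum_{z} (U_n)_{xz}\,O_n(h)_{zz}\,\overline{(U_n)_{yz}}$, so the diagonal entry is $(U_n O_n(h) U_n^{-1})_{xx} = \sum_{z: E_z\subseteq S(E_x)} |(U_n)_{xz}|^2\, O_n(h)_{zz} = \sum_{z: E_z\subseteq S(E_x)} \frac{1}{|S'(E_x)|}\, n\!\int_{E_z} h$.

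Next I would recognize this as a Riemann-sum approximation to $n\int_{E_x} (h\circ S)$. Indeed $O_n(h\circ S)_{xx} = n\int_{E_x} h(S(z))\,dz$; changing variables via the affine bijection $S|_{E_x}\colon E_x \to S(E_x)$ (whose Jacobian is $|S'(E_x)|$) turns this into $\frac{n}{|S'(E_x)|}\int_{S(E_x)} h(w)\,dw = \frac{n}{|S'(E_x)|}\sum_{z: E_z\subseteq S(E_x)} \int_{E_z} h(w)\,dw$, which is \emph{exactly} the diagonal entry computed above. So the diagonal entries agree exactly, and the entire error comes from the off-diagonal entries of $U_n O_n(h) U_n^{-1}$, which $O_n(h\circ S)$ (being diagonal) does not see. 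For $x\ne y$, $(U_n O_n(h) U_n^{-1})_{xy} = \sum_{z: E_z\subseteq S(E_x)\cap S(E_y)} (U_n)_{xz}\overline{(U_n)_{yz}}\, n\!\int_{E_z} h$. The key cancellation: on the common block $S(E_x)\cap S(E_y)$ (a union of consecutive atoms), both $(U_n)_{xz}\overline{(U_n)_{yz}}$ has constant modulus and — this is where I would have to check the structure of the quantization carefully — the \emph{phases} of $(U_n)_{xz}$ as $z$ ranges over a block are such that replacing $n\int_{E_z}h$ by its average over the block makes the sum vanish by unitarity (orthogonality of distinct rows of $U_n$ restricted to that block). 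Hence $(U_n O_n(h) U_n^{-1})_{xy} = \sum_{z} (U_n)_{xz}\overline{(U_n)_{yz}}\big(n\!\int_{E_z}h - c\big)$ for the block-average constant $c$, and each factor $|n\int_{E_z} h - c|$ is bounded by the oscillation of $h$ over $S(E_x)$, i.e.\ by $\|h\|_{\mathrm{Lip}(S(E_x))}\cdot |S(E_x)|\le \|h\|_{\mathrm{Lip}(S(E_x))}\cdot (s_{\max}/n)$; I would then bound the operator norm of the resulting matrix by a Schur test, counting at most $\sim L_0 M_0$ nonzero entries per row (from Lemma~\ref{lem:nonzero} with $\ell=1$, the row of $P_n$ has at most $s_{\max}\le L_0$ nonzero entries, and at most $M_0 L_0$ rows share a given block), which produces the constant $\tfrac12 L_0^2 M_0$.

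The main obstacle is the off-diagonal phase-cancellation step: one must verify that the specific quantization $U_n$ of \cite{pzk,qgraphs} has the property that distinct rows restricted to a common image block $S(E_x)\cap S(E_y)$ are orthogonal (equivalently, that the relevant sub-blocks of $U_n$ have orthogonal rows). This is essentially a consequence of how $U_n$ is assembled from the graph structure of $P_n$ — the atoms $E_z$ in a given image block correspond to outgoing edges that are "bundled" and the unistochastic lift distributes an orthogonal (DFT-like) matrix across each bundle — but making this precise requires unpacking the construction in \cite{qgraphs}. Everything else is a routine change of variables, the Schur test, and bookkeeping of the constants $L_0, M_0, s_{\max}$. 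Since this is a restatement of a result from \cite{qgraphs}, I would cite that paper for the construction and give the short computation above for completeness.
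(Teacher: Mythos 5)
The paper does not reprove Theorem~\ref{thm:egorov}; it simply states it and cites \cite{qgraphs}, so there is no in-paper argument to compare against. Judged on its own, your proof outline is sound, and in fact the one step you flag as the ``main obstacle'' is not actually an obstacle at all. You worry that one must ``verify that the specific quantization $U_n$ \dots\ has the property that distinct rows restricted to a common image block $S(E_x)\cap S(E_y)$ are orthogonal'' and that this requires unpacking the construction of $U_n$. It does not: for \emph{any} unitary $U_n$ with $|(U_n)_{xz}|^2=(P_n)_{xz}$, the product $(U_n)_{xz}\overline{(U_n)_{yz}}$ is supported exactly on the set of $z$ with $E_z\subset S(E_x)\cap S(E_y)$, since $(U_n)_{xz}=0$ whenever $E_z\not\subset S(E_x)$. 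Hence the sum over the common block equals the full inner product of rows $x$ and $y$, which is $\delta_{xy}$ by unitarity. No DFT-like internal structure of the ``bundles'' is needed, and indeed the theorem as stated applies to every unistochastic lift of $P_n$, not a particular one, so any argument that relied on a specific choice of phases would have been suspect.

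The remaining pieces are correct: the exact matching of the diagonal by the affine change of variables (valid because Lemma~\ref{lem:sp}(b) with $\ell=1$ guarantees $S(E_x)$ is a union of atoms, which holds since $n\in M_0L_0\Z$ gives $\tilde K(n)\ge1$); the subtraction of a block-constant $c$ to reduce the off-diagonal entry to an oscillation bound controlled by $\|h\|_{\mathrm{Lip}(S(E_x))}\cdot|S(E_x)|\le \|h\|_{\mathrm{Lip}(S(E_x))}\cdot L_0/n$ together with Cauchy--Schwarz on $\sum_z|(U_n)_{xz}||(U_n)_{yz}|\le1$; and a Schur test whose row/column counts involve $M_0$ through the number of linear branches of $S$ (the preimage of the interval $S(E_x)$ is a union of at most $M_0$ intervals, which bounds the number of $y$ with $S(E_y)\cap S(E_x)\ne\emptyset$). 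You do not carry the constant bookkeeping far enough to actually land on $\tfrac12 L_0^2 M_0$, and you acknowledge this; that is the only genuinely unfinished part. The structure of the argument, however, is the right one and matches the scaling in $L_0$, $M_0$, and $1/n$.
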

If $t\le{\K}(n)+1$, then by the same recursive argument as in Lemma~\ref{lem:sp}(a), $S^{t-1}$ is linear on each $S(E_x)$, so $h\circ S^{t-1}$ is Lipschitz on $S(E_x)$ with Lipschitz constant $\le\|h\|_\mathrm{Lip}L_0^{t-1}$, where here $\|h\|_\mathrm{Lip}=\sup_{z_1\ne z_2}\frac{|h(z_1)-h(z_2)|}{|z_1-z_2|}$ is the Lipschitz constant on the entire $[0,1]$. Then iterating \eqref{eqn:egorov0} $t$ times yields, 
\begin{align*}
\|U^t_nO_n(h)U_n^{-t}&-O_n(h\circ S^t)\|\\
&\le\sum_{r=1}^{t}\|U_n^{t-r}(U_nO_n(h\circ S^{r-1})U_n^{-1})U_n^{-(t-r)}-U_n^{t-r}O_n(h\circ S^r)U_n^{-(t-r)}\|\\
&\le \sum_{r=1}^t\|U_nO_n(h\circ S^{r-1})U_n^{-1}-O_n(h\circ S^r)\|\\
&\le \sum_{r=1}^t\frac{L_0^2M_0\|h\|_\mathrm{Lip}L_0^{r-1}}{2n}
\le \frac{C_S\|h\|_\mathrm{Lip}\cdot L_0^{t}}{n}.\numberthis\label{eqn:egorov}
\end{align*}
If say $t\le\frac{{\K}(n)}{2}$, then $L_0^{t}\ll n$, so the error bound is small, and the Egorov property \eqref{eqn:egorov} relates the quantum dynamics $U_n^tO_n(h)U_n^{-t}$ to the classical dynamics $h\circ S^t$ for $t$ well before the Ehrenfest time $T_E:={\K}(n)\lesssim \log n$.

\subsection{Proof of Theorem~\ref{thm:qe}}
Since $O_{n_k}(h)-(\int_0^1h)\cdot\mathrm{Id}=O_{n_k}(h-\int_0^1h)$, wlog assume $\int_0^1h=0$ and define the quantum variance for a fixed bin $I(n_k)$,
\begin{equation}
V_{n_k}:=\frac{1}{\#\{j:\theta^{(n_k,j)}\in I(n_k)\}}\sum_{j:\theta^{(n_k,j)}\in I(n_k)}\left|\langle \psi^{(n_k,j)},O_{n_k}(h)\psi^{(n_k,j)}\rangle\right|^2,
\end{equation}
which we will show tends to zero as $k\to\infty$. For a function $g:[0,1]\to\C$, define $[g]_T:=\frac{1}{T}\sum_{t=0}^{T-1}g\circ S^t$. Using that $\psi^{(n_k,j)}$ are eigenvectors of $U_{n_k}$, followed by the Egorov property and averaging over $t$ (stopping before $\frac{{\K}(n_k)}{2}$),
\begin{align*}
\langle \psi^{(n_k,j)},O_{n_k}(h)\psi^{(n_k,j)}\rangle &= \langle\psi^{(n_k,j)},(U_{n_k}^*)^tO_{n_k}(h)U_{n_k}^t\psi^{(n_k,j)}\rangle \\
&=\langle \psi^{(n_k,j)},O_{n_k}(h\circ S^t)\psi^{(n_k,j)}\rangle+\mathcal{O}\biggp{\frac{\|h\|_\mathrm{Lip}\cdot L_0^{t}}{n_k}}\\
&=\langle \psi^{(n_k,j)},O_{n_k}([h]_T)\psi^{(n_k,j)}\rangle+\mathcal{O}\biggp{\frac{\|h\|_\mathrm{Lip}\cdot L_0^{T}}{Tn_k}}.
\end{align*}
Then using the above followed by Cauchy--Schwarz,
\begin{align*}
|\langle \psi^{(n_k,j)},O_{n_k}(h)\psi^{(n_k,j)}\rangle|^2 &\le |\langle \psi^{(n_k,j)},O_{n_k}([h]_T)\psi^{(n_k,j)}\rangle|^2 + \mathcal{O}_h\biggp{\frac{L_0^{T}}{Tn_k}}\\
&\le \langle\psi^{(n_k,j)},O_{n_k}([h]^*_T)O_{n_k}([h]_T)\psi^{(n_k,j)}\rangle+\mathcal{O}_h\biggp{\frac{L_0^{T}}{Tn_k}},\numberthis
\end{align*}
where $\mathcal{O}_h(\cdot)$ indicates the constant in the big-$\mathcal{O}$ notation may depend on the function $h$.
For this quantization method, just a supremum norm bound on the integrand shows
\begin{align*}
|O_{n_k}(ab)_{xx}-O_{n_k}(a)_{xx}O_{n_k}(b)_{xx}| &= n\left|\int_{E_x}\left(a-\frac{1}{|E_x|}\int_{E_x}a\right)\left(b-\frac{1}{|E_x|}\int_{E_x}b\right)\right|\\
&\le \frac{\|a\|_{\mathrm{Lip}(E_x)}\|b\|_{\mathrm{Lip}(E_x)}}{n^2},
\end{align*}
so that
\begin{equation}
\|O_{n_k}(ab)-O_{n_k}(a)O_{n_k}(b)\|\le\max_{x\in\intbrr{1:n_k}}\frac{\|a\|_{\mathrm{Lip}(E_x)}\|b\|_{\mathrm{Lip}(E_x)}}{n_k^2}.
\end{equation}
Taking $T=\lfloor\frac{{\K}(n_k)}{2}\rfloor$, then for $t\le T$, $S^t$ is linear on every $E_x$ so that
$\|\frac{1}{T}\sum_{t=0}^{T-1}h\circ S^t\|_{\mathrm{Lip}(E_x)}\le  \frac{1}{T}\sum_{t=0}^{T-1}\|h\|_\mathrm{Lip}s_\mathrm{max}^t=\mathcal{O}_h(\frac{s_\mathrm{max}^T}{T})$, where $s_\mathrm{max}$ is the maximum absolute value of the slopes of $S$. Then
\begin{equation}\label{eqn:ipbound}
|\langle\psi^{(n_k,j)},O_{n_k}(h)\psi^{(n_k,j)}\rangle|^2\le \langle\psi^{(n_k,j)},O_{n_k}(|[h]_T|^2)\psi^{(n_k,j)}\rangle +\mathcal{O}_h\biggp{\frac{L_0^T}{Tn_k}}.
\end{equation}
Applying \eqref{eqn:ipbound} and the Weyl law analogue Corollary~\ref{cor:weyl}, followed by the pointwise Weyl analogue Theorem~\ref{thm:uweyl}, yields the bounds on the quantum variance,
\begin{align*}
&\frac{1}{\#\{j:\theta^{(n_k,j)}\in I(n_k)\}}\sum_{j:\theta^{(n_k,j)}\in I(n_k)}\left|\langle \psi^{(n_k,j)},O_{n_k}(h)\psi^{(n_k,j)}\rangle\right|^2 \\
&\qquad\le \frac{2\pi}{n_k|I(n_k)|(1+o(1))}\sum_{j:\theta^{(n_k,j)}\in I(n_k)}\langle\psi^{(n_k,j)},O_{n_k}(|[h]_T|^2)\psi^{(n_k,j)}\rangle+\mathcal{O}_h\biggp{\frac{L_0^{T}}{Tn_k}}\\
&\qquad\le\frac{2\pi(1+o(1))}{n_k|I(n_k)|}\left(\sum_{x\in G_{n_k}}\sum_{j:\theta^{(n_k,j)}\in I(n_k)}|\psi^{(n_k,j)}_x|^2O_{n_k}(|[h]_T|^2)_{xx}+\sum_{x\in B_{n_k}}\|h\|_\infty^2\right)+o(1)\\
&\qquad\le (1+o(1))\cdot \int_0^1\left|\frac{1}{T}\sum_{t=0}^{T-1}h(S^t(y))\right|^2\,dy+\frac{C\cdot L_0^{r({\K})}\|h\|_\infty^2}{n_k|I(n_k)|}+o(1)\xrightarrow{k\to\infty}0,
\end{align*}
using the $L^2$ ergodic theorem in the last line as $T=\lfloor\frac{{\K}(n_k)}{2}\rfloor\to\infty$.
\qed

The passage from decay of the quantum variance \eqref{eqn:qvariance} to the density one statement is by the usual method (for details see for example Theorem 15.5 in the textbook \cite{zworski}). To start, by Chebyshev--Markov with $\varepsilon=V_{n_k}^{1/4}$, Theorem~\ref{thm:qe} implies for a single Lipschitz function $h$, there is the sequence of sets $\Lambda_{n_k}(h)\subseteq\{j:\theta^{(n_k,j)}\in I(n_k)\}$ with 
\begin{equation}\label{eqn:density1}
\frac{\#\Lambda_{n_k}(h)}{\#\{j:\theta^{(n_k,j)}\in I(n_k)\}}\to1,
\end{equation}
such that for all sequences $(j_{n_k})_k$ with $j_{n_k}\in\Lambda_{n_k}(h)$, 
\begin{equation}\label{eqn:qe2}
\lim_{k\to\infty}\langle \psi^{(n_k,j_{n_k})},O_{n_k}(h)\psi^{(n_k,j_{n_k})}\rangle=\int_0^1h(x)\,dx.
\end{equation}
For a countable set of Lipschitz functions $(h_\ell)_\ell$, since finite intersections of sets $\Lambda_{n_k}$ satisfying \eqref{eqn:density1} also satisfy \eqref{eqn:density1}, we can assume $\Lambda_{n_k}(h_{\ell+1})\subseteq\Lambda_{n_k}(h_{\ell})$ for all $n_k$. Then for each $h_\ell$, let $N(\ell)>0$ be large enough so that for $n_k\ge N(\ell)$,
\begin{equation}
\frac{\#\Lambda_{n_k}(h_\ell)}{\#\{j:\theta^{(n_k,j)}\in I(n_k)\}}\ge 1-\frac{1}{\ell}.
\end{equation}
Take $N(\ell)$ increasing in $\ell$ and let $\Lambda_{n_k}^\infty:=\Lambda_{n_k}(h_\ell)$ for $N(\ell)\le n_k<N(\ell+1)$, so
\eqref{eqn:qe2}
 holds for sequences in $\Lambda^\infty_{n_k}$ and $h_\ell$ in the countable set. 
Then take $(h_\ell)_\ell$ to be a countable set of Lipschitz functions that are dense in $(C([0,1]),\|\cdot\|_\infty)$, so that for any $h\in C([0,1])$,
\begin{multline*}
\left|\langle \psi^{(n_k,j_{n_k})},O_{n_k}(h)\psi^{(n_k,j_{n_k})}\rangle-\int_0^1 h\right| \\
\le \left|\langle \psi^{(n_k,j_{n_k})},O_{n_k}(h-h_\ell)\psi^{(n_k,j_{n_k})}\rangle\right|+\\
+\left|\langle\psi^{(n_k,j_{n_k})},O_{n_k}(h_\ell)\psi^{(n_k,j_{n_k})}\rangle-\int_0^1 h_\ell\right|
+\left|\int_0^1(h_\ell-h)\right|.
\end{multline*}
The terms on the right side are bounded by $\|h-h_\ell\|_\infty$ or are $o(1)$ as $k\to\infty$.

\section{Random Gaussian eigenvectors}\label{sec:gaussian}

In this section we prove Theorems~\ref{thm:r-eigenvectors} and \ref{cor:gvectors} on random unitary rotations of bins of eigenvectors. 
We are interested in the value statistics of a random unit vector $v$ chosen from $V=\operatorname{span}\{\psi^{(n,j)}:\theta^{(n,j)}\in I(n)\}$. These coordinate values are just the one-dimensional projections of $v$ onto the standard basis vectors $\{e_1,\ldots,e_n\}$.
The behavior of such low-dimensional projections of high-dimensional vectors has been well-studied since the 1970s for its applications in analyzing large data sets; see for example the survey \cite{huber} for an overview of the early history and motivation of ``projection pursuit'' methods. The marginals of high-dimensional random vectors are often known to look approximately Gaussian, with precise conditions first proved by Diaconis and Freedman in \cite{DiaconisFreedman}. We will apply a quantitative version due to Meckes \cite{Meckes} and Chatterjee and Meckes \cite{ChatterjeeMeckes} to show the desired Gaussian coordinate value behavior for an orthonormal basis of such randomly chosen vectors.

\subsection{Random projections and bases}

First we start with Theorem~\ref{thm:r-eigenvectors}, which concerns the coordinate values of a single random unit vector in $V=\operatorname{span}\{\psi^{(n,j)}:\theta^{(n,j)}\in I(n)\}$. Let $M_V$ be the $n\times(\operatorname{dim}V)$ matrix whose columns are the basis vectors $(\psi^{(n,j)})$ of $V$. Then $P^{I(n)}:=M_VM_V^*$ is the projection onto this space, and a random unit vector $\phi$ in the span is chosen according to $\phi\sim g/\|g\|_2\sim M_Vu$, where $g\sim N_\C(0,P^{I(n)})$ and $u\sim\operatorname{Unif}(\mathbb{S}_\C^{\operatorname{dim}V-1})$.
The coordinates $\phi_1=\langle\phi,e_1\rangle,\phi_2,\ldots,\phi_n$ are
\[
\langle u,M_V^*e_1\rangle,\langle u,M_V^*e_2\rangle,\ldots,\langle u,M_V^*e_n\rangle
\]
which is a 1-dimensional projection in the direction $u\in\C^{\operatorname{dim}V}$ of the data set $\{M_V^*e_1,\ldots,M_V^*e_n\}\subset\C^{\operatorname{dim}V}$. Since $\sum_{x=1}^n|\phi_x|^2=1$, we will use the scaled data set $\sqrt{n}\{M_V^*e_1,\ldots,M_V^*e_n\}$. 

The following convergence result, which will prove Theorem~\ref{thm:r-eigenvectors}, is as follows. It is a quantitative version adapted to our case from \cite{Meckes}, also using \cite{ChatterjeeMeckes}, of the theorem from \cite{DiaconisFreedman}.
\begin{thm}[Complex projection version of Theorem 2 in \cite{Meckes}]\label{thm:complex-projection}
Let $P^{(\nu)}$ be an $n\times n$ self-adjoint projection matrix onto a $d$-dimensional subspace $V^{(\nu)}$ of $\C^n$, and suppose
\begin{align}
\sum_{x=1}^n\left|\|P^{(\nu)}e_x\|_2^2-\frac{d}{n}\right|&\le A.
\end{align}
Let $\omega=(\omega_1,\ldots,\omega_n)$ be chosen uniformly at random from the $(d-1)$-dimensional sphere $\mathbb{S}(V^{(\nu)}):=\{v\in V^{(\nu)}:\|v\|=1\}$, and define the empirical distribution $\tilde{\mu}_\omega^{(\nu)}$ of the coordinates of $\omega$ scaled by $\sqrt{n}$,
\begin{equation*}
\tilde{\mu}_\omega^{(\nu)} := \frac{1}{n}\sum_{x=1}^n\delta_{\sqrt{n}\omega_x}.
\end{equation*}
There are absolute numerical constants $C,c>0$ so that for $f:\C\to\C$ bounded $L$-Lipschitz and $\varepsilon>\frac{2L(A+3)}{d-1}$,
\begin{equation}\label{eqn:expbound}
\mathbb{P}\left[\left|\int f(x)\,d\mu^{(\nu)}_\omega(x)-\mathbb{E}f( Z)\right|>\varepsilon\right]\le C\exp(-c\varepsilon^2d/L^2),
\end{equation}
where $Z\sim N_\C(0,1)$.
\end{thm}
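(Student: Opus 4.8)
The plan is to obtain Theorem~\ref{thm:complex-projection} as a direct specialization of Theorem~\ref{thm:MeckesC}, via a change of variables that realizes the rescaled coordinates of a random unit vector in $V^{(\nu)}$ as a one-dimensional random projection of a fixed point cloud in $\C^d$. Fix an orthonormal basis of $V^{(\nu)}$ and let $M$ be the $n\times d$ matrix having these vectors as columns, so that $P^{(\nu)}=MM^*$ and $M^*M=\operatorname{Id}_{\C^d}$; in particular $M$ is a linear isometry of $\C^d$ into $\C^n$. First I would apply Theorem~\ref{thm:MeckesC} to the deterministic vectors $x_j:=\sqrt{n}\,M^*e_j\in\C^d$ for $j=1,\dots,n$. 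Since $\|M^*e_j\|^2=\langle e_j,MM^*e_j\rangle=\|P^{(\nu)}e_j\|_2^2$ and $\sum_{j=1}^n\|P^{(\nu)}e_j\|_2^2=\operatorname{tr}P^{(\nu)}=d$, the normalization $\sigma^2=\frac{1}{nd}\sum_j\|x_j\|^2=1$ holds automatically, and the first hypothesis of Theorem~\ref{thm:MeckesC} becomes $\frac{1}{n}\sum_j\big|\,\|x_j\|^2-d\,\big|=\sum_{x=1}^n\big|\,\|P^{(\nu)}e_x\|_2^2-\tfrac{d}{n}\,\big|\le A$, which is precisely our standing assumption.

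The one computation worth isolating is the bound \eqref{eqn:B}: for every $\theta\in\mathbb{S}_\C^{d-1}$,
\[
\frac{1}{n}\sum_{j=1}^n|\langle\theta,x_j\rangle|^2=\sum_{j=1}^n|\langle M\theta,e_j\rangle|^2=\|M\theta\|^2=\|\theta\|^2=1,
\]
using that $M$ is an isometry. Hence Theorem~\ref{thm:MeckesC} applies with $B=1$, which is exactly why the exponent in \eqref{eqn:expbound} carries no $B$-dependence. It then remains to match up the random objects: if $\theta\sim\operatorname{Unif}(\mathbb{S}_\C^{d-1})$ then $\omega:=M\theta$ is uniformly distributed on $\mathbb{S}(V^{(\nu)})=\{v\in V^{(\nu)}:\|v\|=1\}$, because $M$ maps $\mathbb{S}_\C^{d-1}$ isometrically onto $\mathbb{S}(V^{(\nu)})$; moreover $\langle\theta,x_j\rangle=\sqrt{n}\,\langle M\theta,e_j\rangle=\sqrt{n}\,\omega_j$, so the empirical measure $\mu^{(n)}_\theta$ of Theorem~\ref{thm:MeckesC} coincides with $\tilde{\mu}^{(\nu)}_\omega$. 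Substituting $B=1$ into the quantitative bound of Theorem~\ref{thm:MeckesC} then gives \eqref{eqn:expbound} for all $\varepsilon>\frac{2L(A+3)}{d-1}$, which is the assertion.

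There is no genuine analytic difficulty here; the entire substance lies in Theorem~\ref{thm:MeckesC} (and thus in \cite{Meckes,ChatterjeeMeckes}), and what remains is bookkeeping. The only point that merits a line of justification is the identification of $\operatorname{Unif}(\mathbb{S}(V^{(\nu)}))$ with the pushforward under $M$ of $\operatorname{Unif}(\mathbb{S}_\C^{d-1})$, which is immediate once one recalls that the uniform measure on a complex unit sphere is the unique Borel probability measure invariant under the ambient unitary group and that $M$ intertwines these group actions --- equivalently, one may simply take this pushforward as the \emph{definition} of the uniform measure on $\mathbb{S}(V^{(\nu)})$, consistent with the Gaussian description $\omega\sim g/\|g\|_2$, $g\sim N_\C(0,P^{(\nu)})$ recalled in Section~\ref{sec:gaussian}.
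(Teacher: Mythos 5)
Your proof is correct and follows essentially the same route as the paper: form the $n\times d$ matrix $M$ of orthonormal columns spanning $V^{(\nu)}$, apply Theorem~\ref{thm:MeckesC} to the point cloud $\sqrt{n}M^*e_1,\dots,\sqrt{n}M^*e_n$, verify $\sigma^2=1$, translate the hypothesis on $\|P^{(\nu)}e_x\|_2^2$ into the bound on $A$, compute $B=1$ from $M^*M=\mathrm{Id}$, and identify $\mu^{(n)}_\theta$ with $\tilde\mu^{(\nu)}_\omega$ via $\omega=M\theta$. The extra remark justifying the pushforward identification of uniform measures by $U(n)$-invariance is a clean way to close the one step the paper leaves implicit.
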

\begin{rmk}
We will consider $d=d(n_k)\to\infty$ as $n_k\to\infty$. As will be shown by \eqref{eqn:A} and \eqref{eqn:2Abound}, the pointwise Weyl law analogue Theorem~\ref{thm:uweyl} shows $A=o(d)$ with $d=\frac{n_k|I(n_k)|}{2\pi}(1+o(1))$, so the above theorem proves Theorem~\ref{thm:r-eigenvectors}.
\end{rmk}
The proof outline of Theorem~\ref{thm:complex-projection}, which will be a consequence of results from \cite{Meckes} and \cite{ChatterjeeMeckes}, is given in Appendix~\ref{sec:details}.

Theorem~\ref{thm:complex-projection} provides a bound for the probability that a single randomly chosen vector does not look Gaussian. Because the quantitative bound \eqref{eqn:expbound} decays quickly, a simple union bound gives a bound on finding an entire orthonormal basis that looks Gaussian (Corollary~\ref{cor:randomGaussian} below). This family of random orthonormal bases will then be used  to construct the unitary matrices $V_{n_k}(\beta^{[n_k]})$ in Theorem~\ref{cor:gvectors}.

\begin{cor}[Random Gaussian basis]\label{cor:randomGaussian}
Let $\C^n=V^{[1]}\oplus\cdots\oplus V^{[\kappa]}$, and let $P^{[\ell]}$ be the orthogonal projection onto the subspace $V^{[\ell]}$. Suppose there is $A$ and $d_1,\ldots,d_\kappa\in\R_+$ so that
\begin{equation}
\sum_{x=1}^n\left|\|P^{[\ell]}e_x\|_2^2-\frac{d_\ell}{n}\right|\le A,\quad \forall \ell\in\intbrr{\kappa}.
\end{equation}
Choose a random orthonormal basis $(\phi^{[j]})_{j=1}^n$ for $\C^n$ by choosing a random orthonormal basis from each $V^{[\ell]}$ (according to Haar measure) and embedding it by inclusion in $\C^n$. For each $j\in\intbrr{n}$, let 
\[
\mu^{[j]}:=\frac{1}{n}\sum_{x=1}^n\delta_{\sqrt{n}\phi^{[j]}_x},
\]
the empirical distribution for the $j$th basis vector's coordinates. There are absolute numerical constants $C,c>0$ so that for any $f:\C\to\C$ bounded $L$-Lipschitz and $\varepsilon>\frac{2L(2A+3)}{(\min d_\ell-A)-1}$,
\begin{equation}
\mathbb{P}\left[\max_{j\in\intbrr{n}}\left|\int f(x)\,d\mu^{[j]}(x)-\mathbb{E}f( Z)\right|>\varepsilon\right]\le Cn\exp\left(-\frac{c\varepsilon^2(\min d_\ell-A)}{L^2}\right),
\end{equation}
where $Z\sim N_\C(0,1)$.
\end{cor}
\begin{proof}
Recall that $\|P^{[\ell]}e_x\|_2^2=P^{[\ell]}_{xx}$, since $P^{[\ell]}$ is orthogonal projection. The numbers $d_\ell$ need not be the dimensions of $V^{[\ell]}$, but since
\begin{align}
\left|\frac{\operatorname{dim} V^{[\ell]}}{n}-\frac{d_\ell}{n}\right|&=\left|\frac{1}{n}\sum_{x=1}^n\Big(\|P^{[\ell]}e_x\|_2^2-\frac{d_\ell}{n}\Big)\right|\le\frac{1}{n}A,
\end{align}
then 
\begin{equation}\label{eqn:2Abound}
\sum_{x=1}^n\left|\|P^{[\ell]}e_x\|_2^2-\frac{\dim V^{[\ell]}}{n}\right|\le 2A.
\end{equation}
\begin{sloppypar}
Then Theorem~\ref{thm:complex-projection} implies for $\varepsilon>\frac{2L(2A+3)}{\min d_\ell-A-1}\ge\frac{2L(2A+3)}{\min\dim V^{[j]}-1}$, that 
$R_\ell(f):=\{\omega\in \mathbb{S}(V^{[\ell]}):\left|\int f(x)\,d\mu_\omega(x)-\mathbb{E}f(Z)\right|>\varepsilon\}$ has small measure $\le C\exp(-c\varepsilon^2\dim V^{[\ell]}/L^2)$. By Lemma~\ref{lem:onb} below, a random orthonormal basis for $V^{[\ell]}$ avoids $R_\ell(f)$ with probability at least $1-\dim V^{[\ell]}\cdot C\exp(-c\varepsilon^2\dim V^{[\ell]}/L^2)$. Thus letting $I_\ell\subset\intbrr{n}$ be the set of indices $j$ corresponding to $V^{[\ell]}$,
\begin{align*}
\mathbb{P}\Big[\max_{j\in\intbrr{n}}\Big|\int f(x)\,d\mu^{[j]}(x)&-\mathbb{E}f( Z)\Big|>\varepsilon\Big] \\
&\le \sum_{\ell=1}^\kappa\mathbb{P}\left[\left|\int f(x)\,d\mu^{[j]}(x)-\mathbb{E}f(Z)\right|>\varepsilon\text{ for some }j\in I_\ell\right]\\
&\le \sum_{\ell=1}^\kappa \operatorname{dim}V^{[\ell]}\cdot C\exp\left(-c\varepsilon^2\operatorname{dim}V^{[\ell]}/L^2\right)\\
&\le Cn\exp\left(-c\varepsilon^2\min \operatorname{dim}V^{[\ell]}/L^2\right)\le Cn\exp\left(-c\varepsilon^2(\min d_\ell-A)/L^2\right).
\end{align*}
\end{sloppypar}
\end{proof}

\begin{lem}[union bound for random ONB]\label{lem:onb}
Let $B\subset\mathbb{S}_\C^{d-1}$ and let $\sigma$ be the uniform surface measure on $\mathbb{S}_\C^{d-1}$ normalized so $\sigma(\mathbb{S}_{\C}^{d-1})=1$. Then a random orthonormal basis of $\C^d$ (chosen uniformly from Haar measure on the unitary group $U(d)$) avoids $B$ with probability at least $1-d\sigma(B)$.
\end{lem}
\begin{proof}
Let $\mu$ be normalized Haar measure on $U(d)$. Then for any $x\in\mathbb{S}_\C^{d-1}$, $\sigma(A)=\mu(g\in U(d):g(x)\in A)$. By union bound, letting $\{e_j\}$ be the standard basis,
\begin{align*}
\mu(\{g\in U(d):g(e_j)\in B\text{ for some }j\in\intbrr{1:d}\})&\le d\cdot \mu(\{g\in U(d):g(e_1)\in B\})\\
&= d\cdot\sigma(B),
\end{align*}
so $\mu(\{g\in U(d):\forall j\in\intbrr{1:d},\;g(e_j)\not\in B\})\ge 1-d\sigma(B)$.
\end{proof}

\subsection{Proof of Theorem~\ref{cor:gvectors}}\label{subsec:proof-gvectors}
Choose $\kappa(n_k)\in\N$ so that if we
divide $[0,2\pi]$ up into $\kappa(n_k)$ equal sized intervals $I_1(n_k),\ldots,I_{\kappa(n_k)}(n_k)$, then equation \eqref{eqn:km}, i.e. $|I(n_k)|{\K}(n_k)\to\infty$ as $k\to\infty$, holds for $|I(n_k)|=\frac{2\pi}{\kappa(n_k)}$. 
Let $\psi^{(n_k,j)}$ be the $j$th eigenvector of $U_{n_k}$.
Construct $V_{n_k}(\beta^{[n_k]})$ by taking a random unitary rotation (according to Haar measure) of the eigenvectors $\{\psi^{(n_k,j)}:\theta^{(n_k,j)}\in I_\ell(n_k)\}$, independently for each interval $I_\ell(n_k)$. 
The parameter $\beta^{[n_k]}$ represents these choices of random rotations of the eigenvectors within the intervals. 
Then perturb any degenerate eigenvalues to be simple by shifting them a small amount, while still keeping them in the same bin.
Denote the resulting eigenvectors of ${V_{n_k}}(\beta^{[n_k]})$  by $\phi^{[n_k,j]}_{(\beta)}$. In what follows the constant $C$ is a numerical constant that may change from line to line.
\begin{enumerate}[(a),leftmargin=.6cm,itemindent=0cm,labelwidth=\itemindent]

\item[(a)] Let $\tilde{U}_{n_k}$ be the perturbation of $U_{n_k}$ obtained by reassigning all eigenvalues in the same bin $I_\ell(n_k)$ to a single value $e^{i\Theta_\ell}$ in the bin. Then 
\[
\|(U_{n_k}-\tilde{U}_{n_k})v\|_2^2=\sum_{j=1}^n|e^{i\theta^{(j)}}-e^{i\Theta_\ell(j)}|^2|\langle\psi^{(j)},v\rangle|^2\le C\frac{(2\pi)^2}{\kappa(n_k)^2}\|v\|_2^2,
\]
since the reassigned eigenvalues are still in the same bin. Also, $\|\tilde{U}_{n_k}-V_{n_k}(\beta^{[n_k]})\|\le C\frac{2\pi}{\kappa(n_k)}$ by the same computation, since $\tilde{U}_{n_k}$ has degenerate eigenspaces that can be rotated to match the eigenvectors of $V_{n_k}(\beta^{[n_k]})$. Thus for any random $V_{n_k}(\beta^{[n_k]})$, $\|U_{n_k}-V_{n_k}(\beta^{[n_k]})\|\le C\frac{2\pi}{\kappa(n_k)}=o(1)$.
The Egorov property for $U_{n_k}$, Theorem~\ref{thm:egorov}, then implies the weaker Egorov property for $V_{n_k}(\beta^{[n_k]})$, since if $A$ and $B$ are unitary, then
\begin{align*}
\|AMA^{-1}-BMB^{-1}\| &=\|(A-B)MA^{-1}+BM(A^{-1}-B^{-1})
\le 2\|A-B\|\|M\|,
\end{align*}
and this also holds if we replace $M$ with $M-c\cdot\mathrm{Id}$ for any $c\in\C$ like $c=\int_0^1h$ or $c=h(0)$.

\item[(b)] 

To show Gaussian behavior, we first show there is $\varepsilon(n_k)\to0$ so that for any bounded Lipschitz $f:\C\to\C$, as $k\to\infty$,
\begin{equation}\label{eqn:gaussian}
\mathbb{P}\left[\max_{j\in\intbrr{1:n_k}}\left|\int f(x)\,d\mu^{[n_k,j]}_\beta(x)-\mathbb{E}f(Z)\right|>\|f\|_\mathrm{Lip}\varepsilon(n_k)\right]\le Cn_k\exp(-cn_k^{1/2}|I(n_k)|^{1/2}),
\end{equation}
where $Z\sim N_\C(0,1)$. 
A density argument followed by tightness will then complete the proof of (b).

To show \eqref{eqn:gaussian}, note that for any $W^{[\ell]}=\operatorname{span}\{\psi^{(n_k,j)}:\theta^{(j)}\in I_\ell(n_k)\}$ and $P^{[\ell]}$ the orthogonal projection onto $W^{[\ell]}$, the pointwise Weyl law Theorem~\ref{thm:uweyl} implies
\begin{align}\label{eqn:A}
\sum_{x=1}^n\left|\|P^{[\ell]}e_x\|_2^2-\frac{|I(n_k)|}{2\pi}\right| &\le \sum_{x\in G_{n_k}}\frac{|I(n_k)|}{2\pi}o(1) + \sum_{x\in B_{n_k}} 2 = o(n_k|I_\ell(n_k)|),
\end{align}
so the quantity $A$ in Corollary~\ref{cor:randomGaussian} can be taken to be $o(n_k|I_\ell(n_k)|)$.
Let $\mu^{[n_k,j]}_{(\beta)}$ be the coordinate distribution of the $j$th eigenvector $\phi^{[n_k,j]}_{(\beta)}$ of $V_{n_k}(\beta^{[n_k]})$.
Then applying Corollary~\ref{cor:randomGaussian} with all $d_\ell=\frac{n_k|I(n_k)|}{2\pi}$ and 
\begin{equation*}
\varepsilon(n_k)=\max\Big(\frac{4A+6}{(d_\ell-A)-1},\frac{1}{(n_k|I(n_k)|)^{1/4}}\Big)\to0,
\end{equation*}
this yields for $Z\sim N_\C(0,1)$,
\begin{align*}
\mathbb{P}\Big[\max_{j\in\intbrr{1:n_k}}\Big|\int_\C f(x)\,d\mu^{[n_k,j]}_{(\beta)}(x)&\mathbb{E}f( Z)\Big|>\|f\|_\mathrm{Lip}\varepsilon(n_k)\Big]\\
&\le Cn_k\exp\left(-c\varepsilon(n_k)^2(n_k|I(n_k)|-2\pi A)\right)\\
&\le Cn_k\exp\left(-cn_k^{1/2}|I(n_k)|^{1/2}\Big(1-\frac{2\pi A}{n_k|I(n_k)|}\Big)\right).\numberthis\label{eqn:expbound2}
\end{align*}

Now let $(f_\ell)_\ell$ be a countable set of Lipschitz functions with compact support that are dense in $C_c(\C)$, 
and set
\begin{align*}
\Pi_{n_k}=\Big\{V_{n_k}(\beta^{[n_k]}):\forall \ell\in\intbrr{1:n_k},{j\in\intbrr{1:n_k}},
\left|\int_\C f_\ell(x)\,d\mu^{[n_k,j]}_{(\beta)}-\mathbb{E}f_\ell(Z)\right|\le\|f_\ell\|_{\mathrm{Lip}} \varepsilon(n_k)\Big\}.
\end{align*}
Then
\begin{equation}
\mathbb{P}[\Pi_{n_k}^c]\le Cn_k^2\exp(-cn_k^{1/2}|I(n_k)|^{1/2})\to0,
\end{equation}
since by \eqref{eqn:km}, $n_k^{1/2}|I(n_k)|^{1/2}\gg 2\log n_k$.
For a sequence of matrices $(\tilde{V}_{n_k})_k$ with $\tilde{V}_{n_k}\in\Pi_{n_k}$, let $\tilde{\mu}^{[n_k,j]}$ be the scaled coordinate distribution of the $j$th eigenvector $\tilde{\phi}^{[n_k,j]}$ of $\tilde{V}_{n_k}$. By definition of $\Pi_{n_k}$, we know for any $f_\ell$ that $\int f_\ell\,d\mu^{[n_k,j_{n_k}]}\to\mathbb{E}f_\ell(Z)$ as $k\to\infty$, for any sequence $(j_{n_k})_k$ with $j_{n_k}\in\intbrr{1:n_k}$. Denseness of $(f_\ell)_\ell$ shows that this holds for all $f\in C_c(\C)$ as well.
Then $(\tilde{\mu}^{[n_k,j_{n_k}]})_k$ is tight, and with the vague convergence we get weak convergence of $\tilde{\mu}^{[n_k,j_{n_k}]}$ to $N_\C(0,1)$.

\item[(c)]
To show QUE, like in (b), we first show there is $\varepsilon(n_k)\to0$ so that for any bounded Lipschitz $h:[0,1]\to\C$, as $k\to\infty$,
\begin{multline}\label{eqn:que}
\mathbb{P}\left[\max_{j\in\intbrr{1:n_k}}\left|\langle\phi^{[n_k,j]}_{(\beta)},O_{n_k}(h)\phi^{[n_k,j]}_{(\beta)}\rangle - \int_0^1h(x)\,dx\right|>\|h\|_\infty\varepsilon(n_k)\right]
\le Cn_k\exp(-c n_k^{1/2}|I(n_k)|^{1/2}).
\end{multline}
This is done by the same argument presented in \cite{randomQUE} using the Hanson--Wright inequality \cite{hanson-wright}. After proving \eqref{eqn:que}, part (c) follows from density like in (b).

\begin{sloppypar}
For $W^{[\ell]}$ with dimension $d$, let $M_{W^{[\ell]}}$ be an $n\times d$ matrix whose $d$ columns form an orthonormal basis for $W^{[\ell]}$.
Then $\phi^{[n_k,j]}$ chosen randomly from $\mathbb{S}(W^{[\ell]})$ is distributed like $M_{W^{[\ell]}}u$ for $u\sim\operatorname{Unif}(\mathbb{S}_\C^{d-1})$, and
\begin{equation}
\langle \phi^{[n_k,j]},O_{n_k}(h)\phi^{[n_k,j]}\rangle_{\C^n}\sim \langle u,(M_{W^{[\ell]}}^*O_{n_k}(h)M_{W^{[\ell]}})u\rangle_{\C^d}.
\end{equation}
The Hanson--Wright inequality combined with subgaussian concentration on the norm $\|N_\C(0,I_d)\|_2$ shows that $\langle u,(M_{W^{[\ell]}}^*O_{n_k}(h)M_{W^{[\ell]}})u\rangle$ concentrates around its mean $\frac{1}{d}\operatorname{tr}(M_{W^{[\ell]}}^*O_{n_k}(h)M_{W^{[\ell]}})$ (see \cite[Theorem 4.1]{randomQUE} for details), which by the pointwise Wey law Theorem~\ref{thm:uweyl} is $\int_0^1h+R(n_k,h)$ with $|R(n_k,h)|\le R(n_k)\|h\|_\infty$, some $R(n_k)\to0$. In particular, for any $\varepsilon>2|R(n_k,h)|$,
\begin{equation}
\mathbb{P}\left[\left|\langle\phi^{[n_k,j]}_{(\beta)},O_{n_k}(h)\phi^{[n_k,j]}_{(\beta)}\rangle-\int_0^1h\right|>\varepsilon\right] \le C_1\exp\left(-C_2\min\left(\frac{\varepsilon^2}{4\|h\|_\infty^2},\frac{\varepsilon}{2\|h\|_\infty}\right) \cdot d\right).
\end{equation}
Then taking $\varepsilon(n_k)=\max\left(2R(n_k),\frac{1}{(n_k|I(n_k)|)^{1/4}}\right)$ and applying a union bound like with Lemma~\ref{lem:onb} yields \eqref{eqn:que}, using that eventually $\min(\varepsilon(n_k)^2,\varepsilon(n_k))=\varepsilon(n_k)^2$.
\end{sloppypar}

\begin{sloppypar}
Next, taking $(h_\ell)_\ell$ to be a countable dense set of Lipschitz functions in $C([0,1])$, let
\begin{multline}
\Gamma_{n_k}=\left\{V_{n_k}(\beta^{[n_k]}):\forall\ell\in\intbrr{1:n_k}, {j\in\intbrr{1:n_k}},\vphantom{\int_0^1}\right.\\
\left.\left|\langle\phi^{[n_k,j]}_{(\beta)},O_{n_k}(h_\ell)\phi^{[n_k,j]}_{(\beta)}\rangle-\int_0^1h_\ell(x)\,dx\right|
\le\|h_\ell\|_\infty\varepsilon(n_k)\right\}.
\end{multline}
Then 
$\mathbb{P}[\Gamma_{n_k}^c]\le C_1n_k^2\exp(-C_2n_k^{1/2}|I(n_k)|^{1/2})\to0$, and denseness shows that for sequences $(\tilde{V}_{n_k})_k$ with $\tilde{V}_{n_k}\in\Gamma_{n_k}$ and with eigenvectors denoted by $\tilde{\phi}^{[n_k,j]}$, that $\langle\tilde{\phi}^{[n_k,j_{n_k}]},O_{n_k}(h)\tilde{\phi}^{[n_k,j_{n_k}]}\rangle\to\int_0^1h$ for all $h\in C([0,1])$ as well.
\end{sloppypar}

\item[(d)] To make the spectrum simple, we simply perturbed any degenerate eigenvalues while keeping them in the same bin.

\item[(e)] This follows from $\|U_{n_k}-V_{n_k}(\beta^{[n_k]})\|\le C\cdot\frac{2\pi}{\kappa(n_k)}$, since for any matrices $U$ and $V$ with entries $|\cdot|\le1$, 
\[
\left||V_{xy}|^2-|U_{xy}|^2\right|\le 2\left||V_{xy}|-|U_{xy}|\right|\le2|V_{xy}-U_{xy}|\le2\|V-U\|.
\] 
\qed
\end{enumerate}

\section{The doubling map with any even $n$}\label{sec:doubling-even}

Recall if $S:[0,1]\to[0,1]$ is the doubling map $S(x)=2x\pmod{1}$, then for any $n\in2\N$, the $n\times n$ Markov matrix $P_n$ along with a specific quantization $U_n$ can be taken as in \eqref{eqn:doubling-pu}.
For general maps $S$, in Theorem~\ref{thm:uweyl}, we restricted to dimensions $n\in M_0\Z$ with ${\K}(n)\to\infty$. This ensured that enough powers of $P_n$ behaved nicely with the partitions (Lemma~\ref{lem:ppowers}). For the doubling map, $M_0=2$ and we can take all $n\in 2\Z$, not just those with the largest power of two dividing $n$ tending to infinity.
The statement  is as follows (note that the quantization $U_n$ does not have to be the real orthogonal one in  \eqref{eqn:doubling-pu}).

\begin{thm}[pointwise Weyl law analogue for the doubling map]\label{prop:dweyl}
For $n\in2\N$, let $P_n$ be the matrix in \eqref{eqn:doubling-pu}, and let $U_n$ satisfy $|(U_n)_{xy}|^2=(P_n)_{xy}$. Denote the  eigenvalues and eigenvectors of $U_n$ by $(e^{i \theta^{(n)}_j})_j$ and $(\psi^{(n,j)})_j$ respectively for $j\in\intbrr{n}$. Let $(I(n))$ be a sequence of intervals in $\R/(2\pi \Z)$ satisfying
\begin{equation}
{|I(n)| \log n}\to\infty,\quad\text{as }n\to\infty.
\end{equation} 
Then there is a sequence of subsets $G_n\subseteq \intbrr{n}$ with sizes $\#G_n=n(1-o(|I(n)|))$ so that for all $x\in G_n$,
\begin{equation}
\sum_{j:\theta^{(n,j)}\in I(n)}|\psi_x^{(n,j)}|^2=\frac{|I(n)|}{2\pi}(1+o(1)),
\end{equation}
where the error term $o(1)$ depends only on $n$, $|I(n)|$, and $\#G_n$, and is independent of $x\in G_n$. Additionally, $G_{n_k}$ can be chosen independent of $I(n_k)$ or $|I(n_k)|$.
\end{thm}

The proof is the same as Theorem~\ref{thm:uweyl}, except that Lemma~\ref{lem:nonzero}, which bounds the number of nonzero entries on the diagonal of $P_n^\ell$, is proved differently.
To analyze the matrix powers $P_n^\ell$, instead of viewing them in terms of $S^\ell$, we count paths of length $\ell$ in the directed graph associated with the Markov matrix $P_n$. The proof of Theorem~\ref{prop:dweyl} then follows from the following lemma and by replacing all instances of ${\K}(n)+1$ by $K:=\lfloor\log_2n\rfloor$ in the proof of Theorem~\ref{thm:uweyl}.

\begin{lem}[number of nonzero entries for the doubling map]\label{lem:doubling-nonzero}
For $n\in2\N$, let $P_n$ be as in \eqref{eqn:doubling-pu} and let $1\le\ell\le K$, where $K=\lfloor\log_2n\rfloor$. Consider the directed graph with $n$ nodes $1,2,\ldots,n$, whose adjacency matrix is $2P_n$. Then:
\begin{enumerate}[(i)]
\item For any coordinates $x,y$, there is at most one path of length $\ell$ from $x$ to $y$ in the graph.
\item The diagonal of $P_n^\ell$ contains at most $2\cdot 2^\ell$ nonzero entries.
\item In total, $P_n^\ell$ has exactly $n\cdot 2^\ell$ nonzero entries.
\end{enumerate}
\end{lem}
\begin{proof}
All possible paths starting at a node $x$ and of length $\ell$ can be represented as paths in a binary tree of height $\ell$ with root node (height 0) $x$. (Figure~\ref{fig:tree}.) The nodes $1,2,\ldots,n$ of the graph may be listed multiple times in the binary tree.
If we always put the descendant $2x-1$ on the left and put $2x$ on the right, then the list of nodes in each row of the tree will be consecutive increasing in $\Z/n\Z$. Thus if $\ell\le K:=\lfloor\log_2 n\rfloor$, the $\ell$th row of the tree will contain $2^\ell\le n$ nodes, so each label $y$ can appear at most once. This implies for any two nodes $x$ and $y$, there is at most one path of length $\ell$ from $x$ to $y$, proving part (i).

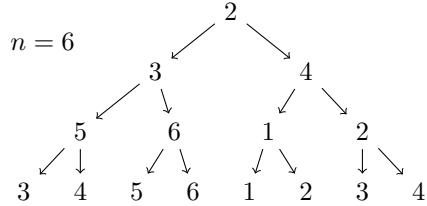
\begin{figure}[ht]
\centering
\begin{tikzpicture}[yscale=.8]
\node (0) at (0,0) {$2$};
\node (11) at (-1,-1) {$3$};
\node (12) at (1,-1) {$4$};
\foreach \i[evaluate={\c=int(6-Mod(-(\i+4),6));},evaluate={\d=int(6-Mod(-(\i+2),6));}] in {1,2,3,4}
{
\node (2\i) at (-3.25+1.25*\i,-2) {\c};
\node (3\i) at (-3.5+.75*\i,-3) {\d};
\node (4\i) at (-.5+.75*\i,-3) {\i};
}
\path[->]
	(0) edge (11)
	(11) edge (21)
	(11) edge (22)
	(12) edge (23)
	(12) edge (24)
	(21) edge (31)
	(22) edge (33)
	(23) edge (41)
	(23) edge (42)
	(24) edge (43)
	(24) edge (44);
\path[->,black]
	(0) edge node[above right] {} (12)
	(21) edge node[right] {} (32)
	(22) edge node[right] {} (34);
\node at (-2.5,-.5) {$n=6$};
\end{tikzpicture}
\caption{Start of a binary tree for $n=6$ ($K=2$). 
This tree describes all paths of length 3 that start at node 2.}\label{fig:tree}
\end{figure}

Applying part (i), the  total number of nonzero entries on the diagonal of $P_n^\ell$ is the total number of paths of length $\ell\le K$ with the same start and end point $x$.
Similarly, the total number of nonzero entries in $P_n^\ell$ is the total number of length $\ell$ paths from any $x$ to any $y$. 
The collection of all paths of length $\ell$ can be represented by the paths in a forest of $n$ binary trees each of depth $\ell$, one tree for each possible starting node $x\in\intbrr{n}$. (Figure~\ref{fig:forest}.) The $\ell$th row contains $2^\ell\cdot n$ numbers, showing part (iii).

\begin{figure}[!ht]
\centering
\begin{tikzpicture}[yscale=.8]
\foreach \x in {1,2}{
	\node at (1.5*\x,0) {$\x$};
	\draw (1.5*\x+.4,-1.5)--(1.5*\x,-.5)--(1.5*\x-.4,-1.5);
	\draw[yshift=-.15cm,decorate, decoration={brace,amplitude=5pt}] (1.5*\x+.4,-2)--(1.5*\x-.4,-2);
}
\node[below] at (1.5,-2.3) {$2^\ell$};
\node[below] at (3,-2.3) {$2^\ell$};
\node at (6,0) {$\ldots\,\ldots$};
\node[below] at (1.5,-1.5) {$[1,2,\ldots$};
\node[below] at (3,-1.67) {$\ldots$};
\node[below,xshift=.5cm] at (4.8,-1.5) {$n][1,2,\ldots][$};
\node at (6,-1) {$\ldots\,\ldots$};
\node at (3.9,0) {$\ldots$};
\node at (3.9,-1) {$\ldots$};
\draw[xshift=.3cm] (4.1,-1.5)--(4.5,-.5)--(4.9,-1.5);
\node[xshift=.3cm] at (4.5,0) {$s$};
\node[xshift=1.3cm] at (6,0) {$n$};
\draw[xshift=1.3cm] (6.4,-1.5)--(6,-.5)--(5.6,-1.5);
\node[below,xshift=1.3cm] at (6,-1.5) {$\ldots,n]$};
\draw[decorate,decoration={brace,amplitude=5pt},xshift=1.5cm] (6.4,-.5)--(6.4,-1.5) node[midway,right,xshift=.2cm] {$\ell$};
\end{tikzpicture}
\caption{All paths of length $\ell$ as paths in a forest.}\label{fig:forest}
\end{figure}
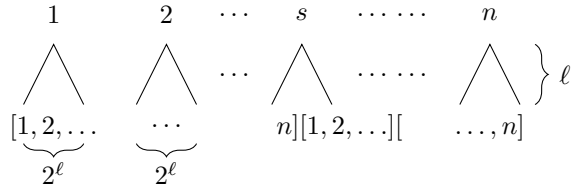

These $2^\ell\cdot n$ numbers at the bottom of the forest are $2^\ell$ copies of the sequence $(1,2,\ldots,n)$. 
To show (ii), we will show that for each copy $C_j$ of $(1,2,\ldots,n)$, there can be at most two paths with the same start and end point that end in this copy.

For $j=1,\ldots,2^\ell$, let $F(j)$ be the set of starting nodes at the top of the forest that have descendants in the $j$th copy $C_j$ of $(1,2,\ldots,n)$. (See Figure~\ref{fig:Fj}. Note the last node in $F(j)$ may overlap with the first node in $F(j+1)$.) Consider just the paths that start in $F(j)$ and end in $C_j$, and suppose there is a length $\ell$ path $x\to x$.  

\begin{figure}[!ht]
\begin{tikzpicture}
[
level 1/.style = {level distance = 7mm, sibling distance = 1cm},
level 2/.style = {level distance = 7mm, sibling distance = 5mm}
]
\node {1}
    child {node {1} child {node {1}} child {node {2}} edge from parent}
    child {node {2} child {node {3}} child {node {4}} edge from parent};
\node [xshift=2cm] {2}
    child {node {3} child {node {5}} child {node {6}} edge from parent}
    child {node {4} child {node {7}} child {node {8}} edge from parent};
\node [xshift=4cm] {3}
    child {node {5} child {node {9}} child {node {10}} edge from parent}
    child {node {6} child {node {1}} child {node {2}} edge from parent};
\node [xshift=6cm] {4}
    child {node {7} child {node {3}} child {node {4}} edge from parent}
    child {node {8} child {node {5}} child {node {6}} edge from parent};
\node [xshift=8cm] {5}
    child {node {9} child {node {7}} child {node {8}} edge from parent}
    child {node {10} child {node {9}} child {node {10}} edge from parent};
\node [xshift=12cm] {10}
    child {node {9} child {node {7}} child {node {8}} edge from parent}
    child {node {10} child {node {9}} child {node {10}} edge from parent};
\node at (10,0) {$\cdots$};
\node[above] at (10,-1.5) {$\cdots$};
\draw[dashed] (-.2,0.2)--(4.2,.2)--(4.2,-.2)--(-.2,-.2)--cycle;
\node[above] at (2,.2) {$F(1)$};
\draw[xshift=4cm,yshift=.5mm] (-.2,0.2)--(4.2,.2)--(4.2,-.2)--(-.2,-.2)--cycle;
\node[above,xshift=4cm] at (2,.2) {$F(2)$};
\draw[dashed,xshift=1.5cm] (10,.2)--(11,.2)--(11,-.2)--(10,-.2);
\draw[dashed] (-1,-1.2)--(4,-1.2)--(4,-1.6)--(-1,-1.6)--cycle;
\node[below] at (1.8,-1.6) {$C_1$};
\draw[xshift=5.1cm] (-1,-1.2)--(4,-1.2)--(4,-1.6)--(-1,-1.6)--cycle;
\node[below,xshift=4.8cm] at (1.8,-1.6) {$C_2$};
\end{tikzpicture}
\caption{Paths of length $\ell=2$ for $n=10$. A starting set $F(2)$ and ending set $C_2$ are emphasized in solid outlines. In this case, the only loop starting in $F(2)$ and ending in $C_2$ is the one starting and ending at $4$.}\label{fig:Fj}
\end{figure}
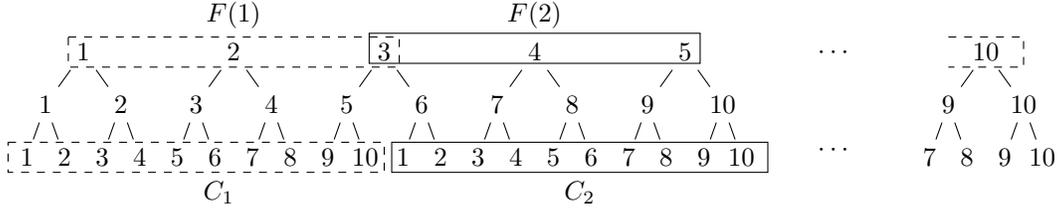

We claim that if there is another loop of length $\ell$ starting in $F(j)$ and ending in $C_j$, then this loop must begin either at $x-1$ or $x+1$ in $F(j)$. Morever, only at most one of $x-1$, $x+1$ can have the loop. (See Figure~\ref{fig:loops}.)
Let $L^\ell(x)$ be the left-most descendant of $x$ in $C_j$, and let $R^\ell(x)$ be the right-most descendant of $x$ in $C_j$.
\begin{enumerate}[(a)]
\item If $L^\ell(x)<x<R^\ell(x)$, then no other $y\in F(j)$ has a path $y\to y$. (Use $L^\ell(x+1)\ge x+2$ and $R^\ell(x-1)\le x-2$, and then continue for the rest of $F(j)$ using $L^\ell(y+1)\ge L^\ell(y)+2$ and $R^\ell(y-1)\le R^\ell(y)-2$.)
\item Similarly, if $L^\ell(x)=x$, then only also $x-1$ has a path $x-1\to x-1$.
\item If $R^\ell(x)=x$, then only also $x+1$ has a path $x+1\to x+1$.
\end{enumerate}

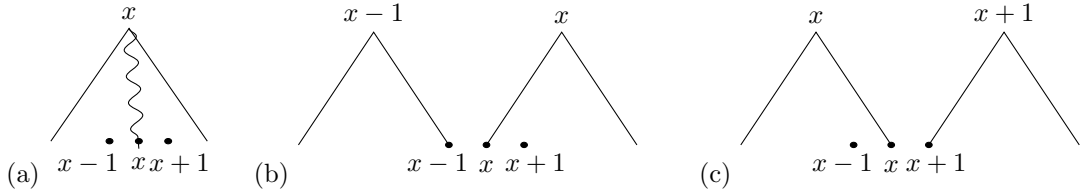
\begin{figure}[!ht]
(a)
\begin{tikzpicture}[xscale=1.3]
\draw (-.8,-1.5)--(0,0)--(.8,-1.5);
\node[above] at (0,0) {$x$};
\foreach \i in {-.3,0,.3}{
	\filldraw (\i+.1,-1.5) circle (1pt);
}
\node[below] at (.1,-1.6) {$x$};
\node[below,xshift=.15cm] at (.4,-1.55) {$x+1$};
\node[below left,xshift=-.15cm] at (.1,-1.55) {$x-1$};
\draw[decorate,decoration={snake}] (0,0)--(.1,-1.6);
\end{tikzpicture}
\quad(b)
\begin{tikzpicture}[xscale=1.25]
\foreach \x in {0,2}{
	\draw (\x-.8,-1.5)--(\x,0)--(\x+.8,-1.5);
}
\node[above] at (0,0) {$x-1$};
\node[above] at (2,0) {$x$};
\filldraw (1.2,-1.5) circle (1pt) node[below,yshift=-.09cm] {$x$};
\filldraw (1.6,-1.5) circle (1pt) node[below,xshift=.15cm] {$x+1$};
\filldraw (.8,-1.5) circle (1pt) node[below,xshift=-.15cm] {$x-1$};
\end{tikzpicture}
\qquad(c)
\begin{tikzpicture}[xscale=1.25]
\foreach \x in {0,2}{
	\draw (\x-.8,-1.5)--(\x,0)--(\x+.8,-1.5);
}
\node[above] at (0,0) {$x$};
\node[above] at (2,0) {$x+1$};
\filldraw (1.2,-1.5) circle (1pt) node[below] {\;\;$x+1$};
\filldraw (.4,-1.5) circle (1pt) node[below,xshift=-.15cm] {$x-1$};
\filldraw (.8,-1.5) circle (1pt) node[below,yshift=-.08cm] {$x$};
\end{tikzpicture}
\caption{The possible cases if there is a loop $x\to x$.}\label{fig:loops}
\end{figure}

Thus in total there are at most $2\cdot 2^\ell$ paths of length $\ell$ that start and end at the same point, proving part (ii).
\end{proof}

\section{The doubling map with \texorpdfstring{$n=2^K$}{n=2\^{}K}}\label{sec:doubling-2}

When $n=2^K$ for some $K\in\N$, the corresponding graphs from the doubling map are the de Bruijn graphs on two symbols. 
Orbits in these graphs have been studied in the context of quantum chaos in for example \cite{tanner2,leroux,GutkinOsipov,HarrisonHudgins}.
In these dimensions, the particular matrices $U_n$ from \eqref{eqn:doubling-pu}, despite coming from the doubling map, exhibit some behavior like that of integrable systems. Any choice of eigenbasis still satisfies the quantum ergodic theorem since the doubling map is ergodic, but the eigenvalues of $U_n$ in these dimensions are degenerate and evenly spaced in the unit circle. 
As a result of the degeneracy, we will be able to show that  random eigenbases look approximately Gaussian.
This will follow from properties of the spectral projection matrix of an eigenspace combined with the results on random projections used in Section~\ref{sec:gaussian}.

We start by showing the eigenvalues of $U_n$ are $4K$th roots of $1$ if $K$ is even, and $4K$th roots of $-1$ if $K$ is odd.

\begin{prop}[Repeating powers of $U_n$] 
\label{prop:u2k}
Let $U_n$ be as in \eqref{eqn:doubling-pu} with $n=2^K$. Then
\begin{enumerate}[(a)]
\item $U_n^{4K}=(-1)^KI$.
\item $U_n^r=(-1)^K(U_n^{4K-r})^T$, for $1\le r\le 4K-1$. More generally, $U_n^r=(-1)^{Kw}(U_n^{4Kw-r})^T$, for $1\le r\le 4Kw-1$.
\end{enumerate}
\end{prop}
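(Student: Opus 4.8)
The plan is to exploit the tensor-product structure that $U_n$ acquires when $n=2^K$. Write each index $x\in\{0,1,\ldots,2^K-1\}$ in binary, $x=\sum_{j=1}^K c_j 2^{K-j}$, and identify $\C^n=(\C^2)^{\otimes K}$ via $e_x=e_{c_1}\otimes e_{c_2}\otimes\cdots\otimes e_{c_K}$, where $e_0,e_1$ is the standard basis of $\C^2$. The crucial first step is to read off from \eqref{eqn:doubling-pu} that $U_n$ acts as
\[
U_n\bigl(e_{c_1}\otimes e_{c_2}\otimes\cdots\otimes e_{c_K}\bigr)=(G\,e_{c_K})\otimes e_{c_1}\otimes e_{c_2}\otimes\cdots\otimes e_{c_{K-1}},
\]
where $G$ is the $2\times2$ orthogonal matrix $G=\tfrac1{\sqrt2}\bigl(\begin{smallmatrix}1&-1\\1&1\end{smallmatrix}\bigr)$ (which is exactly the $K=1$ instance $U_2$). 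Equivalently, $U_n=D\,(I^{\otimes(K-1)}\otimes G)$, where $D$ is the cyclic permutation of tensor factors $D(v_1\otimes\cdots\otimes v_K)=v_K\otimes v_1\otimes\cdots\otimes v_{K-1}$. Concretely: the ``shift one binary digit and read off the adjacency pattern of the doubling map'' part of \eqref{eqn:doubling-pu} is the shift $D$, while the two sign patterns in the matrix (the $(+,-)$ rows in the top block and the $(+,+)$ rows in the bottom block) are precisely the two columns of $G$.

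Once this identification is in place, everything else is formal. I would record the elementary relation $A_{(i)}\,D=D\,A_{(i-1)}$ for any single-slot operator $A$ (slot indices mod $K$, with $A_{(0)}:=A_{(K)}$), which is immediate from the definition of $D$. A one-line induction then gives, for $1\le\ell\le K$,
\[
U_n^{\ell}=D^{\ell}\,G_{(K-\ell+1)}G_{(K-\ell+2)}\cdots G_{(K)},
\]
where $G_{(i)}$ denotes $G$ acting on the $i$th slot (these commute, the slots being distinct). Taking $\ell=K$ and using $D^K=I_n$ yields the key identity $U_n^{K}=G^{\otimes K}$, that is, $U_{2^K}^{K}=U_2^{\otimes K}$.

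Part (a) now follows by computing $G^2=\bigl(\begin{smallmatrix}0&-1\\1&0\end{smallmatrix}\bigr)$, so $G^4=-I_2$, and therefore
\[
U_n^{4K}=\bigl(U_n^{K}\bigr)^{4}=(G^4)^{\otimes K}=(-1)^K I_n;
\]
raising this to the $w$th power gives $U_n^{4Kw}=(-1)^{Kw}I_n$. For part (b), use that $U_n$ is real and unitary, hence orthogonal, so $(U_n^r)^{-1}=(U_n^r)^{T}$; then for $1\le r\le 4K-1$,
\[
U_n^{4K-r}=U_n^{4K}\,(U_n^r)^{-1}=(-1)^K(U_n^r)^{T},
\]
and transposing both sides gives $U_n^{r}=(-1)^K(U_n^{4K-r})^{T}$. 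Replacing $4K$ by $4Kw$ and $(-1)^K$ by $(-1)^{Kw}$ throughout gives the general relation $U_n^{r}=(-1)^{Kw}(U_n^{4Kw-r})^{T}$ for $1\le r\le 4Kw-1$.

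The only step that really requires care is the opening identification $U_n=D\,(I^{\otimes(K-1)}\otimes G)$: one must match the bit-shift/adjacency pattern of the doubling map with the correct single-slot matrix $G$ and, in particular, pin down the signs so that $G^4=-I_2$ (rather than, say, $G^2=\pm I_2$). After that, (a) and (b) drop out with no further input — in contrast to Theorem~\ref{thm:uweyl}, no Ehrenfest-time cutoff or path counting is needed here, which is why this special case collapses so cleanly.
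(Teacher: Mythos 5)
Your proof is correct, and it reaches the same key identity $U_{2^K}^K = G^{\otimes K}$ (with $G=U_2$) as the paper, but by a genuinely different and arguably cleaner route. The paper's proof computes the entries $(U_{2^K}^K)_{ij}$ directly: it uses the bit-shift picture to argue there is a unique length-$K$ path from $i=(i_1,\ldots,i_K)$ to $j=(j_1,\ldots,j_K)$, reads off the sign along that path as $\prod_m(-1)^{(1-i_m)j_m}$, and then \emph{recognizes} the resulting matrix as the $K$-fold tensor power of $\bigl(\begin{smallmatrix}1&-1\\1&1\end{smallmatrix}\bigr)/\sqrt2$. You instead establish the structural factorization $U_n = D\,(I^{\otimes(K-1)}\otimes G)$ once and for all, and then use the braiding relation $A_{(i)}D=DA_{(i-1)}$ to push the single-slot factors past the shift, giving $U_n^\ell = D^\ell G_{(K-\ell+1)}\cdots G_{(K)}$ by induction and hence $U_n^K = G^{\otimes K}$ after $D^K=I$. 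This avoids any explicit sign-tracking along paths and makes the tensor-product origin of $U_n^K$ transparent rather than discovered \emph{a posteriori}. The remainder of (a) ($G^4=-I_2$) and all of (b) (orthogonality of the real matrix $U_n$) are identical in both proofs. One tiny slip in your informal gloss: the $(1,-1)/\sqrt2$ and $(1,1)/\sqrt2$ patterns in the upper and lower halves of $U_n$ are the two \emph{rows} of $G$, not its columns; the formal identity $U_n(e_{c_1}\otimes\cdots\otimes e_{c_K})=(Ge_{c_K})\otimes e_{c_1}\otimes\cdots\otimes e_{c_{K-1}}$ that you actually use is stated correctly.
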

\begin{proof}
Part (b) follows from (a) and unitarity (orthogonality) of $U_n$. For part (a), view the doubling map on $[0,1]$ as the left bit shift on a sequence $\{0,1\}^\N$ corresponding to the binary expansion of $x\in[0,1]$. 
If we partition $[0,1]$ into $2^K$ atoms $E_i=(\frac{i}{2^K},\frac{i+1}{2^K})$, $i=0,\ldots,2^K-1$, then we can identify atom $E_i$ with the length $K$ bit string corresponding to the binary expansion of $\frac{i}{2^K}$. Then  $z\in E_i$ iff the first $K$ digits of its binary expansion match the length $K$ bit string for $E_i$. The Markov matrix $P_n$ then takes an atom indexed by $i=(i_1,\ldots,i_K)$ and sends it to the atoms indexed by $(i_2,\ldots,i_{K-1},0)$ and $(i_2,\ldots,i_{K-1},1)$, the result of the left bit shift.
Thus for $1\le\ell\le K$, there is at most one length $\ell$ path from $i=(i_1,\ldots,i_K)$ to $j=(j_1,\ldots,j_K)$, which is described by the sequence $(i_1,\ldots,i_{\ell},j_1,\ldots,j_K)$ and requires $i_{\ell+1},i_{\ell+2},\cdots, i_{\ell+(K-\ell)}=j_1,j_2,\ldots, j_{K-\ell}$.
Note this recovers Lemma~\ref{lem:sp}(d).

Now considering the signs in $U_{2^K}$ and viewing the indices $i,j$ as length $K$ bit strings, if there is an edge $i\to j$, then 
\begin{align*}
(U_{2^K})_{ij}&=2^{-1/2}\begin{cases}
-1,& i_1=0,j_K=1\\
1,&\text{else}
\end{cases}\\
&=2^{-1/2}(-1)^{(1-i_1)j_K}.
\end{align*}
Thus if there is a length $K$ path $\tau:i\to j$, then
\begin{equation}
(U_{2^K}^K)_{ij} = (U_{2^K})_{i\tau_1}(U_{2^K})_{\tau_1\tau_2}\cdots (U_{2^K})_{\tau_{K-1}j} = 2^{-K/2}\prod_{m=1}^K (-1)^{(1-i_m)j_m},
\end{equation} 
since $(\tau_{m-1})_1=i_{m}$ and $(\tau_{m})_K=j_{m}$. 
This is the structure of a tensor product, 
\begin{equation}
U^K_{2^K}=2^{-K/2}\bigotimes_{m=1}^K\begin{pmatrix}1&-1\\1&1\end{pmatrix},
\end{equation}
so that
\begin{equation}\label{eqn:u2-powers}
U_{2^K}^{2K}=\bigotimes_{m=1}^K\begin{pmatrix}0&-1\\1&0\end{pmatrix},\qquad U_{2^K}^{4K}=\bigotimes_{m=1}^K\begin{pmatrix}-1&0\\0&-1\end{pmatrix}=(-1)^KI_{2^K}.
\end{equation}
\end{proof}

\begin{rmk}
Proposition~\ref{prop:u2k} can also be proved (although with significantly more effort) by analyzing paths in the corresponding de Bruijn graph. As in Section~\ref{sec:doubling-even}, possible paths can be described using trees, but the edges in the trees carry a sign to keep track of the negative signs in the matrix $U_{2^K}$.
\end{rmk}

Since the eigenvalues of $U_{2^K}$ are $4K$-th roots of 1 or $-1$, instead of eigenvectors from eigenvalues in an interval $I(n)\subseteq\R/(2\pi\Z)$ like in Theorem~\ref{thm:uweyl}, we are just interested in all the eigenvectors from a single eigenspace. 
A stronger version of Theorem~\ref{thm:uweyl} for this specific case controls the spectral projection onto a single eigenspace.

\begin{thm}[Eigenspace projection when $n=2^K$]\label{prop:d2weyl}
For $n=2^K$, let $U_n$ be as in \eqref{eqn:doubling-pu}, and let $P^{(n,j)}$ be the projection onto its $j$th eigenspace. Let $r(K):\N\to\N$ be any function satisfying $r(K)<K$, $r(K)\to\infty$, and $K-r(K)-\log_2K\to\infty$ as $K\to\infty$. Then there are sets $G_K\subseteq[1:2^K]$ and $GP_K\subseteq[1:2^K]^2$ with 
\begin{align}
\#G_K&\ge2^K\left(1-\frac{4}{2^{K-r(K)}}\right)=2^K\left(1-o\Big(\frac{1}{4K}\Big)\right)\\
\#GP_K&\ge(2^K)^2\left(1-\frac{8}{2^{K-r(K)}}\right)=(2^K)^2\left(1-o\Big(\frac{1}{4K}\Big)\right),
\end{align}
such that the following hold as $K\to\infty$.
\begin{enumerate}[(a)]
\item For $x\in G_K$ and any $j$,
\begin{align}
\left|\|P^{(n,j)}e_x\|_2^2-\frac{1}{4K}\right|\le\frac{1}{4K}\cdot 10\cdot 2^{-r(K)/2}.\label{eqn:d2diag}
\end{align}
\item For pairs $(x,y)\in GP_K$ and any $j$,
\begin{align}
|\langle e_y, P^{(n,j)}e_x\rangle| &\le \frac{10\cdot 2^{-r(K)/2}}{4K}.\label{eqn:d2off}
\end{align}
\end{enumerate}
\end{thm}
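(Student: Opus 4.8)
The plan is to follow the proof of Theorem~\ref{thm:uweyl} in Section~\ref{sec:weylproof}, but to exploit that for $n=2^K$ the projection onto a \emph{single} eigenspace has an \emph{exact} finite Fourier representation, so that no Beurling--Selberg smoothing is needed. By Proposition~\ref{prop:u2k} we have $U_n^{4K}=(-1)^KI$, so every eigenvalue $e^{i\theta^{(n,j)}}$ is a $4K$th root of $(-1)^K$; since the ratio of any two such roots is a $4K$th root of unity, the orthogonal projection onto the $j$th eigenspace is
\begin{equation*}
P^{(n,j)}=\frac{1}{4K}\sum_{m=0}^{4K-1}e^{-im\theta^{(n,j)}}U_n^m .
\end{equation*}
The $m=0$ term is $\frac{1}{4K}\operatorname{Id}$, which supplies the main terms $\frac{1}{4K}$ in \eqref{eqn:d2diag} and $0$ in \eqref{eqn:d2off}. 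So it remains to bound $\sum_{m=1}^{4K-1}|(U_n^m)_{xx}|$ and, for $x\neq y$, $\sum_{m=1}^{4K-1}|(U_n^m)_{xy}|$, for most coordinates $x$ and most pairs $(x,y)$.

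The key input is a uniform bound on the entries of \emph{all} the powers $U_n^1,\dots,U_n^{4K-1}$. I would reduce every such power to the ``directly understood'' range $1\le\ell\le K$ using two symmetries from Proposition~\ref{prop:u2k}: the transpose relation $U_n^m=(-1)^K(U_n^{4K-m})^{T}$ handles $2K\le m<4K$, and the identity $U_n^{2K-\ell}=U_n^{2K}(U_n^\ell)^{*}$, together with the fact that $U_n^{2K}=\bigotimes\left(\begin{smallmatrix}0&-1\\1&0\end{smallmatrix}\right)$ is a signed permutation supported on the bit-complement $x\mapsto\bar x$, handles $K\le m\le 2K$. In every case one lands on an entry of $U_n^\ell$ with $\ell=d(m):=\min(m,\,|2K-m|,\,4K-m)\le K$, and by Lemma~\ref{lem:doubling-nonzero}(i) there is at most one path of length $\le K$ between any two nodes, so (as in the displayed computation in Section~\ref{subsec:badpoints}) each nonzero entry of $U_n^\ell$ has modulus $2^{-\ell/2}$. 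This gives
\begin{equation*}
|(U_n^m)_{xy}|\le 2^{-d(m)/2}\qquad\text{for }1\le m\le 4K-1,\ x,y\in[1:2^K],
\end{equation*}
and moreover $(U_n^m)_{xy}\neq0$ forces a path of length $d(m)$ between a prescribed pair of nodes (namely $x\to y$, $y\to\bar x$, or $\bar x\to y$, according to which of the three zones $m$ lies in).

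With this in hand the argument proceeds as in Section~\ref{sec:weylproof}: fix the cutoff $r(K)$, and place into a ``bad'' set $B_K$ (resp.\ $BP_K$) every $x$ with $(U_n^m)_{xx}\neq0$ for some $m$ with $d(m)\le r(K)$ (resp.\ every $x\neq y$ with $(U_n^m)_{xy}\neq0$ for some such $m$). The indices $m$ with $d(m)\le r(K)$ form three ``danger zones'' — near $0$, near $2K$, near $4K$ — and counting paths of length $\le r(K)$ as in Lemma~\ref{lem:doubling-nonzero}(ii),(iii) bounds the contribution of each zone by $O(2^{r(K)})$ coordinates and $O(2^{K}2^{r(K)})$ pairs. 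The explicit constants $4$ and $8$ in the statement come from the coincidences that the near-$0$ and near-$4K$ zones cut out exactly the same set of coordinates by the transpose relation, and that the two near-$2K$ loop conditions arising from $m=2K\pm\ell$ are the identical constraint $x_{\ell+i}=1-x_i$. Then for $x\notin B_K$,
\begin{equation*}
\sum_{m=1}^{4K-1}|(U_n^m)_{xx}|\le\sum_{d>r(K)}\#\{m:d(m)=d\}\,2^{-d/2}\le 4\sum_{d>r(K)}2^{-d/2}\le 10\cdot 2^{-r(K)/2},
\end{equation*}
since at most four values of $m$ share a given value of $d(m)$; dividing by $4K$ yields \eqref{eqn:d2diag}, and the same estimate for $(x,y)\notin BP_K$ yields \eqref{eqn:d2off}. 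The size bounds on $G_K=[1:2^K]\setminus B_K$ and on $GP_K$ follow from the path counts, and the hypothesis $K-r(K)-\log_2 K\to\infty$ turns them into the stated $o(1/(4K))$ error fractions.

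The main obstacle is the uniform entry bound $|(U_n^m)_{xy}|\le 2^{-d(m)/2}$: one has to see that powers near $2K$ decay at the same rate as powers near $0$, which is invisible from the interval-map / Markov-matrix picture used for Theorem~\ref{thm:uweyl} (where powers beyond the Ehrenfest time are genuinely inaccessible) and instead hinges on the special algebraic structure of $U_{2^K}$ recorded in Proposition~\ref{prop:u2k} — the tensor form $U_{2^K}^{K}=\bigotimes\left(\begin{smallmatrix}1&-1\\1&1\end{smallmatrix}\right)$ and its consequences $U_{2^K}^{2K}=\bigotimes\left(\begin{smallmatrix}0&-1\\1&0\end{smallmatrix}\right)$ and $U_{2^K}^{4K}=(-1)^KI$. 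Once that reduction is in place, everything else is the path-counting bookkeeping of Section~\ref{sec:weylproof}, with the interval $I(n)$ replaced by a single eigenvalue and the trigonometric-polynomial approximation replaced by an exact identity.
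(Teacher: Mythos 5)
Your proposal is essentially the paper's own proof: the exact $4K$-term Fourier identity for the eigenspace projector (the paper's equation \eqref{eqn:doublingproj}), the symmetries of Proposition~\ref{prop:u2k} (tensor form at $K$, signed bit-complement permutation at $2K$, transpose reflection at $4K$) to reduce all powers to the window $[1:K]$, the uniform entry bound $|(U_n^m)_{xy}|\le 2^{-d(m)/2}$, the cutoff $r(K)$ defining bad loops/paths in the three danger zones, and the geometric tail bound giving $10\cdot 2^{-r(K)/2}$. The paper packages the decay and nonzero-count facts as the staircase structure of Lemma~\ref{lem:powers1} and Corollary~\ref{cor:powers2}, which is just a different bookkeeping of the same algebraic identities you invoke; the path-counts you attribute to Lemma~\ref{lem:doubling-nonzero} are actually computed exactly in the paper via the $K$-bit-string description (which is sharper than Lemma~\ref{lem:doubling-nonzero}(ii) by a factor of $2$ on the diagonal and is what yields the constant $4$ rather than $8$ in $\#G_K$), so you should use the bit-string periodicity count rather than the general-$n$ lemma there; and your explanation of the constant $8$ for $GP_K$ as the near-$0$/near-$4K$ zones giving ``exactly the same coordinates'' is the right heuristic for the diagonal but for pairs they give transposed pairs, which is why the constant doubles rather than coincides.
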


Using \eqref{eqn:d2diag} and summing over all $x$, we also obtain:
\begin{cor}[Eigenspace degeneracy]\label{cor:d2weyl}
The degeneracy of each eigenspace of $U_{2^K}$ is $\frac{2^K}{4K}(1+o(1))$.
\end{cor}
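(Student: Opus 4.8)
The plan is to compute the dimension of each eigenspace as the trace of the corresponding spectral projection and feed in the pointwise estimate of Theorem~\ref{prop:d2weyl}(a).

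First I would recall that for the orthogonal projection $P^{(n,j)}$ onto the $j$th eigenspace of $U_{2^K}$ one has $\|P^{(n,j)}e_x\|_2^2=\langle e_x,P^{(n,j)}e_x\rangle=(P^{(n,j)})_{xx}$, so that
\[
\dim(\text{$j$th eigenspace})=\operatorname{tr}P^{(n,j)}=\sum_{x=1}^{2^K}\|P^{(n,j)}e_x\|_2^2.
\]
Then I would split this sum over $x\in G_K$ and $x\notin G_K$, where $G_K$ is the set from Theorem~\ref{prop:d2weyl}, with $\#G_K\ge 2^K(1-o(1/(4K)))$. For $x\in G_K$ apply \eqref{eqn:d2diag} to replace each summand by $\frac{1}{4K}(1+O(2^{-r(K)/2}))$; summing gives $\frac{\#G_K}{4K}(1+O(2^{-r(K)/2}))=\frac{2^K}{4K}(1+o(1))$, using $r(K)\to\infty$ and the stated size of $G_K$. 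For $x\notin G_K$ bound each summand trivially by $0\le(P^{(n,j)})_{xx}\le 1$; since there are at most $2^K-\#G_K=o(2^K/(4K))$ such indices, their total contribution is $o(2^K/(4K))$, negligible against the main term of order $2^K/(4K)$. Combining the two pieces yields $\dim(\text{$j$th eigenspace})=\frac{2^K}{4K}(1+o(1))$, with an $o(1)$ uniform in $j$ since the bound in Theorem~\ref{prop:d2weyl}(a) is. Since this dimension is positive for all large $K$ and, by Proposition~\ref{prop:u2k}(a), every eigenvalue of $U_{2^K}$ is a $4K$th root of $(-1)^K$, it follows in particular that $U_{2^K}$ has exactly $4K$ distinct eigenvalues, so ``each eigenspace'' is well defined and $4K\cdot\frac{2^K}{4K}(1+o(1))=2^K$ is consistent.

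There is essentially no obstacle here: the corollary is a direct consequence of Theorem~\ref{prop:d2weyl}(a) via the trace identity. The only point requiring a little care is that the exceptional set $[1:2^K]\setminus G_K$, though each of its entries could in principle contribute as much as $1$ to the trace, is small enough---of cardinality $o(2^K/(4K))$---that it does not affect the leading asymptotics; this is precisely why Theorem~\ref{prop:d2weyl} is stated with the quantitative bound $\#G_K\ge 2^K(1-o(1/(4K)))$ rather than merely $\#G_K=2^K(1-o(1))$.
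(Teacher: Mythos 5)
Your proof is correct and is exactly the argument the paper intends (the paper simply says ``Using \eqref{eqn:d2diag} and summing over all $x$''): take the trace of $P^{(n,j)}$, apply \eqref{eqn:d2diag} on $G_K$, and discard the $o(2^K/(4K))$ exceptional coordinates by the trivial bound $0\le (P^{(n,j)})_{xx}\le 1$. Your additional observation that the bound $\#G_K\ge 2^K(1-o(1/(4K)))$ is precisely what makes the exceptional set negligible at this scale is a useful clarification, but the route is the same.
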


Returning to eigenvectors, Theorem~\ref{prop:d2weyl}(a) applied to Corollary~\ref{cor:randomGaussian} shows that taking a random basis within each eigenspace produces approximately Gaussian eigenvectors.

\begin{thm}[Gaussian eigenvectors when $n=2^K$]\label{thm:doub2}
For $K\in\N$, let $(\psi^{(2^K,j)})_{j=1}^{2^K}$ be a random ONB of eigenvectors chosen according to Haar measure in each eigenspace of $U_n$. Let $\mu^{(2^K,j)}:=\frac{1}{2^K}\sum_{x=1}^{2^K}\delta_{\sqrt{2^K}\psi_x^{(2^K,j)}}$ be the empirical distribution of the scaled coordinates of $\psi^{(2^K,j)}$. Then there is $\varepsilon(K)\to0$ so that for any bounded Lipschitz $f:\C\to\C$ with $L\le1$, as $K\to\infty$,
\begin{equation}
\mathbb{P}\left[\max_{j\in\intbrr{1:2^K}}\left|\int f(x)\,d\mu^{(2^K,j)}(x)-\mathbb{E}f(Z)\right|>\varepsilon(K)\right]\to0,
\end{equation}
where $Z\sim N_\C(0,1)$. In particular, each $\mu^{(2^K,j)}$ converges weakly in probability to $N_\C(0,1)$ as $K\to\infty$.
\end{thm}
\begin{proof}[Proof (of Theorem~\ref{thm:doub2})]
Theorem~\ref{prop:d2weyl} shows we can take $d_\ell=\frac{n}{4K}$, and  $A$ in Corollary~\ref{cor:randomGaussian} to be $o(\frac{n}{4K})$.
Then similar to Subsection~\ref{subsec:proof-gvectors}, take 
\begin{equation}
\varepsilon(K)=\max\left(\frac{4A+6}{(d_\ell-A)-1},\frac{K^{1/4}}{n^{1/4}}\right)\to0,
\end{equation}
and note that $Cn\exp\left(-\frac{cn^{1/2}}{K^{1/2}}(1-o(1))\right)\to0$.
\end{proof}

The rest of this section is the proof of Theorem~\ref{prop:d2weyl}.
\subsection{Polynomial for eigenspace projection}
Instead of using trigonometric polynomials to approximate the spectral projection matrix like in the proof of Theorem~\ref{thm:uweyl}, we use a polynomial with zeros at $4K$-th roots of $1$ or $-1$ to get exact formulas.
Let $U_n$ be as in \eqref{eqn:doubling-pu} with $n=2^K$. First consider $K$ even, so $U_n^{4K}=I$ and the eigenvalues of $U_n$ are $4K$-th roots of unity. Since $\frac{x^{4K}-1}{x-1}=1+x+x^2+\cdots+x^{4K-1}$ is zero at all $4K$-th roots of unity except for $x=1$, the polynomial
\begin{equation}
p_{K,j}(x) = 1+\sum_{\ell=1}^{4K-1}(e^{-2\pi i j/(4K)})^\ell x^\ell
\end{equation}
is zero at all $4K$-th roots of unity except for $e^{2\pi ij/(4K)}$, where it takes the value $4K$. 
Writing
\[
U_n=\sum_{\alpha=0}^{4K-1}e^{2\pi i\alpha/(4N)}\sum_{\lambda=e^{2\pi i\alpha/(4K)}}|\psi^{(\lambda)}\rangle\langle\psi^{(\lambda)}|,
\]
the spectral projection onto the eigenspace of $e^{2\pi ij/(4K)}$ is
\begin{align}
P^{(n,j)}= \sum_{\lambda=e^{2\pi ij/(4K)}}|\psi^{(\lambda)}\rangle\langle\psi^{(\lambda)}| &=\nonumber\frac{1}{4K}\cdot p_{K,j}(U_n) 
= \frac{1}{4K}\left(I + \sum_{\ell=1}^{4K-1}(e^{-2\pi ij/(4K)})^\ell U_n^\ell\right).
\end{align}
If $K$ is odd, then $U_n^{4K}=-I$ and the eigenvalues of $U_n$ are $4K$-th roots of $-1$. These are $\exp(i\frac{\pi}{4K}+\frac{2\pi ij}{4K})$ for $j\in\intbrr{0:4K-1}$. 
For notational convenience, let $\gamma(K):=\begin{cases}e^{i\pi/(4K)},&K\text{ odd}\\1,&K\text{ even}\end{cases}$, so we can write for any $K\in\N$,
\begin{equation}\label{eqn:doublingproj}
P^{(n,j)}:=\sum_{\lambda=e^{2\pi ij/(4K)}\gamma(K)}|\psi^{(\lambda)}\rangle\langle\psi^{(\lambda)}| =\frac{1}{4K}\left(I+\sum_{\ell=1}^{4K-1}(e^{-2\pi ij/(4K)}\bar{\gamma(K)})^\ell U_n^\ell\right).
\end{equation}

\subsection{Powers of $U_n$}\label{subsec:powers}

To estimate the matrix elements of \eqref{eqn:doublingproj}, we need some properties on the powers of $U_n$.
Since by Proposition~\ref{prop:u2k}(b), $U_n^{2K+r} = (-1)^K(U_n^{2K-r})^T$ for $r=0,\ldots,2K-1$, to understand all the powers $U_n,U_n^2,\ldots,U_n^{4K-1}$, it is enough to know the powers $U_n^m$ for $m\in\intbrr{1:K}\cup\intbrr{2K:3K}$. 
We will only need to know where the entries of $U_n^m$ are nonzero, which follows from matrix multiplication:
\begin{lem}[Powers up to $K$]\label{lem:powers1}
Let $n=2^K$. For $m\le K$, let $\mathcal{A}_m$ be the set of real $n\times n$ matrices $A$ such that
\[
|A_{ij}|=\begin{cases}1,&j\in\{2^mi,2^mi-1,\ldots,2^mi-(2^m-1)\}\mod 2^K\\
0,&\text{else}\end{cases}.
\]
$\mathcal{A}_m$ consists of matrices whose nonzero entries are $\pm1$ arranged in $2^m$ descending ``staircases'' with steps of length $2^m$. Then for $m\le K-1$ and $A\in \mathcal{A}_m$,
\[
A\cdot\sqrt{2}U_n\in\mathcal{A}_{m+1}.
\]
In particular, since $2^{1/2}U_n\in\mathcal{A}_1$, then for $m\le K$,
\[
2^{m/2}U_n^m \in \mathcal{A}_m.
\]
\end{lem}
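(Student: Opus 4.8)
The plan is to prove the first assertion by directly tracking the location and modulus of the nonzero entries under the product $A\cdot\sqrt2 U_n$, and then to obtain the ``in particular'' clause by a one-line induction on $m$ starting from $2^{1/2}U_n\in\mathcal{A}_1$.

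First I would record the relevant supports. For $A\in\mathcal{A}_m$, the $i$th row of $A$ is supported on the block of $2^m$ consecutive residues $B_i:=\{2^m i-2^m+1,\ldots,2^m i\}\bmod 2^K$. Reading off \eqref{eqn:doubling-pu}, the entry $(\sqrt2 U_n)_{kj}$ is nonzero precisely when $j\in\{2k-1,2k\}\bmod 2^K$; equivalently, for fixed $j$ the set $C_j$ of rows $k$ meeting column $j$ is an antipodal pair $\{k_0,k_0+2^{K-1}\}\bmod 2^K$ (two values of $k$ appear because reduction mod $2^K$ identifies $2k$ with $2(k+2^{K-1})$). Then $(A\sqrt2 U_n)_{ij}=\sum_k A_{ik}(\sqrt2 U_n)_{kj}$ has at most $\#(B_i\cap C_j)$ nonzero summands.

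The key point, and the place where the hypothesis $m\le K-1$ enters, is that $B_i$ is an arc of length $2^m\le 2^{K-1}$ on the cycle $\mathbb{Z}/2^K\mathbb{Z}$, hence contains at most one element of any antipodal pair; so $\#(B_i\cap C_j)\le 1$ and the sum collapses to a single term. Consequently $|(A\sqrt2 U_n)_{ij}|$ is the product of two numbers of modulus $1$ when $B_i\cap C_j\neq\emptyset$ and is $0$ otherwise, so cancellation is impossible and every nonzero entry is $\pm1$. To finish the first part I would identify the support: $B_i\cap C_j\neq\emptyset$ iff $j\in\bigcup_{k\in B_i}\{2k-1,2k\}$, and since $B_i$ consists of $2^m$ consecutive residues this union is the block $\{2^{m+1}i-2^{m+1}+1,\ldots,2^{m+1}i\}\bmod 2^K$ of $2^{m+1}\le 2^K$ consecutive residues — i.e.\ a genuine, non-self-overlapping arc — which is exactly the staircase pattern defining $\mathcal{A}_{m+1}$. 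Thus $A\sqrt2 U_n\in\mathcal{A}_{m+1}$.

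For the final claim, the base case $2^{1/2}U_n\in\mathcal{A}_1$ is immediate from \eqref{eqn:doubling-pu} (row $i$ has its two nonzero $\pm1$ entries in columns $2i-1,2i\bmod 2^K$), and the inductive step is simply $2^{(m+1)/2}U_n^{m+1}=(2^{m/2}U_n^m)(\sqrt2 U_n)\in\mathcal{A}_{m+1}$ by the first part, which is valid as long as $m\le K-1$; this yields the statement for all $m\le K$. There is no substantial obstacle here: the argument is purely combinatorial once the single-summand reduction is in place, and the only point requiring care is the index bookkeeping modulo $2^K$ — keeping the staircase endpoints correct and making sure ``consecutive residues'' always refers to a genuine arc, which is precisely what $m\le K-1$ guarantees.
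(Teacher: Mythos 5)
Your argument is correct, and since the paper leaves this lemma as an unproved consequence of "matrix multiplication," your proof simply fills in the omitted details in the natural way. The one observation that does the work — that for $m\le K-1$ the support arc $B_i$ of length $2^m\le 2^{K-1}$ on $\mathbb{Z}/2^K\mathbb{Z}$ cannot contain both members of the antipodal pair $C_j=\{k_0,k_0+2^{K-1}\}$, so each entry of the product is a single $\pm1\cdot(\pm1)$ with no possible cancellation — is exactly the content the paper is tacitly invoking, and your index bookkeeping correctly identifies the resulting support as the $\mathcal{A}_{m+1}$ staircase.
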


Between $2K$ and $3K$, $U_n^m$ has a flipped staircase structure:
\begin{cor}[Powers from $2K$ to $3K$]\label{cor:powers2}
Let $n=2^K$. For $m\le K$, let $\mathcal{B}_m$ be the set of $n\times n$ matrices $B$ such that the matrix $A$ defined by $A_{ij}:=B_{(n-i)j}$ is in $\mathcal{A}_m$. Then for $m\le K-1$ and $B\in\mathcal{B}_m$,
\[
B\cdot\sqrt{2}U_n\in\mathcal{B}_{m+1}.
\]
In particular, using that $U_n^{2K}$ is a ``flipped diagonal'' matrix with nonzero entries $\pm1$, so that $2^{1/2}U_n^{2K+1}\in B_1$, then for $m\in\intbrr{1:K}$,
\[
2^{m/2} U_n^{2K+m}\in\mathcal{B}_m.
\]
\end{cor}
\begin{proof}
If $A_{ij}=B_{(n-i)j}$, then
$
(BU_n)_{(n-i)j} = \sum_{\ell=1}^n A_{i\ell}(U_n)_{\ell j} = (AU_n)_{ij},
$
and since $\sqrt{2}\cdot AU_n\in\mathcal{A}_{m+1}$, then $\sqrt{2}\cdot BU_n\in\mathcal{B}_{m+1}$.
That $U_n^{2K}$ is a ``flipped diagonal'' matrix with nonzero entries $\pm1$ along the flipped diagonal $(i,n-i)$ follows from  equation~\eqref{eqn:u2-powers}.
Then the matrix $A$ defined by
\[
A_{ij}:=(U_n^{2K}\cdot 2^{1/2}U_n)_{(n-i)j} = 2^{1/2}\sum_{\ell=1}^n\pm\delta_{i,\ell}(U_n)_{\ell j}=(\pm1)2^{1/2}(U_n)_{ij}
\]
is in $\mathcal{A}_1$ so $2^{1/2}U_n^{2K+1}\in\mathcal{B}_1$.
\end{proof}

\subsection{Removing potentially bad points}

This mirrors Subsection~\ref{subsec:badpoints} from the proof of Theorem~\ref{thm:uweyl}, although due to the structure of $U_n^\ell$ here, we consider $1\le\ell\le 4K$ instead of just $1\le \ell\le {\K}+1$. (Figure~\ref{fig:badcoords}.)

Let the set of potentially bad coordinates be 
\begin{equation}
B_K:=\{x\in\intbrr{n}:(U_n^m)_{xx}\ne0\text{ for some }m\in\intbrr{1:r(K)}\cup\intbrr{2K-r(K):2K}\}.
\end{equation} 

\begin{figure}[!ht]
\begin{tikzpicture}[scale=1.7]
\draw[dashed] (0,0)--(1,0);
\draw (1,0)--(3,0);
\draw[dashed] (3,0)--(5,0);
\draw (5,0)--(7,0);
\draw[dashed] (7,0)--(8,0);
\draw (0,.15)--(0,-.15) node[below] {$1$};
\draw (8,.15)--(8,-.15) node[below] {$4K$};
\draw (4,.15)--(4,-.15) node[below] {$2K$};
\draw (2,.1)--(2,-.1) node[below] {$K$};
\draw (6,.1)--(6,-.1) node[below] {$3K$};
\foreach \i in {1,3,5,7}
{
\draw (\i,.1)--(\i,-.1);
}
\foreach \i in {0,3,4,7}
{
\draw [decorate, decoration={brace,raise=7pt,amplitude=4pt}] (\i,0)--(\i+1,0);
\node [above] at (\i+.5,.2) {$r(K)$};
}
\node[left] at (-.08,-.5) {nonzero};
\node[left] at (-.1,-.75) {$|(U_n^m)_{ij}|$ :};
\node[right] at (-.2,-.7) {$2^{-\frac{1}{2}},\;\;\;\;\ldots,\;\;\;2^{-\frac{K-1}{2}},2^{-\frac{K}{2}},2^{-\frac{K-1}{2}},\;\;\;\ldots,2^{-\frac{1}{2}},1$};
\node[right] at (4,-.7) {$, 2^{-\frac{1}{2}},\;\;\ldots,\;\;2^{-\frac{K-1}{2}},2^{-\frac{K}{2}},2^{-\frac{K-1}{2}},\;\ldots,2^{-\frac{1}{2}},1$};
\end{tikzpicture}
\caption{Eliminating bad coordinates in regions where the nonzero entries of $U_n^m$ are large. By Proposition~\ref{prop:u2k}(b), we only need to consider powers up to $2K$ in the definition of $B_K$, since the powers reflect across $2K$.}\label{fig:badcoords}
\end{figure}
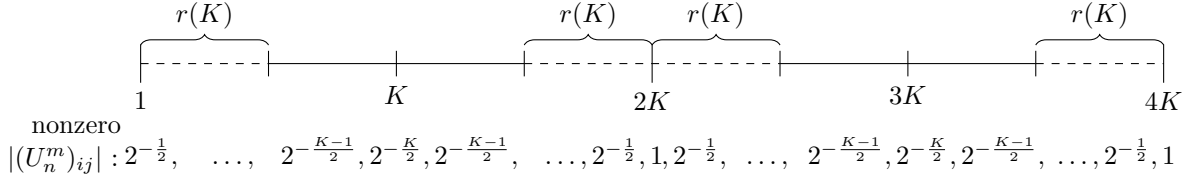
Indexing the $2^K$ atoms of $[0,1]$ by length $K$ bit strings as in the proof of Proposition~\ref{prop:u2k}, we see that for $1\le\ell\le K$, the entry $(U_n^\ell)_{xx}$ is nonzero iff $x$ is of the form $(x_1,\ldots,x_\ell,x_1,\ldots,x_\ell,x_1,\ldots)$, that is $x$ corresponds to a periodic orbit of length $\ell$. There are $2^\ell$ choices for the sequence $(x_1,x_2,\ldots,x_\ell)$, so the diagonal of $U_n^\ell$ contains exactly $2^\ell$ nonzero entries
for $\ell\in\intbrr{1:r(K)}$.
Additionally, by the staircase structure from Corollary~\ref{cor:powers2}, the diagonal of $U_n^\ell$ has at most $2^{2K-\ell}$ nonzero entries for $\ell\in\intbrr{2K-r(K):2K-1}$. Thus 
\begin{equation}
\#B_K\le 2\sum_{\ell=1}^{r(K)}2^\ell=4(2^{r(K)}-1) = o(2^K/K).
\end{equation}
Let the set of good coordinates be $G_K:=\intbrr{n}\setminus B_K$. For $x\in G_K$, then $(U_n^\ell)_{xx}=0$ for $\ell\in\intbrr{1:r(K)}\cup\intbrr{2K-r(K):2K}$ (and also for $\ell\in\intbrr{2K:2K+r(K)}\cup\intbrr{4K-r(K):4K}$), so that for any $j\in\intbrr{0:4K-1}$,
\begin{align*}
\|P^{(n,j)}e_x\|_2^2 &= \frac{1}{4K}\Bigg(1+\sum_{\ell=r(K)+1}^{2K-r(K)-1} (e^{-2\pi ij/(4K)}\bar{\gamma(K)})^\ell (U_n^\ell)_{xx}+\\
&\qquad\qquad\qquad\qquad\qquad\qquad\qquad+\sum_{\ell=2K+r(K)+1}^{4K-r(K)-1} (e^{-2\pi ij/(4K)}\bar{\gamma(K)})^\ell (U_n^\ell)_{xx}\Bigg)\\
&=\frac{1}{4K}(1+\mathcal{O}(2^{-r(K)/2})),\numberthis
\end{align*}
since 
\begin{align}
\left|\sum_{\ell=r(K)+1}^{2K-r(K)-1} (e^{-2\pi ij/(4K)}\bar{\gamma(K)})^\ell (U_n^\ell)_{xx}\right| &\le 2\sum_{\ell=r(K)+1}^K 2^{-\ell/2} \le 10\cdot 2^{-r(K)/2},
\end{align}
and similarly for the second sum. This proves \eqref{eqn:d2diag}.

\subsection{Removing potentially bad pairs of points}

Let the set of potentially bad {pairs} of coordinates be
\begin{multline}
BP_K:=\{(x,y)\in\intbrr{n}^2, x\ne y: (U_n^\ell)_{xy}\ne 0
\text{ for some }\\
\ell\in\intbrr{1:r(K)}\cup\intbrr{2K-r(K):2K+r(K)}\cup\intbrr{4K-r(K):4K-1}\}.
\end{multline}
The matrix $U_n^\ell$ contains $2^\ell\cdot n$ nonzero entries ($n$ entries in each staircase and $2^\ell$ staircases) for $\ell\in\intbrr{1:r(K)}$, and $2^{2K-\ell}\cdot n$ nonzero entries for $\ell\in\intbrr{2K-r(K):2K-1}$ (and the same for flipping $\ell$ across $2K$). Then
\begin{equation}
\#BP_K \le 4\sum_{\ell=1}^{r(K)}2^\ell\cdot n= 8(2^{r(K)}-1)\cdot n = o(n^2),
\end{equation}
and for good pairs $(x,y)\in GP_K:=(\intbrr{n}^2\setminus\{(x,y):x=y\})\setminus BP_K$,
\begin{align}
|\langle e_y, P^{(n,j)}e_x\rangle| &= \left|\frac{1}{4K}\left(\left(\sum_{m=r(K)+1}^{2K-r(K)-1}+\sum_{m=2K+r(K)+1}^{4K-r(K)-1}\right)(e^{-2\pi ij/(4K)}\bar{\gamma(K)})^m(U_n^\ell)_{xy}\right)\right| \nonumber\\
&\le 10\cdot\frac{2^{-r(K)/2}}{4K},
\end{align}
by the same estimates as before. \qed

\section{Coordinates that fail the pointwise Weyl law}\label{sec:coordfail}

We give a specific example with the doubling map where not all coordinates satisfy the pointwise Weyl law \eqref{eqn:I}.
For $n\in2\N$, let 
\[
U_n=\frac{1}{\sqrt{2}}
\left(
\begin{smallmatrix}1&-1 & & & &&\\
& &1 & -1 &&&\\
&&&&\ddots &&\\
&&&&&1&-1\\
1&1&&&&&\\
&&1&1&&&\\
&&&&\ddots&&\\
&&&&&1&1
\end{smallmatrix}\right).
\]
Letting $P^{I(n)}$ be the spectral projection matrix of $U_n$ onto the arc $I(n)=[-\pi/2,\pi/2]$ on the unit circle, we will show that
\begin{equation}
(P^{I(n)})_{11} \ge 0.89182655+o(1) \ne \frac{1}{2}(1+o(1)).
\end{equation}
Thus for $I(n)=[-\pi/2,\pi/2]$, the sequence of coordinates $(x_n)_n$ with just $x_n=1$ does not satisfy the pointwise Weyl law. Note the coordinate $x=1$ was one of the ``bad'' points that was removed during the proof of the pointwise Weyl law, since it always has the very short periodic loop consisting of just itself.

To approximate $(P^{I(n)})_{11}$ from below, we use the piecewise linear approximation $h_\Delta$ on $\R/(2\pi \Z)$ in Figure~\ref{fig:lin-approx}, defined by
\[
h_\Delta(x)=\begin{cases}
1,&-\frac{\pi}{2}+\Delta\le x\le\frac{\pi}{2}-\Delta\\
\frac{1}{\Delta}\left(x+\frac{\pi}{2}\right),&-\frac{\pi}{2}\le x\le-\frac{\pi}{2}+\Delta\\
-\frac{1}{\Delta}\left(x-\frac{\pi}{2}\right),&\frac{\pi}{2}-\Delta\le x\le\frac{\pi}{2}\\
0,&|x|\ge\frac{\pi}{2}
\end{cases}.
\] 
\begin{figure}[ht]
\centering
\begin{tikzpicture}[scale=.8]
\draw [->] (-3,0)--(3,0);
\draw [line width=1pt] (-3,0)--(-1.5,0)--(-1,1)--(1,1)--(1.5,0)--(3,0);
\draw [dashed] (1,1)--(1,0);
\draw [dashed] (-1,1)--(-1,0);
\draw (1.5,.1)--(1.5,-.1) node[below] {$\frac{\pi}{2}$};
\draw (-1.5,.1)--(-1.5,-.1) node[below] {$-\frac{\pi}{2}$};
\draw (0,.1)--(0,-.1) node[below] {$0$};
\draw [decorate, decoration={brace,raise=7pt,amplitude=4pt}] (-1.5,1)--(-1,1);
\draw [decorate, decoration={brace,raise=7pt,amplitude=4pt}] (1,1)--(1.5,1);
\node[above] at (-1.25,1.4) {$\Delta$};
\node[above] at (1.25,1.4) {$\Delta$};
\draw [decorate, decoration={brace,raise=7pt,amplitude=4pt}] (-.9,1)--(.9,1);
\node[above] at (0,1.4) {$\pi-2\Delta$};
\end{tikzpicture}
\caption{Plot of $h_\Delta$.}\label{fig:lin-approx}
\end{figure}

Since this $I(n)=[-\pi/2,\pi/2]$ is not shrinking, we do not need further smoothness of the approximation, and the piecewise linear $h_\Delta$ has Fourier coefficients that are easy to work with. We only need continuity and absolutely summable Fourier coefficients, so that the Fourier series converges uniformly to $h_\Delta$.
For convenience we also use the same notation $h_\Delta$ or $\Chi_{[-\pi/2,\pi/2]}$ to denote the corresponding function on the unit circle in $\C$ (via $\R/2\pi\Z\ni t\leftrightarrow e^{it}\in\mathbb{S}^1$). Since pointwise $h_\Delta(t)\le \Chi_{[-\pi/2,\pi/2]}(t)$ for any $\Delta\ge0$, then by the spectral theorem $(h_\Delta(U_n))_{xx}\le \Chi_{[-\pi/2,\pi/2]}(U_n)_{xx}=(P^{I(n)})_{xx}$ for any coordinate $x\in\intbrr{n}$.

To approximate $h_\Delta(U_n)$, we compute its Fourier coefficients $(\hat{h}_{\Delta})_j=\frac{1}{2\pi}\int_{-\pi}^\pi h_{\Delta}(x)e^{-ijx}\,dx$,
\begin{align}
(\hat{h}_{\Delta})_j &= \frac{2}{\pi j^2 \Delta}\sin\left(\frac{j(\pi-\Delta)}{2}\right)\sin\left(\frac{j\Delta}{2}\right),\quad j\ne 0\\
(\hat{h}_{\Delta})_0&=\frac{\pi-\Delta}{2\pi}.
\end{align}
Since the Fourier coefficients are absolutely summable, the partial sums $\sum_{j\in\Z}(\hat{h}_{\Delta})_je^{i jx}$ converge uniformly to $h_{\Delta}$, with the $K$th partial sum $S_Kh_{\Delta}(x):=\sum_{|j|\le K}(\hat{h}_{\Delta})_je^{ijx}$ having error bound
\begin{align}\label{eqn:fs-uniform}
\|S_Kh_{\Delta}-h_{\Delta}\|_\infty\le \sum_{j={K+1}}^\infty\frac{4}{\pi j^2\Delta} \le \frac{4}{\pi K\Delta}.
\end{align}
As long as $\Delta\gg K^{-1}$, this is $o(1)$, and then
\begin{align*}
\left|\langle y|S_Kh_\Delta(U_n)-h_\Delta(U_n)|x\rangle\right| &= \left|\sum_{j=1}^n \left(S_Kh_\Delta(e^{i\theta^{(n,j)}})-h_\Delta(e^{i\theta^{(n,j)}})\right)\langle y|\psi^{(n,j)}\rangle\langle\psi^{(n,j)}|x\rangle\right|\\
&\le \|S_K(h_\Delta)-h_\Delta\|_\infty \left(\sum_{j=1}^n|\psi_y^{(n,j)}|^2\right)^{1/2}\left(\sum_{j=1}^n|\psi_x^{(n,j)}|^2\right)^{1/2}\\
&=o(1).
\end{align*}
So then
\[
(P^{I(n)})_{11}\ge (h_\Delta(U_n))_{11} = ((S_Kh_\Delta)U_n)_{11}+o(1).
\]

As usual,  take $K=\lfloor \log_2n\rfloor$, and consider
\begin{align}\label{eqn:fseries11}
(S_Kh_\Delta)(U_n)_{11} &= \frac{\pi-\Delta}{2\pi}+\sum_{j=1}^K \frac{2}{\pi j^2\Delta}\sin\left(\frac{j(\pi-\Delta)}{2}\right)\sin\left(\frac{j\Delta}{2}\right)\left( (U_n)^j_{11}+(U_n)^{-j}_{11}\right).
\end{align}
Take $K^{-1}\ll\Delta\ll K^{-1/2}$, for example $\Delta=K^{-3/4}$. We split up the sum over $j$ in \eqref{eqn:fseries11} into two regions, first from $j=1$ to $\sqrt{K}$, and then from $\sqrt{K}+1$ to $K$. In the first region, $j\Delta\le \sqrt{K}\Delta\ll1$, so we can Taylor expand the sine terms and evaluate the sum. In the second region, the exponential decay from $(U_n)^j_{11}=2^{-j/2}$ (for $1\le j\le\lfloor\log_2 n\rfloor$, there is only the path $1\to1\to\cdots\to1$ that starts and ends at $1$ and has length $j$) will make the sum $o(1)$ as $K\to\infty$.

For $j\Delta\ll1$, we have $\sin(j\Delta/2)=\frac{j\Delta}{2}+\mathcal{O}(j^2\Delta^2)$ and
\[
\sin\left(\frac{j(\pi-\Delta)}{2}\right)=\begin{cases}
-\frac{j\Delta}{2}+\mathcal{O}(j^3\Delta^3),&j=0\mod 4\\
1-\frac{j^2\Delta^2}{4}+\mathcal{O}(j^4\Delta^4),&j=1\mod 4\\
\frac{j\Delta}{2}+\mathcal{O}(j^3\Delta^3),&j=2\mod 4\\
-1+\frac{j^2\Delta^2}{4}+\mathcal{O}(j^4\Delta^4),&j=3\mod 4
\end{cases}.
\]
Thus
\begin{align*}
(S_Kh_\Delta)(U)_{11} &= \frac{1}{2}-\mathcal{O}(\Delta)+\sum_{j=1}^{\sqrt{K}} \frac{2}{\pi j^2\Delta}\sin\left(\frac{j(\pi-\Delta)}{2}\right)\sin\left(\frac{j\Delta}{2}\right)2\cdot 2^{-j/2}+\sum_{j=\sqrt{K}+1}^K\mathcal{O}(2^{-j/2})\\
&=\frac{1}{2}+
\sum_{\substack{j=1\\j\text{ odd}}}^{\sqrt{K}}
\frac{2}{\pi j}\left((-1)^{((j\text{ mod } 4)-1)/2}+\mathcal{O}(j\Delta)\right)\cdot 2^{-j/2}
+\sum_{\substack{j=1\\j\text{ even}}}^{\sqrt{K}}\frac{\mathcal{O}(j^2\Delta^2)}{j^2\Delta}2^{-j/2}
+o(1)\\
&= \bigg(\frac{1}{2}+ \frac{2}{\pi}\sum_{\substack{j=1\\j=1\mod 4}}^{\sqrt{K}} \left(\frac{1}{j2^{j/2}}-\frac{1}{(j+2)2^{(j+2)/2}}\right)\bigg) +o(1)\\ 
&=\bigg(\frac{1}{2}+\frac{1}{\pi}\sum_{\ell=0}^{(\sqrt{K}-1)/4}\frac{5+4\ell}{(1+4\ell)(3+4\ell)2^{(1+4\ell)/2}}\bigg)+o(1).
\end{align*}
Numerically, 
\[
\frac{1}{\pi}\sum_{\ell=0}^{\infty}\frac{5+4\ell}{(1+4\ell)(3+4\ell)2^{(1+4\ell)/2}}\approx 0.39182655,
\]
so that
\begin{equation}
(P^{I(n)})_{11}\ge (S_Kh_\Delta)(U)_{11}+o(1)\ge 0.89182655-o(1) \ne \frac{1}{2}(1+o(1)).
\end{equation}

\begin{rmk}
A similar statement can be shown with phases $e^{i\Phi}U_n$ for the interval $[t_0-\frac{\pi}{2},t_0+\frac{\pi}{2}]$ for a $t_0$ depending on $\Phi$. The only difference in \eqref{eqn:fseries11} is that the $\left((U_n)^j_{11}+(U_n)^{-j}_{11}\right)$ term becomes $\left(e^{-it_0j}(e^{i\Phi}U_n)^j_{11}+e^{it_0j}(e^{i\Phi}U_n)^{-j}_{11}\right)$. For $j\le\lfloor\log_2n\rfloor$, $(e^{i\Phi}U_n)^j_{11}=e^{ij\Phi_1}2^{-j/2}$ and $(e^{i\Phi}U_n)^{-j}_{11}=e^{-ij\Phi_1}2^{-j/2}$, so taking $t_0=\Phi_1$ reduces this Fourier series back to just \eqref{eqn:fseries11}.

\end{rmk}

\appendix

\section{Details for Section~\ref{sec:gaussian}}\label{sec:details}

In this section we provide details concerning Theorem~\ref{thm:complex-projection}.
The following theorem from \cite{Meckes}, also using \cite{ChatterjeeMeckes}, is a quantitative version of the theorem from \cite{DiaconisFreedman}. Theorem~\ref{thm:complex-projection} will follow from this theorem applied to our case with complex projections.

\begin{thm}[Complex version of Theorem 2 in \cite{Meckes}]\label{thm:MeckesC}
Let $\{x_j\}_{j=1}^n$ be deterministic vectors in $\C^d$, normalized so that $\sigma^2=\frac{1}{nd}\sum_{i=1}^n\|x_i\|^2=1$ and suppose
\begin{align}
\frac{1}{n}\sum_{i=1}^n\left|\|x_i\|^2-d\right|&\le A\\
\sup_{\theta\in\mathbb{S}_\C^{d-1}}\frac{1}{n}\sum_{i=1}^n|\langle\theta,x_i\rangle|^2&\le B.\label{eqn:B}
\end{align}
For a point $\theta\in \mathbb{S}_\C^{d-1}\subset\C^d$, define the measure
$\mu^{(n)}_\theta:=\frac{1}{n}\sum_{j=1}^n\delta_{\langle\theta,x_j\rangle}$ on $\C$. There are absolute numerical constants $C,c>0$ so that for $\theta\sim\operatorname{Unif}(\mathbb{S}_{\C^{d-1}})$, 
 any bounded $L$-Lipschitz $f:\C\to\C$, and $\varepsilon>\frac{2L(A+3)}{d-1}$, there is the quantitative bound
\begin{equation}
\mathbb{P}\left[\left|\int f(x)\,d\mu^{(n)}_\theta(x)-\mathbb{E}f( Z)\right|>\varepsilon\right]\le C\exp\Big(-\frac{c\varepsilon^2d}{L^2B}\Big),
\end{equation}
where $Z\sim N_\C(0,1)$. In particular, if $d=d(n)\to\infty$ as $n\to\infty$, and $A=o(d)$ and $B=o(d)$, then $\mu^{(n)}$ converges weakly in probability to $N_\C(0,1)$ as $n\to\infty$.
\end{thm}
\begin{proof}[Proof outline of Theorem~\ref{thm:MeckesC}]
The proof is the same as the real version in \cite{Meckes}, except that the (multi-dimensional) Theorem~\ref{thm:mv} written below from \cite{ChatterjeeMeckes} replaces the single-variable version. 
The proof idea from \cite{Meckes} is to let  $F(\theta):=\frac{1}{n}\sum_{i=1}^nf(\langle\theta,x_i\rangle)$ and write
\begin{align*}
\mathbb{P}\left[\left|F(\theta)-\mathbb{E}f( Z)\right|>\varepsilon\right] &\le \mathbb{P}\left[\left|F(\theta)-\mathbb{E}F(\theta)\right|>\varepsilon-|\mathbb{E}F(\theta)-\mathbb{E}f(Z)|\right].
\end{align*}
Then one uses Theorem~\ref{thm:mv}, a generalization of Stein's method of exchangeable pairs for abstract normal approximation, to bound $|\mathbb{E}F(\theta)-\mathbb{E}f(Z)|$ with $W=\langle\theta,x_I\rangle$ where $I\sim\operatorname{Unif}\intbrr{n}$, and then one can apply Gaussian concentration (Lemma~\ref{lem:complexsphere}, see for example \cite[\S5.4]{conc-ineq}) to $F$ which is $(L\sqrt{B})$-Lipschitz.
\end{proof}

In the following, $\mathcal{L}(X)$ will denote the law of a random variable or vector $X$.
\begin{thm}[Theorem 2.5 for $\C$ in \cite{ChatterjeeMeckes}]\label{thm:mv}
Let $W$ be a $\C$-valued random variable and for each $\varepsilon>0$ let $W_\varepsilon$ be a random variable such that $\mathcal{L}(W)=\mathcal{L}(W_\varepsilon)$, with the property that $\lim_{\varepsilon\to0}W_\varepsilon=W$ almost surely. 
Suppose there is a function $\lambda(\varepsilon)$ and complex random variables $\Gamma,\Lambda$ such that as $\varepsilon\to0$,
\begin{enumerate}[(i)]
\item $\frac{1}{\lambda(\varepsilon)}\mathbb{E}[(W_\varepsilon-W)|W]\xrightarrow{L^1}-W$.
\item $\frac{1}{2\lambda(\varepsilon)}\mathbb{E}[(W_\varepsilon-W|^2|W]\xrightarrow{L^1}1+\mathbb{E}[\Gamma|W]$.
\item $\frac{1}{2\lambda(\varepsilon)}\mathbb{E}[(W_\varepsilon-W)^2|W]\xrightarrow{L^1}\mathbb{E}[\Lambda|W]$.
\item $\frac{1}{\lambda(\varepsilon)}\mathbb{E}|W_\varepsilon-W|^3\to0$.
\end{enumerate}
Then letting $Z\sim N_\C(0,1)$,
\begin{equation}
d_\mathrm{Wass}(W,Z)\le\mathbb{E}|\Gamma|+\mathbb{E}|\Lambda|,
\end{equation}
where $d_\mathrm{Wass}$ is the Wasserstein distance $d_\mathrm{Wass}(W,Z)=\sup_{\|g\|_\mathrm{Lip}\le 1}|\mathbb{E}g(W)-\mathbb{E}g(Z)|$.
\end{thm}

\begin{lem}[Gaussian concentration on the complex sphere]\label{lem:complexsphere}
Let $F:\C^d\to\C$ be $L$-Lipschitz and $\theta\sim\operatorname{Unif}(\mathbb{S}_\C^{d-1})$. Then there are absolute constants $C,c>0$ so that
\begin{equation}
\mathbb{P}[|F(\theta)-\mathbb{E}F(\theta)|\ge t]\le C\exp(-ct^2d/L^2).
\end{equation}
\end{lem}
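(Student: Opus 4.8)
\textbf{Plan for proving Lemma~\ref{lem:complexsphere}.}
The strategy is to reduce the complex statement to the classical Gaussian concentration inequality on the real sphere $\mathbb{S}^{2d-1}\subset\R^{2d}$, which is standard (see e.g. \cite[\S5.1]{conc-ineq} or Lévy's isoperimetric inequality). First I would identify $\C^d$ with $\R^{2d}$ via $z\mapsto(\Re z_1,\Im z_1,\ldots,\Re z_d,\Im z_d)$; under this identification the complex unit sphere $\mathbb{S}_\C^{d-1}=\{z\in\C^d:\sum_j|z_j|^2=1\}$ is exactly the real unit sphere $\mathbb{S}^{2d-1}$, and the uniform measure on one corresponds to the uniform measure on the other (both are the unique rotation-invariant probability measure, and $U(d)\subset O(2d)$ acts transitively). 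A function $F:\C^d\to\C$ that is $L$-Lipschitz with respect to the Euclidean (Hermitian) metric on $\C^d$ becomes, when we split $F=\Re F+i\,\Im F$, a pair of real-valued functions $\Re F,\Im F:\R^{2d}\to\R$ each of which is $L$-Lipschitz, since $|\Re F(z)-\Re F(w)|\le|F(z)-F(w)|\le L|z-w|$ and similarly for $\Im F$.

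Next I would apply the real Gaussian/Lévy concentration inequality on $\mathbb{S}^{2d-1}$ separately to $\Re F$ and $\Im F$: there are absolute constants $C',c'>0$ so that for $\theta\sim\operatorname{Unif}(\mathbb{S}^{2d-1})$,
\begin{equation*}
\mathbb{P}\big[|\Re F(\theta)-\mathbb{E}\Re F(\theta)|\ge s\big]\le C'\exp(-c's^2(2d)/L^2),
\end{equation*}
and the same bound for $\Im F$. (Here one uses $2d-1\ge d$ in the exponent, absorbing the factor into $c'$.) Then a union bound gives
\begin{equation*}
\mathbb{P}\big[|F(\theta)-\mathbb{E}F(\theta)|\ge t\big]\le \mathbb{P}\big[|\Re F-\mathbb{E}\Re F|\ge t/2\big]+\mathbb{P}\big[|\Im F-\mathbb{E}\Im F|\ge t/2\big]\le 2C'\exp(-c't^2d/(4L^2)),
\end{equation*}
using $\mathbb{E}F=\mathbb{E}\Re F+i\,\mathbb{E}\Im F$ and $|F-\mathbb{E}F|\ge t\Rightarrow |\Re F-\mathbb{E}\Re F|\ge t/2$ or $|\Im F-\mathbb{E}\Im F|\ge t/2$. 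Setting $C=2C'$ and $c=c'/4$ yields the claimed inequality.

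There is essentially no serious obstacle here; the only points requiring a line of care are (a) checking that the uniform measures correspond under the real/complex identification, which follows from uniqueness of the invariant probability measure on the sphere, and (b) tracking that the Lipschitz constant is preserved when passing to real and imaginary parts and that the dimension factor $2d$ versus $d$ only changes absolute constants. One could alternatively cite a complex Gaussian concentration statement directly if available, but the reduction above is elementary and self-contained, so that is the route I would take.
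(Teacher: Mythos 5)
Your proposal is correct, and it is the standard reduction to the real sphere $\mathbb{S}^{2d-1}$; the paper does not supply its own proof of this lemma but only cites a concentration-inequality reference, which one would unpack in exactly the way you describe (identify $\C^d\cong\R^{2d}$, note the uniform measures agree by uniqueness of the rotation-invariant measure, split into real and imaginary parts each $L$-Lipschitz, apply Lévy/Gaussian concentration on $\mathbb{S}^{2d-1}$, and union bound). All the small checks you flag — measure correspondence, Lipschitz preservation under taking $\Re$ and $\Im$, and the absorption of the $2d$ versus $d$ and the $t/2$ factors into the absolute constants — are right.
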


\subsection{Proof of Theorem~\ref{thm:complex-projection}}
Let $v_1,\ldots,v_d$ be an orthonormal basis for $V^{(\nu)}$, and let $M_V$ be the $n\times d$ matrix with those vectors as columns. 
Apply Theorem~\ref{thm:MeckesC} to the data set $\sqrt{n}M_V^*e_1,\sqrt{n}M_V^*e_2, \ldots,\sqrt{n}M_V^*e_n$ in $\C^d$, noting that since $P^{(\nu)}=M_VM_V^*$, then
\begin{align*}
\sigma^2=\frac{1}{nd}\sum_{x=1}^nn\|M_V^*e_x\|_2^2=\frac{1}{d}\sum_{x=1}^N\langle e_x|P^{(\nu)}|e_x\rangle=\frac{1}{d}\operatorname{tr}P^{(\nu)}=1.
\end{align*}
We can also take $B=1$ since for any $\theta\in\mathbb{S}_\C^{d-1}$,
\[
\frac{1}{n}\sum_{x=1}^n|\langle\theta,\sqrt{n}M_V^*e_x\rangle|^2=\sum_{x=1}^n\langle M_V\theta,e_x\rangle\langle e_x,M_V\theta\rangle=\|M_V\theta\|_{\C^n}^2=
\|\theta\|_{\C^d}^2.
\]
If $\theta$ is uniform on $\mathbb{S}_\C^{d-1}\subset\C^d$, then $M_V\theta$ is uniform on $\mathbb{S}(V^{(\nu)})$, so $\frac{1}{n}\sum_{j=1}^n\delta_{\langle\theta,\sqrt{n}M_V^*e_x\rangle}\sim\frac{1}{n}\sum_{j=1}^n\delta_{\sqrt{n}\omega_x}$.

\bibliographystyle{siamnodashnocaps}
\bibliography{./intmap_ref.bib}

\end{document}